\newcommand{\LC}[1]{\todo[size=footnotesize,bordercolor=black,color=green!40]{LC: #1}}
\renewenvironment{shaded}{%
  \MakeFramed{\advance\hsize-\width \FrameRestore\FrameRestore}}%
 {\endMakeFramed}
\definecolor{shadecolor}{gray}{0.88}
\newmdtheoremenv[
    backgroundcolor=lightgray!20, 
    linecolor=black,
    innertopmargin=5pt,
    innerbottommargin=5pt,
    roundcorner=5pt,
]{theorem_highlighted}{Theorem}
\newcommand{\reducedplus}{\mathpalette\reduced@plus\relax}
\newcommand{\reduced@plus}[2]{%
  \sbox6{$\m@th#1+$}%
  \sbox8{\scalebox{0.875}{\copy6}}%
  \dimen@=\dimexpr(\wd6-\wd8)/3\relax
  \raisebox{\dimen@}{\box8}%
}
\newcommand{\boxoperation}[2][\mathbin]{%
  #1{\mathpalette\box@operation{#2}}%
}
\newcommand{\box@operation}[2]{%
  \ooalign{$\m@th#1\boxempty$\cr\hidewidth$\m@th#1#2$\hidewidth\cr}%
}
\newcommand{\Eset}{E}
\DeclareSymbolFont{extraup}{U}{zavm}{m}{n}
\DeclareMathSymbol{\varheart}{\mathalpha}{extraup}{86}
\DeclareMathSymbol{\vardiamond}{\mathalpha}{extraup}{87}
\newcommand{\Test}[2]{
\def\temp{#2}\ifx\temp\empty
  \operatorname{Test}_{#1}
\else
  \operatorname{Test}_{#1}^{#2}
\fi
}
\DeclarePairedDelimiter{\ceil}{\lceil}{\rceil}
\DeclarePairedDelimiter{\floor}{\lfloor}{\rfloor}
\newcommand{\dist}{\operatorname{dist}}
\newcommand{\diam}{\operatorname{diam}}
\newcommand{\ival}{\operatorname{ival}}
\newcommand{\isat}{\operatorname{isat}}
\newcommand{\Imm}{\operatorname{Im}}
\newcommand{\Sym}{\operatorname{Sym}}
\newcommand{\suchThat}{\mbox{ s.t. }}
\newcommand{\A}{\mathbf{A}}
\newcommand{\B}{\mathbf{B}}
\newcommand{\GG}{\mathbf{G}}
\newcommand{\K}{\mathbf{K}}
\newcommand{\T}{\mathbf{T}}
\newcommand{\X}{\mathbf{X}}
\newcommand{\Y}{\mathbf{Y}}
\newcommand{\W}{\mathbf{W}}
\newcommand{\N}{\mathbb{N}}
\newcommand{\F}{\mathbb{F}}
\newcommand{\HH}{\mathbb{H}}
\newcommand{\freeQ}{\mathbb{F}_{\Qminion}}
\renewcommand{\vec}[1]{\mathbf{#1}}
\newcommand{\ba}{\vec{a}}
\newcommand{\bb}{\vec{b}}
\newcommand{\bh}{\vec{h}}
\newcommand{\bi}{\vec{i}}
\newcommand{\bs}{\vec{s}}
\newcommand{\bu}{\vec{u}}
\newcommand{\bx}{\vec{x}}
\newcommand{\by}{\vec{y}}
\newcommand{\bw}{\vec{w}}
\newcommand{\sat}{\operatorname{sat}}
\newcommand{\qsat}{\operatorname{qsat}}
\newcommand{\val}{\operatorname{val}}
\newcommand{\qval}{\operatorname{qval}}
\newcommand{\dGames}{\mathbf{G}^{\mathbf{d\leftrightarrow d}}}
\DeclareMathAlphabet{\mathbx}{U}{BOONDOX-ds}{m}{n}
\SetMathAlphabet{\mathbx}{bold}{U}{BOONDOX-ds}{b}{n}
\DeclareMathAlphabet{\mathbbx} {U}{BOONDOX-ds}{b}{n}
\DeclareMathOperator{\tr}{tr}
\DeclareMathOperator{\Span}{span}
\DeclareMathOperator{\CSP}{CSP}
\DeclareMathOperator{\PVM}{PVM}
\DeclareMathOperator{\id}{id}
\DeclareMathOperator{\ar}{ar}
\DeclareMathOperator{\chiq}{\chi_{\operatorname{q}}}
\newcommand{\Qminion}{\ensuremath{{\mathscr{Q}}}}
\newcommand{\bzero}{\mathbf{0}}
\theoremstyle{plain}
\newtheorem{thm}{Theorem}
\newtheorem*{thm*}{Theorem}
\newtheorem{lem}[thm]{Lemma}
\newtheorem*{lem*}{Lemma}
\newtheorem{prop}[thm]{Proposition}
\newtheorem*{prop*}{Proposition}
\newtheorem{cor}[thm]{Corollary}
\newtheorem{conj}[thm]{Conjecture}
\theoremstyle{definition}
\newtheorem{defn}[thm]{Definition}
\newtheorem{rem}[thm]{Remark}
\newtheorem{example}[thm]{Example}
\newtheorem{claim}{Claim}[thm]
\newtheorem{subclaim}{Claim}[claim]
\newtheorem{fact}{Fact}[thm]
\begin{document}

\author{Lorenzo Ciardo\\
University of Oxford\\
\texttt{lorenzo.ciardo@cs.ox.ac.uk}
}

\title{
On the quantum chromatic gap
}
\date{}

\maketitle

\begin{abstract}
\noindent 
The largest known gap between quantum and classical chromatic number of graphs, obtained via quantum protocols for colouring Hadamard graphs based on the Deutsch--Jozsa
algorithm and the quantum Fourier transform, is exponential.
We put forth a quantum pseudo-telepathy version of Khot's $d$-to-$1$ Games Conjecture and prove that, conditional to its validity, the gap is unbounded: There exist graphs whose quantum chromatic number is 3 and whose classical chromatic number is arbitrarily large. 
Furthermore, we show that the existence of a certain form of pseudo-telepathic \textsc{XOR} games would imply the conjecture and, thus, the unboundedness of the quantum chromatic gap. As two technical steps of our proof that might be of independent interest, we establish a quantum adjunction theorem for Pultr functors between categories of  relational structures, and we prove that the Dinur--Khot--Kindler--Minzer--Safra reduction, recently used for proving the $2$-to-$2$ Games Theorem, is quantum complete.
\end{abstract}

\section{Introduction}
Take a graph $\GG$ and $n$ distinct colours. Try to assign a colour to each of the vertices of $\GG$ in a way that no edge is monochromatic. What is the minimum $n$ required for completing the task? The answer to this question is a classic graph-theoretic parameter, studied from the very beginning of the history of graph theory: the chromatic number of $\GG$ (in symbols, $\chi(\GG)$).

Consider now a one-round game involving two cooperating, non-communicating, computationally unbounded players (Alice and Bob) and a verifier. The goal of Alice and Bob is to convince the verifier that $\GG$ can be coloured using $n$ colours. The verifier questions the players by selecting two vertices $v_A$ and $v_B$ of $\GG$ at random and sending $v_A$ to Alice and $v_B$ to Bob. Alice replies with a colour $c_A$, and Bob replies with a colour $c_B$. The game is won if the answers are consistent with each other (i.e., $c_A=c_B$ when $v_A=v_B$) and compatible with a proper colouring of $\GG$ (i.e., $c_A\neq c_B$ when $v_A$ and $v_B$ are adjacent). What is the minimum $n$ that allows Alice and Bob to win the colouring game, no matter which questions the verifier asks? A moment of thought reveals that the answer is again $\chi(\GG)$: If $\GG$ is $n$-colourable (i.e., if $n\geq \chi(\GG)$), a winning strategy for Alice and Bob can be obtained by choosing a colouring before the game start, and always replying according to that colouring. If $\GG$ is not $n$-colourable (i.e., if $n<\chi(\GG)$), any strategy can be broken by some pair of questions chosen by the verifier.
\paragraph{Quantum pseudo-telepathy}
Describing the chromatic number in the language of multiplayer games / multiprover interactive proofs has the advantage of turning the original, purely combinatorial question into a question about information---in particular, about the limits imposed by the physical model we choose for describing information transfer. At a high level, the reason why Alice and Bob cannot find a perfect strategy using fewer colours than $\chi(\GG)$ is that the lack of communication in the game sets strong restrictions on the possible correlations between the players' answers. To make these restrictions explicit, imagine to build an experimental setup for the colouring game.
Before the game starts, Alice and Bob are free to communicate, which means that they have complete access to a shared amount of information, call it $\mathcal{I}$. 
When the game starts, to enforce no communication, we make sure that the distance separating them is big enough (and the questions and replies are sent simultaneously in at least one reference frame). This means that now Alice has only access to one part $\mathcal{I}_A$ of $\mathcal{I}$, and Bob has only access to the other part $\mathcal{I}_B$.
Upon receiving question $v_A$, Alice aims to read her answer $c_A$ from $\mathcal{I}_A$; physically, this can be expressed by Alice \textit{measuring} the state of $\mathcal{I}_A$ via some measurement depending on $v_A$. Bob does the same: He measures $\mathcal{I}_B$ according to the question $v_B$, and sends the measurement outcome $c_B$ to the verifier.
This setup precisely captures the intended rules of the colouring game in a theory of physics that is \textit{realistic} (the outcome of a measurement is completely determined by the description of the measurement and the physical properties of the system being measured) and \textit{local} (Alice's choice of measurement does not impact Bob's outcome, and vice versa).
Yet, if we were to actually implement the experiment, we might be surprised at finding that, for some choices of $\GG$, Alice and Bob are able to convince the verifier that $\GG$ can be coloured with \textit{strictly fewer} colours than $\chi(\GG)$, thus effectively violating the restrictions imposed by local realism. Indeed, by
$(i)$ encoding the shared information $\mathcal{I}$ in a bipartite quantum system, $(ii)$ preparing the system in a suitably entangled state, and $(iii)$ performing on each of the two parts of the system carefully chosen positive-operator valued (or, without loss of generality, projection valued) measurements, the players' answers can exhibit non-classical correlations that allow them to win with certainty the classically unwinnable game. This phenomenon is at the heart of Bell's Theorem~\cite{bell1964einstein} on the incompatibility of quantum theory with local hidden-variable theories, and it became known as \textit{quantum pseudo-telepathy}~\cite{brassard2005quantum_pseudo,brassard1999cost}. In fact, it is an example of a so-called \textit{proof of Bell's Theorem without inequalities} (other famous such examples being the three-player GHZ game~\cite{greenberger1989going,greenberger1990bell} and the two-player Mermin ``magic-square'' game~\cite{mermin1990simple,mermin1993hidden,peres1990incompatible}), in that the best quantum strategy not only outperforms all classical strategies, but it also allows the players to win with certainty---unlike other examples of violations of Bell's inequality such as the CHSH game~\cite{clauser1969proposed}.

\paragraph{Correspondences}
Let $\chiq(\GG)$ be the minimum number $n$ of colours required for two entangled players to win the colouring game on $\GG$ with certainty; in~\cite{CameronMNSW07}, this parameter was named the \textit{quantum chromatic number} of $\GG$. Since quantum strategies are at least as powerful as classical strategies, it is clear that the inequality $\chiq(\GG)\leq\chi(\GG)$ holds for each $\GG$.
The problem of understanding the maximum  possible gap between quantum and classical chromatic number of graphs was already considered in~\cite{BuhrmanCW98,brassard1999cost,de2001quantum,buhrman2001quantum}. 
The first example of a graph for which the inequality is strict was exhibited in~\cite{brassard1999cost}, with the description of a winning quantum strategy for colouring the vertices of certain Hadamard graphs based on the Deutsch--Jozsa
protocol~\cite{deutsch1992rapid} using fewer colours than the classical chromatic number, see also~\cite{BuhrmanCW98}. The result was later extended to all Hadamard graphs in~\cite{AvisHKS06}, via a different protocol based on the quantum Fourier transform~\cite{coppersmith2002approximate,mosca2004exact}. In particular, in combination with the known asymptotic behaviour of the classical chromatic number of Hadamard graphs~\cite{frankl1987forbidden}, those results imply the existence of a class of graphs exhibiting an \underline{exponential} gap between quantum and classical chromatic number; i.e., $\chi=\Omega(2^{\chi_q})$.
Another line of work focussed on producing smaller and smaller examples of graphs for which $\chi_{\operatorname{q}}<\chi$. The number of vertices of the smallest explicit instance of a graph whose quantum and classical chromatic numbers differ was progressively decreased from 32768~\cite{galliard2003impossibility}, to 1609~\cite{AvisHKS06}, to 18~\cite{CameronMNSW07}, to 14~\cite{MancinskaR16_oddities} (see also related work in~\cite{scarpa2011kochen,elphick2019spectral,lalonde2023quantum,godsil2024quantum,abiad2025eigenvalue}).
The perfect quantum strategies for those graphs are mostly based on real
or complex orthogonal representations of graphs, with the notable exception of~\cite{MancinskaR16_oddities}.
Despite extensive work on this problem, it is currently open whether the $\chi_{\operatorname{q}}$ vs. $\chi$ gap can be larger than exponential.


Let us now set aside the question on the quantum chromatic gap for a moment, and turn to a problem from classical complexity theory. Take a graph $\GG$ and $n$ distinct colours. What is the computational complexity of finding an explicit $n$-colouring of $\GG$, assuming that one exists? Unless $n\leq 2$, in which case a simple polynomial-time algorithm works, \textit{$n$-colouring} is a classic NP-hard computational problem, which already appeared
on Karp's original list~\cite{Karp72}. It is then natural to consider the following approximate version of it: Assuming that $\GG$ is $n$-colourable, find an explicit $n'$-colouring, for some $n'>n$. A variant of this problem was already considered in~\cite{GJ76}, where NP-hardness was conjectured to hold for every $3\leq n\leq n'$. Despite decades of attempts, the complexity of the \textit{approximate graph colouring} problem (\textsc{AGC}) is still wide open in general. For example, the case
$n=3$, $n'=4$ was only proved to be NP-hard in~\cite{khanna1993hardness_conference}, see also~\cite{GK04} for a simpler proof. More generally,~\cite{KhannaLS00} showed hardness of the case
$n'=n+2\floor{n/3}-1$. This was improved to $n'=2n-2$ in~\cite{BrakensiekG16}, and recently to $n'=2n-1$ in~\cite{BBKO21}. In particular, this last result implies
hardness of the case $n=3$, $n'=5$---while even the complexity of the case $n=3$, $n'=6$ is
still open. The asymptotic state-of-the-art hardness is \underline{exponential}: Building on~\cite{Khot01,Huang13}, NP-hardness of the case $n'={n\choose\floor{n/2}}-1$ was established in~\cite{KOWZ22}.

\paragraph{CSP approximation}

Graph colouring is an example of \textit{constraint satisfaction problem} (CSP)---i.e., the problem of assigning one of a given set of labels (for example, colours) to each of a given set of variables (for example, vertices of a graph) in a way that a list of constraint predicates (for example, ``adjacent vertices should be assigned distinct colours'') is satisfied. Other typical examples of CSPs are Boolean satisfiability problems, graph and hypergraph homomorphism problems, and linear equations. The complexity landscape of the \textit{exact} version of CSPs has been explored, at this level of generality, since the foundational work~\cite{Feder98:monotone} (while specific fragments have been investigated since the dawn of complexity theory, for example~\cite{Karp72, Schaefer78:stoc,HellN90}). In particular, a line of work initiated in~\cite{Jeavons98:algebraic,Jeavons97:closure} investigated the complexity of CSPs in relation to \textit{polymorphisms}---invariance properties of the algebraic structures encoding the CSP constraints. This approach eventually culminated with a proof of the CSP Dichotomy Theorem~\cite{Bulatov17:focs,Zhuk17_FOCS,Zhuk20:jacm}, establishing a P vs. NP-hard dichotomy for the whole class of finite-domain CSPs.

In the \textit{approximation} regime, the complexity of CSPs is far less understood than in the exact regime. Many of the known approximation hardness results were derived from the \textit{Probabilistically Checkable Proofs (PCP)} Theorem~\cite{Arora98:jacm-proof,Arora98:jacm-probabilistic,feige1996interactive,Dinur07:jacm}.
In particular, H\aa stad derived from the PCP Theorem a celebrated result on the \textit{hardness of approximating linear equations}~\cite{haastad2001some}: For every constant $\epsilon>0$, it is NP-hard to find an assignment for a system of Boolean equations having, say, three variables per equation (i.e., an instance of \textsc{3XOR}) satisfying $\frac{1}{2}+\epsilon$ fraction of the equations, even when the system is promised to satisfy $1-\epsilon$ fraction of the equations. This inapproximability gap is optimal: On the completeness side, increasing $1-\epsilon$ to $1$ (i.e., requiring perfect completeness) would make the problem tractable in polynomial time via Gaussian elimination; on the soundness side, assigning a random value to each variable satisfies half of the equations in expectation and, thus, gives a $\frac{1}{2}$-approximation algorithm.

While the PCP Theorem provided optimal inapproximability thresholds for various well-studied problems (in the context of CSPs and beyond), including 
$\textsc{Set Cover}$~\cite{feige1998threshold}, $\textsc{3SAT}$~\cite{haastad2001some}, $\textsc{Clique}$~\cite{hastad1996clique}, and the aforementioned \textsc{3XOR}, for many other fundamental computational problems the gap between the best known tractability factor and the
best known hardness factor could not be decreased to zero. This was the main motivation behind Khot's introduction of the
\textit{Unique Games Conjecture (UGC)} ~\cite{Khot02stoc}, postulating that  a special type of \textit{label cover} CSPs having $1$-to-$1$ (i.e., bijective) constraints are hard to approximate. By exhibiting suitable reductions from $\textsc{Unique Games}$, the precise threshold of approximation hardness for a variety of flagship computational problems in theoretical computer science was found conditional to the truth of the UGC, primary examples being
$\textsc{Max-Cut}$~\cite{khot2007optimal},
$\textsc{Vertex Cover}$~\cite{khot2008vertex}, and $\textsc{Max Acyclic Subgraph}$~\cite{guruswami2008beating}.
The inherent imperfect
completeness of the UGC makes it unsuitable as a basis for conditional hardness results concerning \textsc{AGC}. On the other hand, the $d$-to-$1$ version of the conjecture---also introduced in~\cite{Khot02stoc}---has been extensively investigated as a source of hardness for \textsc{AGC}. In particular,~\cite{Dinur09:sicomp} showed hardness of $\mathcal{O}(1)$-colouring a $4$-colourable graph under the $2$-to-$1$ Conjecture.
In addition, hardness in the $3$ vs. $\mathcal{O}(1)$ regime and in a $4$ vs. superconstant regime was obtained in~\cite{Dinur09:sicomp} and~\cite{dinur2010conditional}, respectively, conditional to certain non-standard variants of the conjecture.
Later,~\cite{GS20:icalp} proved hardness of $3$ vs. $\mathcal{O}(1)$ colouring under the $d$-to-$1$ Conjecture for any fixed $d$. 

Recently, a substantial step towards the proof of the UGC was done in the sequence of works~\cite{Khot17:stoc-independent,Khot18:focs-pseudorandom,Dinur18:stoc-non-optimality,Dinur18:stoc-towards}. Using H\aa stad's inapproximability of \textsc{3XOR} as the \textit{source of hardness}, the $2$-to-$2$ Conjecture (with imperfect completeness) was proved, which immediately yields ``half'' of the UGC: Given a $(\frac{1}{2}-\epsilon)$-satisfiable instance of $\textsc{Unique Games}$, it is NP-hard to find an
assignment satisfying an $\epsilon$-fraction of the constraints. The result is obtained via an advanced analysis of the expansion properties of Grassman graphs arising from a reduction from \textsc{3XOR}.
While the imperfect completeness of the result does not allow deducing from it hardness of \textsc{AGC} via~\cite{Dinur09:sicomp,GS20:icalp}, it yields hardness for a variant of \textsc{AGC} where one aims to ``almost colour'' (i.e., to find a proper colouring for a large subgraph of the instance graph) almost 3-colourable graphs~\cite{HechtMS23}.

\paragraph{Contributions}
We prove that \textit{the gap between quantum and classical chromatic number of graphs is unbounded, conditional to a hypothesis on the pseudo-telepathy of $d$-to-$1$ label cover and \textsc{3XOR} games}. Our result is obtained by translating certain machinery developed in the context of CSP approximation hardness into the setting of quantum pseudo-telepathy.  
For the first part of our result, we consider known pseudo-telepathic non-local games encoding $d$-to-$1$ CSPs, and we develop a framework for using them as a  \textit{source of pseudo-telepathy} for different games---in particular, for the colouring game. To that end, we prove a quantum version of adjunction for functors between relational structures known as \textit{Pultr functors}~\cite{pultr_adjoints}. Via this adjunction, we obtain soundness and completeness of a large class of reductions used in the context of CSP approximation---including all gadget reductions as well as reductions that are provably not encoded by gadgets.
In this way, we derive an optimal $3$ vs. $\mathcal{O}(1)$ quantum chromatic gap conditional to a \textit{strengthened} version of pseudo-telepathy for $d$-to-$1$ games, involving measurement scenarios that are \textit{locally compatible}, meaning that the measurements corresponding to bounded-size subinstances of the CSP can be performed simultaneously.
The second part of the result gives a possible avenue for proving strong pseudo-telepathy of $2$-to-$1$ games (and, thus, unbounded quantum chromatic gap) by amplifying the pseudo-telepathy of \textsc{3XOR} games~\cite{mermin1990simple,mermin1993hidden,peres1990incompatible}. To do that, we prove that the reduction constructed in~\cite{Khot17:stoc-independent,Khot18:focs-pseudorandom,Dinur18:stoc-non-optimality,Dinur18:stoc-towards} to establish the $2$-to-$2$ Games Theorem is quantum complete. While in the classical approximation-hardness setting the result holds with \textit{imperfect} completeness (as the exact version of \textsc{3XOR}, the source of hardness, is solvable in polynomial time), in the quantum pseudo-telepathy setting \textit{perfect} completeness is preserved.

\paragraph{Related work}

It was shown in~\cite{ciardo_quantum_minion} that the occurrence of pseudo-telepathy in non-local games is governed by the polymorphisms of the corresponding CSPs, thereby connecting pseudo-telepathy to \textit{exact} CSP solvability. The current work extends the connection to the \textit{approximation} regime.
%
Several recent works have made use of classical CSP reductions to investigate
the computational complexity of the quantum version of classes of CSPs, see e.g.~\cite{culf2024approximation,culf2024re,mousavi2025quantum}.
In particular, a reduction from \textsc{$3$SAT} to $3$-colouring preserving quantum homomorphisms was considered in~\cite{fukawa2011quantum,ji2013binary}. This was recently generalised in~\cite{harris2024universality}, which described a reduction from arbitrary non-local games to the quantum $3$-colouring game. 
These reductions, based on so-called commutativity gadgets, are not sound in the gap regime: Starting from a highly unsatisfiable instance, they do not guarantee high classical chromatic number of the output graph, and thus cannot be used to produce graphs exhibiting a large $\chiq$ vs. $\chi$ gap. 
It was proved in~\cite{AtseriasKS19} that gadget reductions for Boolean CSPs preserve the existence of perfect unitary operator assignments. The result was recently extended to the non-Boolean case in~\cite{bulatov2024satisfiabilityV3}, where it is mentioned that it does not apply to quantum CSP assignments, as considered in~\cite{MancinskaR16,ciardo_quantum_minion} (as well as~\cite{abramsky2017quantum} and the current work), since the latter are defined in terms of idempotent operators (projectors). 
It was recently proved in~\cite{simul_banakh} that the ratio between classical and quantum chromatic number of a graph is upper bounded by a function that depends only on the dimension of the Hilbert space describing the shared quantum system used in the quantum strategy.
Non-local games exhibiting a large separation between classical and quantum value (often called \textit{strong violations of Bell's inequality} in the literature) have been investigated in quantum physics, due to their applicability to demonstrate the presence of quantum entanglement, see e.g.~\cite{werner2001bell}. 
The existence of infinite families of games whose classical value approaches $0$ despite their quantum value being $1$ follows from~\cite{bell1964einstein,raz1995parallel}. Instances of unique (i.e., $1$-to-$1$) games with quantum value arbitrarily close to $1$ and classical value arbitrarily close
to $0$ can be obtained from~\cite{kempe2010unique} in combination with the SDP integrality gaps from~\cite{khot2015unique}. A weaker violation can be similarly obtained for $d$-to-$d$ games via~\cite[Theorem~4.8]{kempe2010unique}, see also~\cite{dinur2015parallel}.

\section{Overview of results and techniques}
In this section, we discuss the main results of this paper and highlight the ideas needed for their proof. 
Most of the concepts appearing in this section are described here at the intuitive level.
The formal definitions, as well as the complete proofs of the results, are given in the subsequent sections.
\subsection{The source of pseudo-telepathy}
\label{subsec_classical_quantum_CSPs}
Traditionally, several hardness of approximation results have been established via the proof composition paradigm introduced by
Arora--Safra~\cite{Arora98:jacm-probabilistic}. At a high level, this consists in composing an
``outer verifier''---based on a prototypical NP-hard computational problem such as $\textsc{Gap-3SAT}$ or $\textsc{Gap-3XOR}$, whose existence is derived from the PCP Theorem---with an ``inner verifier''---often based on modifications of the long code,
allowing to transfer approximation hardness through a sequence of transformations of the initial computational problem.
Our first step is to build a suitable source from which we can transfer pseudo-telepathy up to the colouring game. In analogy to the approximation setting, we choose a particular type of constraint satisfaction problem known as $d$-to-$1$ label cover as the source of pseudo-telepathy.

\paragraph{Classical setting}
An instance $\Phi$ of constraint satisfaction problem (CSP) consists of a set of variables, a set of local constraints on tuples of variables, and a set of weights over the constraints. The task is to assign to each variable a value from some fixed alphabet, with the goal of maximising the total weight of satisfied constraints.
The maximum weight of simultaneously satisfiable constraints is the classical value $\sat(\Phi)$ of $\Phi$. We say that $\Phi$ is a $d$-to-$1$ label cover instance for some $d\in\N$ if the constraints $(i)$ are binary, $(ii)$ form a bipartite graph, and $(iii)$ are $d$-to-$1$, in the sense that, given a constraint on variables $(x,y)$, for each labelling of $x$ there is a unique admitted labelling of $y$ and for each labelling of $y$ there are $d$ admitted labellings of $x$.

Checking whether a $d$-to-$1$ instance $\Phi$ admits a perfect classical assignment (i.e., checking whether $\sat(\Phi)=1$) is easily seen to be NP-hard unless $d=1$ (in which case, the instance is known as a unique game).
Khot famously conjectured that the soundness can be reduced arbitrarily while preserving NP-hardness. 

\begin{conj}[$d$-to-$1$ Conjecture, \cite{Khot02stoc}]
\label{conj_d_to_1_khot}
Fix $d\geq 2$. For each $\epsilon>0$ there exists an integer $n$ such that, for a $d$-to-$1$ instance $\Phi$ on alphabet size $n$, it is NP-hard to decide whether $\sat(\Phi)=1$ or $\sat(\Phi)<\epsilon$.
\end{conj}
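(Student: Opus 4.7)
The plan is to follow the blueprint of the recent proof of the $2$-to-$2$ Games Theorem by Khot--Minzer--Safra and Dinur--Khot--Kindler--Minzer--Safra~\cite{Khot17:stoc-independent,Khot18:focs-pseudorandom,Dinur18:stoc-non-optimality,Dinur18:stoc-towards}: start from a constraint satisfaction problem that is $\NP$-hard in the gap regime, compose it with an inner verifier encoded by low-dimensional subspaces of $\F_2^n$, and control soundness through a spectral/expansion analysis of Grassmann graphs. Since Conjecture~\ref{conj_d_to_1_khot} is a central open problem in approximation hardness, any such sketch should be viewed as a research programme rather than a routine argument.

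Concretely, I would first fix $d\geq 2$ and take as outer verifier a Smooth Label Cover instance with smoothness parameters tuned to $d$ and $\epsilon$. I would then design an inner verifier whose tests, when passed with probability at least $\epsilon$, force agreement between a pair of Grassmann-graph vertices in a way that can be decoded into a $d$-to-$1$ labelling of the outer instance; the subspace dimension is chosen so that the resulting constraints have alphabet size $n=n(d,\epsilon)$ and exhibit the desired $d$-to-$1$ structure. The soundness analysis would mirror the DKKMS argument, adapting the expansion theorems for ``pseudorandom'' subsets of the Grassmann scheme to the $d$-to-$1$ regime, so that a labelling of value $\geq \epsilon$ can be rounded to a non-trivial labelling of the outer verifier.

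The decisive obstacle is \emph{perfect completeness}. The $2$-to-$2$ theorem uses H\aa stad's \textsc{Gap-3XOR} as its source of hardness, which structurally forfeits perfect completeness because exact \textsc{3XOR} lies in $\operatorname{P}$ via Gaussian elimination. Conjecture~\ref{conj_d_to_1_khot} instead demands $\sat(\Phi)=1$ on the completeness side, so a different source of hardness---for instance, Smooth Label Cover with perfect completeness, or an $\NP$-hard Boolean CSP such as \textsc{3SAT}---must be plugged in, and both the inner verifier and the Fourier/Grassmann analysis must be redesigned so that perfect completeness survives composition. Overcoming this barrier has resisted intensive effort for two decades and appears to require a genuinely new ingredient; indeed, it is precisely this obstruction that motivates the paper to introduce a quantum pseudo-telepathic surrogate of Conjecture~\ref{conj_d_to_1_khot}, which suffices to derive an unbounded quantum chromatic gap without first resolving the classical question.
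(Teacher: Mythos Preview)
The statement you were asked to address is a \emph{conjecture}, not a theorem: it is Khot's $d$-to-$1$ Conjecture~\cite{Khot02stoc}, cited in the paper as Conjecture~\ref{conj_d_to_1_khot} and used solely as a hypothesis motivating the quantum analogue (Conjecture~\ref{conj_d_to_1_khot_quantum}). The paper offers no proof of it, and none is expected---it remains one of the central open problems in hardness of approximation.

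Your proposal correctly recognises this. You do not claim a proof; you sketch a research programme modelled on the DKKMS framework, identify the perfect-completeness barrier as the decisive obstruction (exact \textsc{3XOR} is in $\operatorname{P}$, so H\aa stad's hardness result cannot serve as a source for a perfectly-complete reduction), and observe that the paper's move to a quantum pseudo-telepathy hypothesis is precisely a way to sidestep that barrier. This reading is accurate and aligns with the paper's own motivation in Sections~\ref{subsec_classical_quantum_CSPs} and~\ref{subsec_quantum_chromatic_gap_3xor}. There is nothing to compare against in the paper, and your assessment that a genuine proof would require new ideas beyond the current Grassmann-graph machinery is the consensus view.
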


As mentioned in the Introduction, the $d$-to-$1$ Conjecture---as well as the Unique Games Conjecture, corresponding to the case $d=1$ with imperfect completeness---have been used in the last two decades as the source of several optimal inapproximability results for CSPs.

\paragraph{Quantum setting}
We now introduce quantum assignments in the picture. 
Just like the colouring game discussed in the Introduction, any CSP instance $\Phi$ can be expressed as a two-player one-round game involving non-communicating, cooperating, computationally unbounded players and a verifier. The existence of a perfect classical strategy for the game corresponds to $\Phi$ being perfectly satisfiable. However, when the players are allowed to share a bipartite quantum system as part of the strategy and base their replies on measurements of their parts of the system, 
the answers can exhibit stronger-than-classical correlations that might give rise to perfect strategies even when $\sat(\Phi)<1$---a phenomenon known as pseudo-telepathy.

In particular, the non-local game corresponding to $d$-to-$1$ instances is pseudo-telepathic; in fact, the same holds for the game associated with any NP-hard CSP, as was shown in~\cite{ciardo_quantum_minion}.
In analogy with Khot's Conjecture~\ref{conj_d_to_1_khot}, we conjecture the existence of $d$-to-$1$ games that can be won with certainty by quantum players, but for which the best winning probability for classical players is arbitrarily close to $0$. 
Furthermore, we conjecture that the quantum strategy can be taken to be \textit{locally compatible}, in a sense that we now discuss.

As formally described in Section~\ref{sec_classical_quantum_CSPs}, a quantum strategy corresponds to a collection of measurements indexed by the variables of the instance $\Phi$. Imagine, following~\cite[\S.7]{helton2017algebras}, that each variable is a laboratory where the players conduct experiments upon receiving their questions from the verifier; the outcomes of the experiments are the answers sent to the verifier. In quantum mechanics, commuting observables can be simultaneously measured. Hence, if two laboratories are sufficiently close in the metric associated with the instance (say, their distance in the Gaifman graph of $\Phi$ is at most $k$), we may require that a joint experiment could be conducted in the two laboratories---and, thus, that the corresponding measurements commute. This gives rise to a notion of \textit{$k$-compatible} quantum strategies,\footnote{This should not be confused with quantum versions of spoiler-duplicator tests
for CSPs (which capture the bounded-width or local-consistency algorithm in the classical setting), as recently introduced in~\cite{thesis_amin_karamlou}.} whose projectors                are simultaneously diagonalisable on subinstances of $\Phi$ of diameter $k$. In particular, for $k=0$, this notion does not enforce any commutativity requirement, as is commonly done in the context of binary games, see for example~\cite{MancinskaR16,kempe2010unique}. The case $k=1$ corresponds to enforcing commutativity for the measurements associated with variables lying in a common constraint, which is typically required for games having non-binary constraints~\cite{abramsky2017quantum}. The case $k=2$ has appeared in~\cite{mousavi2025quantum}, where commutativity between measurements associated with paths of length $2$ was required to derive completeness of a reduction between the quantum versions of $\textsc{Unique Games}$ and $\textsc{MaxCut}$ (such strategies were called \textit{weak quantum assignments} therein, see~\cite[Definition~22]{mousavi2025quantum}). At the other extreme, when $k$ is the diameter of the Gaifman graph associated with $\Phi$, the requirement corresponds to simultaneous diagonalisability of all measurements and, thus, it implies a collapse to classical strategies.
We propose the following pseudo-telepathy version of Conjecture~\ref{conj_d_to_1_khot}.

\begin{shaded}
\vspace{-.4cm}
\begin{conj}
\label{conj_d_to_1_khot_quantum}
Fix $d\geq 2$. For each $\epsilon>0$ and each $k\in\N$ there exists a $d$-to-$1$ instance $\Phi$ (on some alphabet size $n$) such that
$\Phi$ admits a perfect $k$-compatible quantum assignment but $\sat(\Phi)<\epsilon$.
\end{conj}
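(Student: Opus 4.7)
The plan is to follow the two-stage blueprint that the introduction already outlines: use pseudo-telepathic \textsc{3XOR} games as the source of perfect quantum (yet classically bad) behaviour, and then push that behaviour through a completeness-preserving reduction into the $d$-to-$1$ label cover world. Concretely, I would fix a family of \textsc{3XOR} instances $\Psi_\delta$ with $\sat(\Psi_\delta)<\frac12+\delta$ (\`a la H\aa stad) that simultaneously admit a perfect quantum strategy in the spirit of the Mermin--Peres magic square; the existence of such instances is precisely the ``certain form of pseudo-telepathic \textsc{XOR} games'' mentioned in the abstract, and it is what I would assume (or isolate as a separate hypothesis) before invoking Conjecture~\ref{conj_d_to_1_khot_quantum}.

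Next, I would feed these instances into the Dinur--Khot--Kindler--Minzer--Safra reduction that underlies the $2$-to-$2$ Games Theorem, and prove that this reduction is quantum complete with perfect completeness. Classically one loses perfect completeness because \textsc{3XOR} is polynomial-time solvable; but in the pseudo-telepathy regime the source instance is already unsatisfiable classically, so the perfect quantum strategy of $\Psi_\delta$ should propagate through the Grassmann-graph encoding. The key technical step is to turn each \textsc{3XOR} projector family into a family of projectors on the output $2$-to-$2$ (respectively $2$-to-$1$) instance, by pulling back along the folded long code / Grassmann test: one needs to show that the commutativity and orthogonality relations required at each inner vertex follow from those of the outer \textsc{3XOR} measurements. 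At this point, applying the quantum adjunction theorem for Pultr functors (the other technical contribution announced in the abstract) to the gadgetised reduction yields soundness and completeness in one stroke.

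To go from $2$-to-$1$ to $d$-to-$1$ for arbitrary $d\geq 2$, I would compose the previous reduction with a standard padding/parallel-repetition-style transformation that preserves perfect quantum completeness while only amplifying (classical) soundness; since the quantum strategy is a tensor/product of projector families, perfect winning survives, while the classical value can be driven below any $\epsilon$ by tuning the soundness parameters of the outer \textsc{3XOR} instance and the number of repetitions. Finally, the $k$-compatibility requirement needs to be tracked at each step: I would prove that the constructed projectors are simultaneously diagonalisable on every radius-$k$ ball of the Gaifman graph of $\Phi$, by verifying that any two variables within Gaifman-distance $k$ in the output instance correspond, under the reduction, to observables built from a common commuting family in the source \textsc{3XOR} strategy.

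The hard part, by a wide margin, will be the quantum completeness of the DKKMS reduction together with the preservation of $k$-compatibility. The classical proof relies on intricate pseudorandomness and Grassmann-expansion arguments that do not directly interact with operator algebras, so one has to re-engineer each gadget (long code, Grassmann test, linearity test) into an operator-valued statement: replacing ``most assignments satisfy'' by ``there exists a family of commuting projectors whose joint spectrum is supported on satisfying assignments''. Ensuring that commutativity is not only pairwise within a constraint but extends to all measurements within diameter $k$ is delicate, because the DKKMS reduction glues together many local tests whose associated measurements need not a priori commute globally; controlling this will likely require a careful choice of inner-verifier encoding so that the resulting projectors factor through a single joint measurement on each bounded-diameter subinstance.
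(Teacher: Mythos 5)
The statement you are trying to prove is a \emph{conjecture}: the paper does not prove it, it proposes it and derives consequences from it. Your proposal does not prove it either, and it could not, because its first step is itself an open hypothesis. You assume the existence of \textsc{3XOR} instances that are simultaneously highly unsatisfiable classically and admit perfect quantum strategies for the \emph{consistent $n$-fold repetition} $\mathcal{G}(S,n)$ for all $n$. This is exactly the hypothesis of Theorem~\ref{thm_equations_to_colouring}, which the paper is careful to flag as known only for $n=1$ (Mermin--Peres). Moreover, H\aa stad-style instances with $\sat<\tfrac12+\delta$ are a red herring here: nothing is known about their quantum value, and the Mermin--Peres system has classical value close to $1$, not close to $\tfrac12$. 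So at best your argument is a conditional reduction, matching what the paper already does in Section~\ref{sec_quantum_chromatic_gap_3xor} and Remark~\ref{rem_proving_conj}, not a proof of the conjecture.

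Even as a conditional argument, two steps would fail. First, your plan to ``track $k$-compatibility at each step'' and obtain arbitrary $k$ cannot work from a \textsc{3XOR} source: as the paper observes in Section~\ref{subsec_quantum_chromatic_gap_3xor}, demanding compatibility of the $\mathcal{G}(S,n)$ strategy for any parameter larger than $1$ forces \emph{all} measurements to commute and collapses the strategy to a classical one; this is precisely why the 3XOR route only yields the $d=2$, $k=1$ case of the conjecture and only a $4$ vs.\ $\mathcal{O}(1)$ gap. Second, your ``padding/parallel-repetition-style transformation'' to pass from $2$-to-$1$ to $d$-to-$1$ while driving the classical value below $\epsilon$ is exactly the approach the paper rules out: parallel repetition of many-to-one games produces $d$-to-$1$ instances with $d$ growing as a function of the soundness parameter, whereas the conjecture requires $d$ fixed while $\epsilon\to 0$. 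Amplifying soundness at fixed $d$ is the entire difficulty, and it is why the paper routes through the DKKMS Grassmann construction rather than repetition.
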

\vspace{-.4cm}
\end{shaded}

\subsection{Quantum adjunction of Pultr functors}
\label{subsec_pultr_functors}
Having fixed the outer verifier, the next step is to see how to transfer pseudo-telepathy up to the colouring game described in the Introduction. To that end, we make use of categorical operators known as \textit{Pultr functors} between categories of relational structures~\cite{pultr_adjoints}. These are pairs of transformations $(\Lambda,\Gamma)$ between relational structures on possibly different signatures, parameterised by a fixed \textit{Pultr template} (see the formal definitions in Section~\ref{sec_pultr_functors}). The functor $\Lambda$ (known as the \textit{left} Pultr functor) acts on a structure $\X$ by replacing its vertices with copies of the structures in the Pultr template, and then identifying some of the vertices. The functor $\Gamma$ (the \textit{central} Pultr functor) acts on a structure $\Y$ by replacing its vertices with \textit{homomorphisms} between the structures in the Pultr template and $\Y$.

\begin{example}
\label{ex_some_pultr_functors}
To illustrate the type of combinatorial operations that can be expressed as Pultr functors at the intuitive level, we now describe two examples of pairs $(\Lambda,\Gamma)$ of Pultr functors, both defined on the category of directed graphs (i.e., relational structures having a unique, binary relation).
Fix a digraph $\GG$. The first pair $(\Lambda',\Gamma')$ of Pultr functors is defined as follows: For given digraphs $\X$ and $\Y$, $\Lambda'\X$ is the direct product $\X\times\GG$, while $\Gamma'\Y$ is the exponential digraph $\Y^{\GG}$ (for the definitions, we refer the reader to~\cite[\S2.4]{hell2004graphs}).
The second pair $(\Lambda'',\Gamma'')$ of Pultr functors is defined as follows: For given digraphs $\X$ and $\Y$, $\Lambda''\X$ is the digraph obtained by replacing each vertex $x$ of $\X$ with a directed edge $(x^{(1)},x^{(2)})$, and then identifying $x^{(2)}$ with $y^{(1)}$ for each edge $(x,y)$ of $\X$; $\Gamma''\Y$ is the digraph whose vertex set is the edge set of $\Y$, and whose edge set contains all pairs of edges of the form $((x,y),(y,z))$.
%
    %
It is not hard to check that both pairs of functors satisfy the relation $\Lambda\X\to\Y$ $\Leftrightarrow$ $\X\to\Gamma\Y$ for each pair $\X,\Y$ of digraphs. As we shall see, this holds in general for every pair of Pultr functors. 
\end{example}

The benefit of considering such an abstract type of transformations is that they are general enough to capture many classes of reductions used in the theory of CSPs. In particular, \textit{gadget reductions} can be viewed as left Pultr functors for specific Pultr templates. 
Crucially, various CSP reductions that are \textit{not} gadget reductions are also captured via Pultr functors, a primary example being the \textit{line-digraph} reduction---a standard graph-theoretic operation~\cite{harary1960some} corresponding to the functor $\Gamma''$ in Example~\ref{ex_some_pultr_functors}.
The reason why Pultr functors yield reductions between classical CSPs is that they satisfying a (thin) adjunction property. 
In the next statement, $\rho$ and $\tau$ are the relational signatures associated with the Pultr template, while the notation $\A\to\B$ denotes the existence of a \textit{homomorphism} between $\A$ and $\B$ (which, in this case, need to have the same signature). 
\begin{thm}[\cite{pultr_adjoints}]
\label{thm_pultr_adjoints}
    Let $\mathfrak{T}$ be a $(\rho,\tau)$-Pultr template. For any $\tau$-structure $\X$ and any $\rho$-structure $\Y$ it holds that
$\Lambda\X\to\Y$ if and only if $\X\to\Gamma\Y$.
\end{thm}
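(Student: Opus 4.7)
The plan is to prove the two directions of the equivalence by explicitly constructing the witnessing homomorphism in each direction, exploiting the fact that a Pultr template consists of a base $\rho$-structure $\mathbf{P}$ and, for every $R\in\tau$ of arity $k$, a $\rho$-structure $\mathbf{Q}_R$ together with ``root'' homomorphisms $\epsilon_1^R,\ldots,\epsilon_k^R\colon \mathbf{P}\to\mathbf{Q}_R$. The functor $\Lambda$ glues copies of $\mathbf{P}$ (one for each vertex of $\X$) via copies of $\mathbf{Q}_R$ (one for each $R$-tuple of $\X$) along the root maps, yielding canonical inclusions $\iota_x\colon \mathbf{P}\to\Lambda\X$ for every vertex $x$ and $\iota_{\bx,R}\colon \mathbf{Q}_R\to\Lambda\X$ for every tuple $\bx\in R^\X$, satisfying $\iota_{\bx,R}\circ\epsilon_i^R=\iota_{x_i}$. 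The functor $\Gamma$, on the other hand, has vertex set $\Hom(\mathbf{P},\Y)$ and places a tuple $(f_1,\ldots,f_k)$ in $R^{\Gamma\Y}$ precisely when there exists $g\colon \mathbf{Q}_R\to\Y$ with $g\circ\epsilon_i^R=f_i$ for all $i$. Both directions will follow by unrolling these definitions.

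For the forward direction, I will assume a homomorphism $h\colon \Lambda\X\to\Y$ and define $h'\colon \X\to\Gamma\Y$ by $h'(x)\mathrel{:=} h\circ\iota_x$, which is a homomorphism $\mathbf{P}\to\Y$, hence a vertex of $\Gamma\Y$. To verify that $h'$ preserves each $R\in\tau$, I take $\bx=(x_1,\ldots,x_k)\in R^\X$ and observe that the map $g\mathrel{:=} h\circ\iota_{\bx,R}\colon \mathbf{Q}_R\to\Y$ satisfies $g\circ\epsilon_i^R=h\circ\iota_{\bx,R}\circ\epsilon_i^R=h\circ\iota_{x_i}=h'(x_i)$, which by the definition of $R^{\Gamma\Y}$ witnesses $(h'(x_1),\ldots,h'(x_k))\in R^{\Gamma\Y}$.

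For the reverse direction, I will assume a homomorphism $h'\colon \X\to\Gamma\Y$ and construct $h\colon \Lambda\X\to\Y$ by piecing together local data. For each vertex $x$ of $\X$ I set $h$ on the copy $\iota_x(\mathbf{P})$ to coincide with $h'(x)$; for each $R$-tuple $\bx\in R^\X$ I pick a witnessing homomorphism $g_{\bx,R}\colon \mathbf{Q}_R\to\Y$ (whose existence is supplied by $h'(\bx)\in R^{\Gamma\Y}$) and set $h$ on $\iota_{\bx,R}(\mathbf{Q}_R)$ to $g_{\bx,R}$. The assignment is well defined on the identifications produced by $\Lambda$ precisely because $g_{\bx,R}\circ\epsilon_i^R=h'(x_i)$ matches the value of $h$ prescribed on $\iota_{x_i}(\mathbf{P})$; and $h$ is a homomorphism because every tuple of $\Lambda\X$ sits in either a copy of $\mathbf{P}$ or a copy of some $\mathbf{Q}_R$, on each of which $h$ restricts to a homomorphism into $\Y$ by construction.

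The main obstacle is bookkeeping in this second direction: the construction of $\Lambda\X$ formally proceeds by quotienting a disjoint union of copies of $\mathbf{P}$ and of the $\mathbf{Q}_R$'s by the identifications induced by the root maps, so to show $h$ is well defined I must argue that the local assignments agree on every class of this quotient, which reduces exactly to the compatibility identity $g_{\bx,R}\circ\epsilon_i^R=h'(x_i)$. Choosing the witnesses $g_{\bx,R}$ requires (at worst countable) choice but no further subtlety, since the functors act on each tuple independently and the relational structure of $\Lambda\X$ has no other identifications beyond the root-based ones; once this is verified, preservation of relations is immediate from the fact that each generating tuple of $\Lambda\X$ lies in the image of one of the local homomorphisms.
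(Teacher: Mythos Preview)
The paper does not supply its own proof of this theorem; it is stated with a citation to Pultr's original work and used as a black box. Your argument is the standard proof of Pultr adjunction and is correct: the forward direction composes with the canonical inclusions $\iota_x$ and $\iota_{\bx,T}$ (in the paper's notation, $a\mapsto a^{(x)}$ and $b\mapsto b^{(\bx)}$), and the reverse direction pieces together the witnessing homomorphisms and checks they agree on the generators of the quotient relation $\sim$, which is exactly what is needed for the map to descend. The only cosmetic point is that your $\mathbf{P},\mathbf{Q}_R,\epsilon_i^R$ are the paper's $\A,\B_T,\epsilon_{i,T}$; aligning the notation would make the write-up slot in cleanly.
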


In order to use Pultr functors for transferring pseudo-telepathy across non-local games, we need a quantum version of the adjunction theorem stated above.
First of all, let us introduce some notation. Any two structures $\A$ and $\B$ on the same signature can be viewed as a CSP instance $\Phi$ where the variables are the vertices of $\A$, the labels in the alphabet are the vertices of $\B$, the constraints correspond to tuples in the relations of $\A$, and the admitted assignments to the constraints are expressed by tuples in the relations of $\B$.\footnote{Since, in this context, we only deal with perfect strategies, the constraints can be weighted according to any positive probability distribution.} In this way, $\sat(\Phi)=1$ precisely when $\A\to\B$. We indicate the existence of a perfect $k$-compatible quantum strategy for such CSP instance by the notation $\A\leadsto^k\B$. The next result shows that one direction of Theorem~\ref{thm_pultr_adjoints} can be quantised, provided that the Pultr template is connected (in a technical sense defined in Section~\ref{sec_pultr_functors}). 


\begin{thm}
\label{thm_adjunction_pultr_1}
Let $\mathfrak{T}$ be a connected $(\rho,\tau)$-Pultr template, let $\X$ and $\Y$ be a $\tau$-structure and a $\rho$-structure, respectively, let $k\in\N$, let $k'=(k+1)\cdot\diam(\mathfrak{T})$, and suppose that $\Lambda\X\leadsto^{k'}\Y$. Then $\X\leadsto^{k}\Gamma\Y$.
\end{thm}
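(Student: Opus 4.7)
The plan is to mirror the classical proof of Theorem~\ref{thm_pultr_adjoints}, where a homomorphism $h:\Lambda\X\to\Y$ induces $h^*:\X\to\Gamma\Y$ by sending each $x\in V(\X)$ to the restriction of $h$ to the copy $\mathbb{P}_x$ of the vertex gadget lying over $x$ inside $\Lambda\X$, viewed as a homomorphism from the vertex gadget to $\Y$, i.e., as a vertex of $\Gamma\Y$. Quantising this, starting from the projectors $\{P_{u,b}\}$ forming the assumed $k'$-compatible perfect strategy for $\Lambda\X\leadsto^{k'}\Y$, I would define, for each $x\in V(\X)$ and each $\phi\in V(\Gamma\Y)$,
\[
Q_{x,\phi}\;=\;\prod_{u\in V(\mathbb{P}_x)}P_{u,\phi(u)}.
\]
The order of the product is immaterial because $\mathbb{P}_x$ sits inside a subinstance of $\Lambda\X$ of diameter at most $\diam(\mathfrak{T})\leq k'$, so the involved projectors pairwise commute by $k'$-compatibility of the original strategy.

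The first task is to verify that $\{Q_{x,\phi}\}_{\phi}$ is a projective measurement for each $x$. Idempotency and pairwise orthogonality---distinct $\phi,\phi'$ must disagree at some $u$, and then $P_{u,\phi(u)}P_{u,\phi'(u)}=0$---are immediate from the commutativity above. For completeness, I would expand $I=\prod_{u\in V(\mathbb{P}_x)}\sum_{b\in V(\Y)}P_{u,b}$ as a sum over all maps $\phi:V(\mathbb{P}_x)\to V(\Y)$; if $\phi$ is not a homomorphism, some relation tuple of $\mathbb{P}_x$---which is also a relation tuple of $\Lambda\X$---is mapped outside the corresponding relation of $\Y$, and the perfect-strategy property for $\Lambda\X\to\Y$ forces $Q_{x,\phi}=0$. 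Hence only the homomorphisms $\phi\in V(\Gamma\Y)$ contribute, giving $\sum_\phi Q_{x,\phi}=I$.

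The second task is to show that $\{Q_{x,\phi}\}$ is a perfect homomorphism strategy for $\X\to\Gamma\Y$: whenever $(x_1,\dots,x_r)\in R^\X$ and $(\phi_1,\dots,\phi_r)\notin R^{\Gamma\Y}$, one needs $Q_{x_1,\phi_1}\cdots Q_{x_r,\phi_r}=0$. The corresponding tuple-gadget $\mathbb{P}_R$ together with its attached vertex gadgets $\mathbb{P}_{x_1},\dots,\mathbb{P}_{x_r}$ lies in a single subinstance of $\Lambda\X$ of diameter at most $\diam(\mathfrak{T})\leq k'$, so again all involved $P_{u,b}$'s commute. Multiplying by $\prod_{u\in V(\mathbb{P}_R)\setminus\bigcup_i V(\mathbb{P}_{x_i})}\sum_{b}P_{u,b}=I$, the product expands into a sum indexed by all functions $\psi:V(\mathbb{P}_R)\to V(\Y)$ extending the $\phi_i$'s at the distinguished positions; by the definition of $R^{\Gamma\Y}$, none of these is a homomorphism, and each term is annihilated by a relation-violating triple as before.

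Finally, $k$-compatibility of $\{Q_{x,\phi}\}$ follows from the observation that any subinstance of $\X$ of diameter $\leq k$ lifts, by chaining at most $k$ relation tuples of $\X$, to a subinstance of $\Lambda\X$ of diameter at most $(k+1)\cdot\diam(\mathfrak{T})=k'$ that covers all the relevant vertex gadgets; the $k'$-compatibility assumption makes the underlying $P_{u,b}$'s commute, and hence so do the $Q_{x_i,\phi_i}$'s. Connectedness of $\mathfrak{T}$ is used throughout to ensure that the lifted gadgets are genuine diameter-bounded substructures of $\Lambda\X$ rather than disconnected unions, so that the diameter-based commutativity afforded by the hypothesis actually applies. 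The main obstacle I foresee is not conceptual but organisational: establishing a uniform bookkeeping between template structures, the subinstances they induce in $\Lambda\X$, and the relations of $\Gamma\Y$, so that the three vanishing arguments (for completeness of the new measurements, for the homomorphism condition, and for $k$-compatibility) can all be traced back to the single fact that violating tuples in $\Y$ annihilate the corresponding product of the $P_{u,b}$'s.
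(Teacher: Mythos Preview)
Your proposal is correct and matches the paper's approach almost exactly: the same definition of $Q_{x,\phi}$ as a product of the original projectors over the vertex gadget, the same vanishing argument for non-homomorphisms to obtain completeness of the new PVMs, and the same triangle-inequality chaining through at most $k$ relation gadgets (plus one vertex gadget) to derive $k$-compatibility from $k'$-compatibility. The only superficial difference is in the perfectness step, where the paper assumes the product is nonzero and explicitly constructs the extension $\ell:\B_T\to\Y$ (verifying it is well-defined and a homomorphism via the second connectedness condition), whereas you phrase this in the contrapositive by summing over all extensions $\psi$---note that under the paper's connectedness hypothesis every vertex of $\B_T$ lies in some $\operatorname{Im}(\epsilon_{i,T})$, so your set $V(\mathbb{P}_R)\setminus\bigcup_i V(\mathbb{P}_{x_i})$ is empty, your identity-multiplication is vacuous, and there is at most one $\psi$ to consider.
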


The next result gives the quantum version of the other direction of Theorem~\ref{thm_pultr_adjoints}, provided that the Pultr template satisfies a technical property that we call faithfulness.

\begin{thm}
\label{thm_quantum_adjunction_faithful}
    Let $\mathfrak{T}$ be a faithful $(\rho,\tau)$-Pultr template, let $\X$ and $\Y$ be a $\tau$-structure and a $\rho$-structure, respectively, let $k\in\N$, and suppose that $\X\leadsto^{k}\Gamma\Y$. Then $\Lambda\X\leadsto^{k}\Y$.
\end{thm}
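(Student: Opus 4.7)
The plan is to start from a perfect $k$-compatible quantum assignment witnessing $\X\leadsto^k\Gamma\Y$ and to \emph{coarsen} it into one witnessing $\Lambda\X\leadsto^k\Y$. The given data consist of a shared state $|\psi\rangle$ together with, for each vertex $x\in\X$, a PVM $\{P^x_h\}_{h}$ whose outcomes are vertices of $\Gamma\Y$---that is, homomorphisms from the appropriate template structure into $\Y$. By construction of $\Lambda$, every vertex $v$ of $\Lambda\X$ is the image, under the identification procedure, of some pair $(x,t)$ with $x\in\X$ and $t$ a vertex of the template structure attached to $x$. For any such representative I set
\[
Q^v_y\;:=\;\sum_{\substack{h\,\in\,\Gamma\Y\\h(t)=y}}P^x_h\qquad(y\in\Y),
\]
and propose $\{Q^v_y\}_{v,y}$, together with the same state $|\psi\rangle$, as the new strategy on $\Lambda\X$.

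The first and technically central step is to show that $Q^v_y$ is independent of the chosen representative $(x,t)$ of $v$. This is precisely where the faithfulness of $\mathfrak{T}$ should enter: faithfulness ought to be exactly the hypothesis ensuring that whenever two pairs $(x,t)$ and $(x',t')$ get glued together in $\Lambda\X$, the evaluation maps $h\mapsto h(t)$ and $h\mapsto h(t')$ induce matching coarsenings of the PVMs $\{P^x_h\}$ and $\{P^{x'}_h\}$, making their images agree. Once well-definedness is established, $\{Q^v_y\}_y$ is a bona fide PVM: orthogonality is inherited from the original PVM, and completeness follows by summing over all $y\in\Y$ and recovering $\sum_h P^x_h=I$.

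The constraint-satisfaction check is then a diagram-chase mirroring the classical proof of Theorem~\ref{thm_pultr_adjoints}. A relational tuple $(v_1,\ldots,v_r)$ of $\Lambda\X$ lives, by construction, inside a single gadget copy---either the primary template glued at some $x\in\X$ or the relational template glued at some constraint of $\X$---and the product $\prod_i Q^{v_i}_{y_i}|\psi\rangle$ expands as a sum of terms $\prod_i P^{x_i}_{h_i}|\psi\rangle$ ranging over $h_i$ with $h_i(t_i)=y_i$. Perfection of the given strategy on $\X\leadsto^k\Gamma\Y$ kills every summand for which $(h_1,\ldots,h_r)$ is not a constraint tuple of $\Gamma\Y$; by the very definition of $\Gamma\Y$, the surviving tuples correspond to homomorphisms from the relevant template substructure into $\Y$, which restrict via the evaluations to a genuine relational tuple $(y_1,\ldots,y_r)$ of $\Y$. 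Hence the product vanishes unless the outcomes form a valid constraint of $\Y$, as required.

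Finally, $k$-compatibility transfers because the projection $\Lambda\X\twoheadrightarrow\X$ sending each $v$ to its underlying $x$ does not increase Gaifman distance: edges of $\Lambda\X$ sit inside single gadgets, so they project either to a single vertex or to a single edge of $\X$. Consequently any diameter-$k$ subset of $\Lambda\X$ has image of diameter $\leq k$ in $\X$, the corresponding projectors $P^{x_i}_{h_i}$ are simultaneously diagonalisable by hypothesis, and simultaneous diagonalisability is preserved under the coarsening defining the $Q^{v_i}_{y_i}$. The only genuinely non-routine step in the whole plan is the well-definedness of $Q^v_y$ under the identifications performed by $\Lambda$; I expect this to be the precise content of the faithfulness hypothesis, and once it is verified the remaining checks are transparent translations of the classical adjunction into the PVM setting.
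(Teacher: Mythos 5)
Your construction is the paper's: the coarsened projectors $Q^v_y=\sum_{h(t)=y}P^x_h$ are exactly the paper's $W_{a^{(x)},y}$, and the well-definedness, PVM, perfection, and compatibility checks proceed along the same lines. Two remarks on where the work actually lies. First, for a faithful template each $\sim$-class of $\Lambda\X$ contains exactly one vertex of the form $a^{(x)}$ (the images of the $\epsilon_{i,T}$ partition $B_T$ and each $\epsilon_{i,T}$ is injective), so well-definedness over such representatives is essentially automatic; the substantive identity is that your formula agrees with the gadget-side expression $\sum_{\ell:\B_T\to\Y,\,\ell(b)=y}\prod_i P^{x_i}_{\ell\circ\epsilon_{i,T}}$, which is what the paper proves and what you need for the next point. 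Second, the perfection check for a relational tuple $\bb^{(\bx)}$ sitting inside a copy of $\B_T$ is not a transparent translation: expanding $\prod_j Q^{v_j}_{y_j}$ gives an $r$-fold product over the positions of $\bb$, whereas perfection of the given strategy speaks about $t$-fold products over the positions of the constraint $\bx\in T(\X)$. You must regroup the factors according to which $\Imm(\epsilon_{i,T})$ each $b_j$ lies in (using orthogonality within a single PVM and $1$-compatibility to rearrange, and resolutions of the identity to pad missing indices), and then use faithfulness again to reassemble the surviving families $(f_1,\dots,f_t)$ into a single homomorphism $\ell:\B_T\to\Y$ with $\ell(\bb)=\by\in R(\Y)$. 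This regrouping, not well-definedness, is the genuinely non-routine step, and it is where most of the paper's proof (its Claims on orthogonality of the gadget products and on functions $\ell$ that fail to be homomorphisms) is spent; with it filled in, your outline goes through.
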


We point out that the two directions of Pultr adjunction behave differently with respect to local compatibility of the quantum strategies. Indeed, while the implication $\X\leadsto\Gamma\Y$ $\Rightarrow$ $\Lambda\X\leadsto\Y$ preserves the level of compatibility, the converse implication decreases it by a multiplicative factor depending on the structure of the Pultr template (in particular, its diameter, see formal definitions in Section~\ref{sec_pultr_functors}). This ``compatibility decay'' appears to be intrinsic in the way quantum strategies for the $\Lambda\X,\Y$ game are translated into quantum strategies for the $\X,\Gamma\Y$ game (see the proof of Theorem~\ref{thm_adjunction_pultr_1}), and it is the reason why arbitrarily high compatibility for $d$-to-$1$ instances is required in Conjecture~\ref{conj_d_to_1_khot_quantum}.
More in general, many classic reductions from hardness of approximation theory involve ``increasing the scope of constraints by capturing relations within chains of adjacent variables''---which is the intuitive description of the action of a central Pultr functor $\Gamma$. Examples of this are the reduction from $\textsc{Unique Games}$ to $\textsc{MaxCut}$ of~\cite{khot2007optimal} used in~\cite{mousavi2025quantum}; the reduction from weighted $d$-to-$1$ games to unweighted $d$-to-$d$ games from~\cite{Dinur09:sicomp}, see also~\cite{khot2008vertex} and Appendix~\ref{app_dinur_reductions} of the current paper; and the line-digraph reduction mentioned above.
Hence, while the current work focusses specifically on the colouring game, some level of compatibility of the quantum strategies for $d$-to-$1$ instances appears to be necessary in order to transfer pseudo-telepathy to other types of CSPs using general reductions from CSP approximation.

\subsection{The quantum chromatic gap via $d$-to-$1$ games}
\label{subsec_reductions}
The first main result of this paper is the following.

\begin{shaded}
\vspace{-.4cm}
\begin{thm}
\label{thm_main}
Suppose that Conjecture~\ref{conj_d_to_1_khot_quantum} holds for some $d\geq 2$. Then, for any $c\in\N$, there exists a graph with quantum chromatic number $3$ and classical chromatic number larger than $c$.
\end{thm}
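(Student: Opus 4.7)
The plan is to take the classical chain of reductions from $d$-to-$1$ label cover to $3$-vs-$c$ approximate graph colouring---available under Conjecture~\ref{conj_d_to_1_khot} via~\cite{GS20:icalp} (and making use of the intermediate reduction from weighted $d$-to-$1$ to unweighted $d$-to-$d$ games revisited in Appendix~\ref{app_dinur_reductions})---and re-express it in the Pultr-functor framework, so that the quantum adjunction theorems (Theorems~\ref{thm_adjunction_pultr_1} and~\ref{thm_quantum_adjunction_faithful}) can be used to transfer pseudo-telepathy all the way from the $d$-to-$1$ game to the colouring game.

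Fix $c\in\N$. The first step is to choose $\epsilon>0$ small enough and $k\in\N$ large enough, both depending on $c$ and on the diameters of the Pultr templates underlying the reduction, and then apply Conjecture~\ref{conj_d_to_1_khot_quantum} to obtain a $d$-to-$1$ instance $\Phi$, encoded by a pair of structures $\A,\B$, such that $\A\leadsto^{k}\B$ while $\sat(\Phi)<\epsilon$. The second step is to re-express the classical reduction $\Phi\mapsto\GG_\Phi$ as a composition of Pultr functors, yielding an adjoint pair $(\Lambda,\Gamma)$ with $\GG_\Phi:=\Lambda\A$. Classical completeness of the reduction then takes the form $\Lambda\B\to\K_3$ or, equivalently via Theorem~\ref{thm_pultr_adjoints}, $\B\to\Gamma\K_3$, while the classical soundness (inherited from \cite{GS20:icalp}) guarantees $\chi(\GG_\Phi)>c$.

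The third step is the transfer of pseudo-telepathy itself: from $\A\leadsto^{k}\B$ and the homomorphism $\B\to\Gamma\K_3$, composition yields $\A\leadsto^{k}\Gamma\K_3$. Then I would apply the quantum adjunction theorems along the reduction chain: Theorem~\ref{thm_quantum_adjunction_faithful} for each faithful gadget step, which preserves the level of compatibility; and Theorem~\ref{thm_adjunction_pultr_1} for each connected non-gadget step (in particular, the central-functor component coming from the Dinur-type reduction), which decreases compatibility by a multiplicative factor equal to the corresponding template's diameter. With $k$ chosen large enough to absorb these decay factors, the final output is $\Lambda\A\leadsto^{0}\K_3$, i.e., $\chiq(\GG_\Phi)\leq 3$. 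Since $\chi(\GG_\Phi)>c\geq 3$ forces $\GG_\Phi$ to be non-bipartite, and a quantum $2$-colouring would in turn force bipartiteness, we conclude $\chiq(\GG_\Phi)=3$.

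The main obstacle is the bookkeeping required in the second and third steps: verifying that the classical reduction of~\cite{GS20:icalp} does admit the claimed Pultr-functor decomposition, and checking that the resulting Pultr templates satisfy the required technical conditions---faithfulness for the gadget (left-functor) steps and connectedness for the central-functor steps---so that Theorems~\ref{thm_quantum_adjunction_faithful} and~\ref{thm_adjunction_pultr_1} apply. The compatibility decay in Theorem~\ref{thm_adjunction_pultr_1} is precisely what forces Conjecture~\ref{conj_d_to_1_khot_quantum} to be formulated for arbitrary $k$: each non-gadget step multiplies the required source-game compatibility by $\diam(\mathfrak{T})$, so only the unbounded-$k$ version of the conjecture suffices to guarantee a $0$-compatible quantum $3$-colouring of $\GG_\Phi$ at the end of the chain.
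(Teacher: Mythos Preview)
Your plan matches the paper's approach. To make the comparison concrete: the paper decomposes the chain into three explicit pieces rather than a single Pultr pair. First, the $d$-to-$1 \to d$-to-$d$ step (Proposition~\ref{prop_d_to_d_after_dinur}) is handled \emph{directly} in Appendix~\ref{app_dinur_reductions}, not via Pultr functors. Second, the reduction $\eta$ from $d$-to-$d$ games to $2d$-colouring is realised as a \emph{left} Pultr functor for a \emph{faithful} template, so Corollary~\ref{cor_left_pultr_functor} (built on Theorem~\ref{thm_quantum_adjunction_faithful}) applies with no compatibility loss. Third, the descent from $2d$-colouring to $3$-colouring is done by iterating the \emph{line-digraph} operator $\delta$, which is the \emph{central} Pultr functor for a \emph{connected} template of diameter $2$; here Corollary~\ref{cor_gamma_quantum_functor} (built on Theorem~\ref{thm_adjunction_pultr_1}) is applied repeatedly, each application roughly doubling the required compatibility level, and the final landing on $\K_3$ uses the classical fact $\delta^{(2)}\K_4\to\K_3$.

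Your phrase ``the central-functor component coming from the Dinur-type reduction'' is thus slightly misplaced: the Dinur--Guruswami--Sandeep colouring reduction $\eta$ is the faithful \emph{left}-functor step, and it is the line-digraph---which you do not name---that supplies the connected \emph{central}-functor step where all the compatibility decay happens. This matters for the bookkeeping you flag as the main obstacle: the compatibility budget $k'$ is dictated by the number of line-digraph iterations needed to reach $3$ from $2d$ (the paper takes $k'=3\cdot 2^i-2$ for a suitable $i$), not by the template underlying $\eta$.
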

\vspace{-.4cm}
\end{shaded}
To establish Theorem~\ref{thm_main},
we transfer pseudo-telepathy from the $d$-to-$1$ games to graph colouring via
the combination of the following transformations:
\begin{enumerate}
    \item a reduction from $d$-to-$1$ games to $d$-to-$d$ games from ~\cite{Dinur09:sicomp} (which is a modification of a reduction from~\cite{khot2008vertex} used to prove optimal inapproximability of vertex cover conditional to the Unique Games Conjecture); 
    \item a reduction from $d$-to-$d$ games to $2d$ vs. $\mathcal{O}(1)$ graph colouring from~\cite{GS20:icalp} (which, in turn, generalises a reduction from~\cite{Dinur09:sicomp} obtained in the case $d=2$);
    \item the \textit{line-digraph} reduction, whose behaviour with respect to the chromatic number is well known~\cite{HarnerE72,poljak1981arc}, and which has been often utilised in the context of hardness results or lower bounds against relaxations of approximate graph colouring, see e.g.~\cite{KOWZ22,GS20:icalp,cz23stoc:ba,HechtMS23}.
\end{enumerate}

In order for this process to transform pseudo-telepathic $d$-to-$1$ instances into graphs exhibiting an unbounded gap between quantum and classical chromatic number, we need them to be \textit{classically sound} (i.e., highly unsatisfiable instances are turned into highly chromatic graphs) and \textit{quantum complete} (i.e., instances admitting a perfect quantum assignment are turned into graphs having low quantum chromatic number). The first part comes for free from the soundness analyses performed in the context of approximation hardness. Our task is to establish quantum completeness. For the reduction in $1.$, this is done by emulating the classical completeness analysis, see~Appendix~\ref{app_dinur_reductions}.
As for the reductions in $2.$ and $3.$, we show that both of them are captured by Pultr functors associated with suitable Pultr templates meeting the required connectivity and faithfulness hypotheses of Theorems~\ref{thm_adjunction_pultr_1} and~\ref{thm_quantum_adjunction_faithful}. Hence, their quantum completeness is a consequence of the quantum Pultr adjunction. Full details of this part of the proof are given in Section~\ref{sec_reductions}.

\subsection{The quantum  chromatic gap via $3$XOR games}
\label{subsec_quantum_chromatic_gap_3xor}
How to construct strongly pseudo-telepathic $d$-to-$1$ games required for proving unbounded quantum chromatic gap?
A natural starting point is to look at non-local games encoding Boolean linear equations. Indeed, many of the well-known examples of violations of Bell's inequality involve cooperating entangled players proving that certain classically unsatisfiable systems of linear equations admit a solution---e.g., the  Greenberger--Horne--Zeilinger (GHZ) game~\cite{greenberger1989going,greenberger1990bell} and the Mermin--Peres magic square game~\cite{mermin1990simple,mermin1993hidden,peres1990incompatible}. Such games have a natural \textit{many-to-$1$} structure, in that, for every partial assignment to some equation, there exists
precisely one consistent assignment to each of the variables appearing in it. As a consequence, a first approach could be to consider a parallel repetition of the non-local game associated with Boolean linear equations as in~\cite{cleve2008perfect}, in order to decrease the classical value while preserving the existence of a perfect quantum assignment. The reason why this approach does not work is that the resulting game is a $d$-to-$1$ instance for increasingly large $d$ (in particular, $d$ would be a function of the soundness parameter $\epsilon$ in Conjecture~\ref{conj_d_to_1_khot_quantum}). However, transferring pseudo-telepathy to the colouring game in order to  obtain an unbounded chromatic gap requires $d$ to be fixed.

For this reason, we consider a suitably modified repetition of \textsc{3XOR} games (encoding Boolean linear equations having three variables per equation), where the players' answers are required to be consistent across the iterations of the game. 
By assuming pseudo-telepathy of such games (with any classical soundness threshold strictly smaller than $1$), we derive a $4$ vs. $\mathcal{O}(1)$ gap between quantum and classical chromatic numbers.
In the statement below, $\mathcal{G}(S,n)$ indicates the $n$-fold repetition of the game corresponding to a regular system $S$ of equations (see Section~\ref{sec_quantum_chromatic_gap_3xor} for the formal definitions).

\begin{shaded}
\vspace{-.4cm}
\begin{thm}
\label{thm_equations_to_colouring}
    Suppose that there exists some $0<s^*<1$ such that, for each $n\in\N$, there exists a regular instance $S$ of $3$XOR for which $\sat(S)<s^*$ but $\mathcal{G}(S,n)$ admits a perfect $1$-compatible quantum assignment. Then, for each $c\in\N$, there exists a graph with quantum chromatic number at most $4$ and classical chromatic number larger than $c$.
\end{thm}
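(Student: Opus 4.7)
The plan is to follow the blueprint of Theorem~\ref{thm_main}: transfer the pseudo-telepathy of the source game $\mathcal{G}(S,n)$ up to the colouring game on a graph $\GG_n$ via Pultr-functor-based reductions that are classically sound and quantum complete. The target is $4$ colours rather than $3$ because $3$XOR lacks a $d$-to-$1$ structure; instead, one can naturally emulate a Boolean $2$-to-$2$ game (each $3$XOR equation $x+y+z=b$ becomes $2$-to-$2$ once one pairs variables as $(x)$ vs. $(y,z)$), which by a Gupta--Saket-style reduction maps to $2\cdot 2=4$-colouring.

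First I would encode $\mathcal{G}(S,n)$ as a CSP, producing a pair $\A_n,\B_n$ of relational structures over a suitable signature such that a perfect $1$-compatible quantum assignment for $\mathcal{G}(S,n)$ corresponds precisely to $\A_n\leadsto^1\B_n$. Then I would exhibit a Pultr template $\mathfrak{T}_n$---combining a gadget that turns each $3$XOR equation into a small graph admitting a $4$-colouring if and only if the equation admits a consistent labelling, with a line-digraph-style component stitching together the $n$ iterations---such that the induced central Pultr functor $\Gamma$ satisfies $\B_n\cong\Gamma\K_4$. I would verify that $\mathfrak{T}_n$ is both \emph{connected} and \emph{faithful}. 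Applying Theorem~\ref{thm_quantum_adjunction_faithful} to $\A_n\leadsto^1\Gamma\K_4$ then yields $\Lambda\A_n\leadsto^1\K_4$, giving $\chiq(\GG_n)\leq 4$ for $\GG_n:=\Lambda\A_n$.

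For the classical side, any $c$-colouring of $\GG_n$ corresponds via the classical Pultr adjunction (Theorem~\ref{thm_pultr_adjoints}) to a homomorphism $\A_n\to\Gamma\K_c$, which unfolds into a classical strategy for $\mathcal{G}(S,n)$ whose success probability is bounded below by a positive constant depending only on $c$. On the other hand, regularity of $S$ together with $\sat(S)<s^*<1$ and a parallel-repetition argument in the spirit of~\cite{raz1995parallel}---adapted to the consistency-enforcing repetition $\mathcal{G}(S,n)$---imply $\sat(\mathcal{G}(S,n))\to 0$ as $n\to\infty$. Comparing the two bounds, for any fixed $c$ we obtain $\chi(\GG_n)>c$ once $n$ is large enough.

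The main obstacle is the design of $\mathfrak{T}_n$: it must simultaneously produce a binary (graph) structure from the ternary structure of $3$XOR, target only $4$ colours, correctly encode the $n$-fold consistency requirement of the modified repetition, and be faithful. Faithfulness is the trickiest ingredient, as it rules out ``spurious'' classical homomorphisms $\A_n\to\Gamma\K_4$ not arising from legitimate strategies on $\mathcal{G}(S,n)$. I would expect the verification to parallel the faithfulness analyses of Section~\ref{sec_reductions} for the Gupta--Saket reduction, likely using a Mermin--Peres-style gadget for the XOR part (which natively admits a perfect quantum $4$-colouring via entanglement) composed with a connected line-digraph-like template handling the repetition. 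A secondary subtlety is that $\mathcal{G}(S,n)$ enforces answer consistency across iterations rather than being a pure product game, which may require strengthening the usual parallel-repetition soundness bounds; the regularity assumption on $S$ should make this step routine.
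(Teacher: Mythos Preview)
Your proposal misses the paper's central idea and leaves the main step unconstructed. The paper does \emph{not} attempt to build a direct Pultr reduction from the game $\mathcal{G}(S,n)$ to $4$-colouring. Instead, the key ingredient is the Dinur--Khot--Kindler--Minzer--Safra reduction $\rho$ from regular $3$XOR to $2$-to-$2$ games (Section~\ref{subsec_description_rho_one}). The whole point of the parameter $n$ in the hypothesis is that the DKKMS reduction $\rho$ has $n$ as an internal parameter (the number of equations in each block $\bu$), and a perfect $1$-compatible quantum strategy for $\mathcal{G}(S,n)$ is \emph{exactly} the data needed to prove quantum completeness of $\rho$ (Theorem~\ref{thm_quantum_completeness_rho_minzer}): from the PVMs $\{Q_{\bu,\vartheta}\}$ one defines $W_{L\oplus H_\bu,\psi}=\sum_{\vartheta:\widehat\vartheta|_{L\oplus H_\bu}=\psi}Q_{\bu,\vartheta}$, and the consistency condition~\eqref{eqn_0801_1029} together with the extension Lemma~\ref{lem_extension_side_conditions} show this is a perfect $1$-compatible quantum assignment for the $2$-to-$2$ instance $\rho S$. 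Classical soundness of $\rho$ is the deep Grassmann-graph expansion result (Theorem~\ref{thm_soundess_first_reduction}), which amplifies the gap from the fixed $s^*<1$ to arbitrarily small $\delta$. Once you have a $2$-to-$2$ instance with a perfect $1$-compatible quantum assignment and small $\isat$ value (via Lemma~\ref{lem_classical_soundness_minzer_isat}), the machinery of Section~\ref{subsec_GS_reduction} with $d=2$ finishes the proof, giving $\eta\rho S\leadsto^1\K_4$.

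By contrast, your plan never builds the promised template $\mathfrak{T}_n$, and the requirements you impose on it (faithful, connected, $\B_n\cong\Gamma\K_4$, encoding the $n$-fold consistency) are precisely the hard part; nothing in Section~\ref{sec_reductions} suggests how to do this directly from a ternary XOR structure. Your soundness sketch also conflates two different roles of $n$: you want a Raz-style parallel-repetition decay of $\sat(\mathcal{G}(S,n))$, but the Pultr template you envisage depends on $n$, so the implied lower bound ``$c$-colourable $\Rightarrow$ value $\geq f(c)$'' need not be uniform in $n$, and the comparison breaks down. The paper avoids all of this by routing through DKKMS, where the soundness amplification is already established and the quantum completeness (the new contribution) is a clean linear-algebra argument.
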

\vspace{-.4cm}
\end{shaded}

Note that the assumption of Theorem~\ref{thm_equations_to_colouring} is true for $n=1$, because of the Mermin--Peres construction. 
The result is proved by showing that pseudo-telepathic instances for the $n$-fold repetition of $3$XOR can be transformed into \textit{strongly} pseudo-telepathic $2$-to-$2$ instances (i.e., with classical soundness arbitrarily close to $0$), via the geometric framework developed in~\cite{Khot17:stoc-independent,Khot18:focs-pseudorandom,Dinur18:stoc-non-optimality,Dinur18:stoc-towards} to prove hardness of approximating $2$-to-$2$ games with imperfect completeness. Then, the machinery developed in the previous part of the paper allows obtaining the required quantum chromatic gap.
The reduction we use to prove Theorem~\ref{thm_equations_to_colouring} skips the step $1.$ in Section~\ref{subsec_reductions}
and transforms the input $3$XOR game directly into a $2$-to-$2$ instance. Nevertheless, the case $d=2$, $k=1$ of Conjecture~\ref{conj_d_to_1_khot_quantum} can be derived from the hypothesis of Theorem~\ref{thm_equations_to_colouring} by applying the same proof to a modified, $2$-to-$1$ version of the \textsc{3XOR} $\rightarrow$ $2$-to-$2$ reduction, as described in~\cite{Dinur18:stoc-towards} (see Remark~\ref{rem_proving_conj}).
We also point out that, unlike Theorem~\ref{thm_main}, Theorem~\ref{thm_equations_to_colouring} does not allow improving the quantum chromatic gap from $4$ vs. $\mathcal{O}(1)$ to $3$ vs. $\mathcal{O}(1)$. The reason is that requiring local compatibility in a perfect quantum assignment of the $\mathcal{G}(S,n)$ game for any parameter larger than $1$ would collapse the assignment to a classical one, as it would enforce commutativity between \textit{all} pairs of measurements appearing in the strategy. As a consequence, this time, we cannot apply the line-digraph operator, since quantum completeness of the latter reduction requires higher compatibility, cf.~Theorem~\ref{thm_adjunction_pultr_1}.

Also in this case, the classical soundness comes essentially for free from the approximation hardness setting, where it is established by analysing the expansion properties of the Grassmann graphs encoding the geometry of the linear spaces constituting the blocks of the reduction, see Section~\ref{subsec_description_rho_one}.
Some care is only required when manipulating $2$-to-$2$ instances having small classical value, which are the output of the reduction, as a slightly different version of classical value (denoted by $\isat$, see Section~\ref{subsec_preprocessing}) is needed in order for the reduction from~\cite{GS20:icalp} to go through. We deal with this issue through a result by~\cite{DawarM25}. (Alternatively, the obstacle can be avoided by considering the $2$-to-$1$ version of the reduction---see the discussion in Section~\ref{subsec_classical_soundness_rho}.)  

The bulk of our work in this setting is to prove quantum completeness. Intuitively, the idea is the following. Given a \textsc{3XOR} instance $S$, the reduced $2$-to-$2$ instance has vertices corresponding to direct sums $L\oplus H_\bu$ of $(i)$ a linear space $H_\bu$ (over the Boolean field $\F_2$) encoding a list of equations $\bu$, and $(ii)$ a linear space $L$ of some dimension $\ell$ (which is a fixed parameter of the reduction) that is linearly independent with $H_\bu$. Moreover, the alphabet of the $2$-to-$2$ instance consists of linear functionals $\psi:L\oplus H_\bu\to\F_2$ that respect the equations in $\bu$. Suppose now that $S$ admits a perfect quantum strategy given by measurements $\{Q_{\bu,\vartheta}\}$, where $\bu$ is a list of equations appearing in the $n$-fold repetition of the $S$-game and $\vartheta$ encodes the players' answers. Then the measurements $\{W_{L\oplus H_\bu,\psi}\}$ in a quantum strategy for the reduced instance are obtained as the sum of all projectors $Q_{\bu,\vartheta}$ for which the linear functional corresponding to $\vartheta$ agrees with $\psi$ when restricted to the space $L\oplus H_\bu$. 
Using the consistency of the players' answers across iterations of the game and geometric properties of the reductions (in particular, the ``extension'' Lemma~\ref{lem_extension_side_conditions} from~\cite{Khot17:stoc-independent}), we show that the resulting quantum strategy is perfect, thus achieving quantum completeness.

To the best of our knowledge, this is the first usage of the reduction of~\cite{Khot17:stoc-independent,Khot18:focs-pseudorandom,Dinur18:stoc-non-optimality,Dinur18:stoc-towards} in the quantum setting. An interesting prospect for future work is to obtain a finer translation of quantum strategies for \textsc{3XOR} instances into quantum strategies for the reduced $2$-to-$2$ instances, relying on the rich geometric structure of the Grassman graphs encoding the latter. This would allow
weakening the assumptions in Theorem~\ref{thm_equations_to_colouring} and, thus, would make the task of exhibiting the required pseudo-telepathic \textsc{3XOR} instances easier. 
As it turns out, the reduction allows for a pseudo-telepathy transfer with \textit{perfect} completeness in the quantum setting (since \textsc{3XOR} is pseudo-telepathic), while the same does not hold in the classical approximation hardness setting (since \textsc{3XOR} is not NP-hard).
A similar phenomenon occurs in the context of integrality gaps for convex relaxations---specifically, the Sum-of-Squares semidefinite-programming hierarchy~\cite{Lasserre02,parrilo2000structured}---as observed in~\cite{Khot18:focs-pseudorandom} (see also~\cite[\S2.9]{khot2019proof}). Indeed, integrality gaps can be transferred from \textsc{3XOR} to $2$-to-$2$ games (and, then, to other CSPs, such as approximate graph colouring) with perfect completeness, thanks to the \textsc{3XOR} integrality gaps of~\cite{grigoriev2001linear,schoenebeck2008linear}.


\section{Classical and quantum CSPs}
\label{sec_classical_quantum_CSPs}
In this preliminary section, we formally describe classical and quantum assignments for constraint satisfaction problems. We adopt the language of
\textit{predicate satisfiability}, which yields the most natural description of specific types of CSPs that shall appear frequently (in particular, $d$-to-$1$ and $d$-to-$d$). Two alternative descriptions will sometimes be preferable: 
the language of \textit{non-local games}, providing a physical intuition of classical, quantum, and locally compatible quantum assignments as winning strategies for two-player cooperative games; and the language of \textit{homomorphisms between relational structures}, allowing for a more direct connection with categorical constructs such as the Pultr adjunction, as we shall see in Section~\ref{sec_pultr_functors}. 
The three, equivalent descriptions are given in Sections~\ref{subsec_CSP_predicates},~\ref{subsec_CSP_nonlocal_games}, and~\ref{subsec_CSP_homomorphisms}, respectively.

\subsection{Predicate satisfiability}
\label{subsec_CSP_predicates}

A CSP instance consists of a tuple $\Phi=(X,E,A,\pi,\phi)$ where 
\begin{itemize}
    \item $X$ is a finite set (whose elements are called \textit{variables});
    \item 
    $E$ is a multiset of \textit{constraint} tuples $\bx$ of variables, each having a finite length called \textit{arity} and denoted by $\ar(\bx)$;
    \item $A$ is a finite set (called the \textit{alphabet} or \textit{label set});
    \item $\pi$ is a positive probability distribution over $E$;
    \item $\phi$ is a collection of predicates $\phi_\bx:A^{\ar(\bx)}\to\{0,1\}$ for each $\bx\in E$.\footnote{We shall often identify the function $\phi_\bx$ with the set $\{\ba\in A^{\ar(\bx)}\suchThat\phi_\bx(\ba)=1\}$.}
\end{itemize}

A \textit{classical assignment} for a CSP instance $\Phi$ is a function $f:X\to A$.
The value of $f$ on a constraint $\bx$ is $\phi_\bx(f(\bx))$, where $f(\bx)$ is the entrywise application of $f$ to the entries of $\bx$.
The total value of $f$ is the weighted fraction of satisfied constraints; i.e., the quantity 
\begin{align*}
    \sat_f(\Phi)=\mathbb{E}_{\bx\sim\pi}\phi_\bx(f(\bx)).
\end{align*}
If $\sat_f(\Phi)=1$, we say that $f$ is a \textit{perfect} classical assignment.
The \textit{classical value} of $\Phi$, denoted by $\sat(\Phi)$, is the maximum of $\sat_f(\Phi)$ over all classical assignments $f:X\to A$.

Given a finite set $S$ and a finite-dimensional Hilbert space $\HH$, we let $\PVM_{\HH}(S)$ be the set of projection-valued measurements over $\HH$ whose outcomes are indexed by $S$. I.e., $\PVM_{\HH}(S)$ contains all sets $\{Q_s:s\in S\}$ 
where each $Q_s$ is a projector onto $\HH$ and, in addition, $\sum_{s\in S}Q_s$ is the identity projector $\id_\HH$.
A \textit{quantum assignment} over $\HH$ for a CSP instance $\Phi$ consists of a function $\mathfrak{Q}:X\to\PVM_\HH(A)$. For notational convenience, given a variable $x\in X$ and a label $a\in A$, we shall denote by $Q_{x,a}$ the $a$-th projector in the measurement $\mathfrak{Q}(x)$. The value of $\mathfrak{Q}$ on a constraint $\bx=(x_1,\dots,x_r)\in E$ of arity $r$ is the quantity
\begin{align*}
\frac{1}{\dim(\HH)}
    \sum_{\ba\in A^{r}}\phi_\bx(\ba)\,\tr(Q_{x_1,a_1}\cdot Q_{x_2,a_2}\cdot\ldots\cdot Q_{x_{r},a_{r}}).
\end{align*}
The total value of $\mathfrak{Q}$ is then the quantity
\begin{align}
\label{eqn_1527_232}
\qsat_{\mathfrak{Q}}(\Phi)=\frac{1}{\dim(\HH)}\mathbb{E}_{\bx\sim\pi}
    \sum_{\ba\in A^{\ar(\bx)}}\phi_\bx(\ba)\,\tr(Q_{x_1,a_1}\cdot Q_{x_2,a_2}\cdot \ldots\cdot Q_{x_{\ar(\bx)},a_{\ar(\bx)}}).
\end{align}
We say that $\mathfrak{Q}$ is a \textit{perfect} quantum assignment if $\qsat_{\mathfrak{Q}}(\Phi)=1$.
If all constraints of $\Phi$ have arity $1$ or $2$, the quantity in~\eqref{eqn_1527_232} is always real, and the assignment $\mathfrak{Q}$ is sometimes called a \textit{quantum noncommutative strategy} (see, for example,~\cite[\S~3.1]{mousavi2025quantum}).
%
%
In the general case, in order to enforce that the traces of products of projectors in the expression above be real, one further requires that projectors corresponding to variables appearing in a common constraint should commute. We call these quantum assignments \textit{$1$-compatible}. 
More in general, as discussed in Section~\ref{subsec_classical_quantum_CSPs}, we consider quantum assignments enforcing commutativity between projectors corresponding to variables that are close in the metric over $X$ induced by the constraints in $E$. 
Enforcing this type of \textit{local compatibility} for quantum assignments of a CSP instance is needed in order to 
achieve quantum completeness of Pultr reductions, which is the focus of Section~\ref{sec_pultr_functors}.
%

We now formally define locally compatible quantum assignments.
Given two variables $x,x'\in X$, we let $\dist_\Phi(x,x')$ denote the distance between $x$ and $x'$ in the \textit{Gaifman graph} of $\Phi$---i.e., the undirected graph having vertex set $X$, where two vertices are adjacent if and only if they appear in a common constraint. For two operators $Q$ and $Q'$, let $[Q,Q']=QQ'-Q'Q$ be their commutator. 
We say that a quantum assignment $\mathfrak{Q}$ is \textit{$k$-compatible} for some $k\in\N\cup\{0\}$ if $[Q_{x,a},Q_{x',a'}]=O$ for each pair $x,x'\in X$ such that $\dist_{\Phi}(x,x')\leq k$ and each pair $a,a'\in A$. Clearly, the $k$-compatibility requirement becomes progressively stronger as $k$ increases.
In the extreme case, if $\delta$ is the diameter of the Gaifman graph of $\Phi$, a $\delta$-compatible quantum assignment requires that the projectors in \textit{all} PVMs of the assignment should commute with each other and, thus, be simultaneously diagonalisable. Hence, $\delta$-compatible quantum assignments collapse to classical assignments.
Note that
projector commutativity is not transitive, so locally compatible quantum assignments do not automatically collapse to globally compatible, and thus classical, assignments.

The total value of a $k$-compatible quantum strategy is clearly real if $k\geq 1$, regardless of the arity of its constraints. Hence, the $k$-compatible quantum value of $\Phi$, denoted by $\qsat_k(\Phi)$, can be defined as the supremum of $\qsat_{\mathfrak{Q}}(\Phi)$ over all $k$-compatible quantum assignments $\mathfrak{Q}$.
The discussion above shows that the following chain of inequalities holds:
\begin{align*} 
    \qsat_1(\Phi)
    \geq
    \qsat_2(\Phi)
    \geq 
    \dots 
    \geq
    \qsat_\delta(\Phi)
    =
    \qsat_{\delta+1}(\Phi)
    =
    \dots
    =
    \sat(\Phi).
\end{align*}
In addition, in the unary or binary case, we have that $\qsat_0(\Phi)\geq \qsat_1(\Phi)$. Hence, the notion described above allows to smoothly decrease the value of a CSP from the quantum noncommutative to the classical one, by considering PVM assignments that ``look classical'' over progressively larger subinstances of $\Phi$. 
%

We now define a particular type of CSP instances that shall frequently appear in this work.
A \textit{label-cover} instance is a CSP instance $\Phi=(X,E,A,\pi,\phi)$ all of whose constraints are binary. Notice that, in this case, $(X,E)$ is a digraph (possibly with repeated edges). We say that $\Phi$ is \textit{bipartite} if
$E\subseteq X_1\times X_2$ for some decomposition $X=X_1\sqcup X_2$.
Following~\cite{dinur2014analytical,dinur2015parallel},
we say that a bipartite label-cover instance $\Phi$ is \textit{projective} if there is a decomposition $A=A_1\sqcup A_2$ such that, for each $\bx\in E$, it holds that $(i)$ $\phi_\bx\subseteq A_1\times A_2$, and $(ii)$ for each $a_1\in A_1$ there exists precisely one value $a_2\in A_2$ for which $\phi_\bx(a_1,a_2)=1$.\footnote{Sometimes, in the literature on both classical and quantum CSPs/non-local games, the expression ``label-cover instances'' indicates what we call here ``projective label-cover instances''.}
Clearly, if $f:X\to A$ is an optimal classical assignment for $\Phi$, there is an optimal classical assignment $f'$ such that $f'(X_1)\subseteq A_1$ and $f'(X_2)\subseteq A_2$. Hence, we can assume without loss of generality that optimal classical assignments satisfy the latter condition. (We shall see that the same holds for optimal quantum assignments, see Proposition~\ref{prop_only_relevant_projectors}.)
We say that a projective label-cover instance is \textit{$d$-to-$1$} for some $d\in\N$ if for each $\bx\in E$ and each $a_2\in A_2$ there exist precisely $d$-many values $a_1\in A_1$ for which $\phi_\bx(a_1,a_2)=1$. 
Observe that, in this case, $|A_1|=d\cdot|A_2|$.

\subsection{Non-local games}
\label{subsec_CSP_nonlocal_games}
The following two-player cooperative game can be associated with a CSP instance $\Phi=(X,E,A,\pi,\phi)$ (see for example~\cite{abramsky2017quantum}): 
\begin{itemize}
    \item The verifier sends Alice a constraint $\bx\in E$ sampled according to the probability distribution $\pi$, and Alice responds with a tuple $\ba\in A^{\ar(\bx)}$;
    \item the verifier sends Bob a variable $x\in X$ sampled uniformly at random, and Bob responds with a label $a\in A$.
\end{itemize}
Alice and Bob win the game if, whenever $x=x_i$ for some $i\in[\ar(\bx)]$, it holds that $a=a_i$. 
It is straightforward to see that a perfect classical (deterministic or randomised) strategy for Alice and Bob exists precisely when $\sat(\Phi)=1$. A \textit{quantum} strategy $\mathfrak{E}$ consists of a finite-dimensional Hilbert space $\HH$, a state $\psi\in\HH\otimes\HH$, and two collections of measurements---one for Alice and one for Bob---indexed by the verifier's questions and having as possible outcomes the admitted players' answers.
More in detail, there is a measurement $\{E_{\bx,\ba}\}_{\ba\in A^{\ar(\bx)}}\in\PVM_\HH(A^{\ar(\bx)})$ for each $\bx\in E$, and a measurement $\{{E'_{x,a}}\}_{a\in A}\in\PVM_\HH(A)$ for each $x\in X$. Upon receiving the verifier's questions, Alice and Bob select the corresponding PVMs and perform the corresponding measurement on their part of the state $\psi$; then, their answers to the verifier are the obtained outcomes.
Hence, in order for the strategy to be perfect, the following conditions must hold:
\begin{itemize}
    \item $\psi^*({E_{\bx,\ba}}
    \otimes {E'_{x,a}})\psi=0$ whenever $x$ appears in $\bx$ with index $i$ and $a\neq a_i$;
    \item $\psi^*({E_{\bx,\ba}}\otimes \id_\HH)\psi=0$ whenever $\phi_\bx(\ba)=0$.
\end{itemize}
The next result shows the connection between perfect quantum strategies and perfect quantum assignments.

\begin{thm}[\cite{abramsky2017quantum}]
\label{thm_abramsky_quantum_strategies_quantum_assignments}
    Let $\Phi$ be a $\CSP$ instance. Then 
    $\Phi$ has a perfect $1$-compatible quantum assignment if and only if the non-local game associated with $\Phi$ admits a perfect quantum strategy.
\end{thm}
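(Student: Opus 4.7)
\medskip

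\noindent\textbf{Proof plan.}
The plan is to handle the two directions separately, using the standard ``measure half of a maximally entangled state'' technology on one side and a Schmidt-support reduction on the other.

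For the implication ($\Rightarrow$), starting from a perfect $1$-compatible quantum assignment $\mathfrak{Q}=\{Q_{x,a}\}$ on a Hilbert space $\HH$, I would let $\psi\in\HH\otimes\HH$ be the maximally entangled state and set
\begin{align*}
E'_{x,a}\;=\;Q_{x,a}^{\,T},\qquad E_{\bx,\ba}\;=\;Q_{x_1,a_1}\,Q_{x_2,a_2}\cdots Q_{x_r,a_r}\quad\text{for }\bx=(x_1,\dots,x_r),\ \ba=(a_1,\dots,a_r).
\end{align*}
Because the $Q_{x_i,\cdot}$ pairwise commute by $1$-compatibility, each $E_{\bx,\ba}$ is again a projector and $\{E_{\bx,\ba}\}_{\ba}$ is a PVM (it sums to $\prod_i\sum_{a_i} Q_{x_i,a_i}=\id_\HH$). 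The verification of perfection then reduces, via the identity $\psi^*(A\otimes B)\psi=\tfrac{1}{\dim\HH}\tr(AB^{T})$ valid for the maximally entangled state, to two trace computations: (a) if $x=x_i$ and $a\neq a_i$, then $Q_{x_i,a_i}\,Q_{x,a}=0$ as distinct elements of a PVM, forcing the trace to vanish; and (b) $\tr(E_{\bx,\ba})=\tr(Q_{x_1,a_1}\cdots Q_{x_r,a_r})$ for non-satisfying $\ba$ is zero because $\qsat_{\mathfrak{Q}}(\Phi)=1$ and the individual traces are nonnegative (products of commuting projectors are projectors).

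For the implication ($\Leftarrow$), starting from a perfect quantum strategy $(\psi,\{E_{\bx,\ba}\},\{E'_{x,a}\})$, I would first reduce to the case where the reduced density $\rho=\tr_{B}(\psi\psi^{*})$ on Alice's side has full support, by cutting down to the Schmidt-support subspaces on both sides (the two perfection conditions survive the restriction). Writing $\psi$ in Schmidt form lets me use $\psi^*(A\otimes B)\psi=\tr(\rho\, A\,B^{T})$, and then condition~(a) of the strategy gives, for each constraint $\bx$ and each index $i$,
\begin{align*}
\sum_{\ba:\, a_i=a} E_{\bx,\ba}\;=\;\bigl(E'_{x_i,a}\bigr)^{T}\qquad\text{for every }a\in A,
\end{align*}
by summing (a) over $\ba$ with $a_i=a'\neq a$ and invoking that $\rho$ is invertible. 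Define the candidate quantum assignment by $Q_{x,a}=(E'_{x,a})^{T}$. The marginals of the single PVM $\{E_{\bx,\ba}\}_{\ba}$ on two distinct coordinates commute, hence for any two variables $x,y$ occurring together in some constraint $\bx$, the operators $Q_{x,a}$ and $Q_{y,a'}$ commute; this yields $1$-compatibility.

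To conclude I would verify $\qsat_{\mathfrak{Q}}(\Phi)=1$. Since commuting marginals of a joint PVM multiply to the intersection, $Q_{x_1,a_1}\cdots Q_{x_r,a_r}=E_{\bx,\ba}$ on the support of $\rho$, and condition~(b) of the strategy forces $E_{\bx,\ba}=0$ whenever $\phi_{\bx}(\ba)=0$ (again using full support of $\rho$). Hence
\begin{align*}
\sum_{\ba} \phi_{\bx}(\ba)\,\tr(Q_{x_1,a_1}\cdots Q_{x_r,a_r})\;=\;\tr\!\Bigl(\textstyle\sum_{\ba} E_{\bx,\ba}\Bigr)\;=\;\tr(\id_\HH)\;=\;\dim\HH,
\end{align*}
so the normalised expectation in~\eqref{eqn_1527_232} equals $1$. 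The principal subtlety, and the step I expect to require the most care, is the Schmidt-support reduction together with the passage from condition~(a) to the identification of Alice's marginals with (transposed) Bob's projectors; once this identification is established the commutativity and the perfection of $\mathfrak{Q}$ both follow cleanly from the PVM algebra.
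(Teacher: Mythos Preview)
The paper does not give its own proof of this statement; it is quoted from \cite{abramsky2017quantum} and used as a black box. So there is nothing in the paper to compare against, and your proposal should be judged on its own merits.

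Your plan is essentially the standard argument and is correct in outline. Two remarks:

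\medskip
\noindent\textbf{(i)} The identity you write, $\psi^*(A\otimes B)\psi=\tr(\rho\,A\,B^{T})$, is not quite right: for a Schmidt decomposition $\psi=\sum_i\sqrt{\lambda_i}\,|i\rangle\otimes|i\rangle$ one gets $\psi^*(A\otimes B)\psi=\tr(\sqrt{\rho}\,A\,\sqrt{\rho}\,B^{T})$. This does not damage your argument, because from $\tr(\sqrt{\rho}\,M_{i,a'}\,\sqrt{\rho}\,(E'_{x_i,a})^T)=0$ (a product of two positive operators with zero trace) you still obtain $M_{i,a'}\sqrt{\rho}\,(E'_{x_i,a})^T=0$ for $a'\neq a$, and after summing in both indices you reach $M_{i,a}\sqrt{\rho}=\sqrt{\rho}\,(E'_{x_i,a})^T$.

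\medskip
\noindent\textbf{(ii)} The step you flag as the ``principal subtlety'' --- concluding $M_{i,a}=(E'_{x_i,a})^T$ --- deserves one extra line. From $P\sqrt{\rho}=\sqrt{\rho}\,Q$ with $P,Q$ orthogonal projectors and $\sqrt{\rho}$ positive and invertible, take adjoints to get $\sqrt{\rho}\,P=Q\sqrt{\rho}$; then $\rho\,Q=\sqrt{\rho}\,P\sqrt{\rho}=Q\rho$, so $Q$ commutes with $\sqrt{\rho}$, and hence $P=Q$. Once this identification is in hand, your commutativity argument (marginals of a single PVM at distinct coordinates commute) and your perfection argument ($\prod_i M_{i,a_i}=E_{\bx,\ba}$, which vanishes for non-satisfying $\ba$ since $\tr(\rho\,E_{\bx,\ba})=0$ and $\rho$ is invertible) go through exactly as you describe.
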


We can easily modify the non-local game above to capture locally compatible quantum assignments. To that end, we add ``fake'' constraints to $\Phi$ that have no effect on classical strategies, but enforce commutativity of the projectors corresponding to the selected variables. Formally, we consider the CSP instance $\Phi^{(k)}=(X,E',A,\pi',\phi')$ where $E'$ is obtained from $E$ by adding a binary constraint $(x,x')$ for each two variables $x,x'$ such that $\dist_\Phi(x,x')\leq k$. The corresponding set of admitted labels is $\phi_{(x,x')}=A^2$; i.e., no restriction is posed on the value of the constraint. Finally, $\pi'$ assigns probability, say, $\frac{\pi(\bx)}{2}$ for each $\bx\in E$, and probability $\frac{1}{2\alpha}$ for each new binary constraint $(x,x')$, where $\alpha$ is the number of the new binary constraints. The following fact is immediate from Theorem~\ref{thm_abramsky_quantum_strategies_quantum_assignments}.

\begin{cor}
\label{cor_1239_2103}
Let $\Phi$ be a CSP instance, and take $k\in\N$. Then 
    $\Phi$ has a perfect $k$-compatible quantum assignment if and only if the non-local game associated with $\Phi^{(k)}$ admits a perfect quantum strategy.
\end{cor}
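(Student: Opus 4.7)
The plan is to reduce Corollary~\ref{cor_1239_2103} to Theorem~\ref{thm_abramsky_quantum_strategies_quantum_assignments} applied to the modified instance $\Phi^{(k)}$. Since $\Phi$ and $\Phi^{(k)}$ share the same variable set $X$ and alphabet $A$, any $\mathfrak{Q}:X\to\PVM_\HH(A)$ is simultaneously a quantum assignment for both. The whole argument then amounts to checking that $\mathfrak{Q}$ is a perfect $k$-compatible quantum assignment for $\Phi$ if and only if the same $\mathfrak{Q}$ is a perfect $1$-compatible quantum assignment for $\Phi^{(k)}$; once this is in hand, Theorem~\ref{thm_abramsky_quantum_strategies_quantum_assignments} applied to $\Phi^{(k)}$ gives the equivalence with the existence of a perfect quantum strategy for the non-local game of $\Phi^{(k)}$.

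First I would handle the compatibility side. By the definition of $\Phi^{(k)}$, two distinct variables $x,x'\in X$ share a constraint in $\Phi^{(k)}$ precisely when $\dist_\Phi(x,x')\leq k$, so their Gaifman-adjacency in $\Phi^{(k)}$ is exactly the relation ``$\dist_\Phi(x,x')\leq k$''. Hence the $1$-compatibility requirement over $\Phi^{(k)}$ (commutativity of $Q_{x,a}$ and $Q_{x',a'}$ for neighbours in the new Gaifman graph) is identical to the $k$-compatibility requirement over $\Phi$.

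Next I would compare the values. Under the assumption that $\mathfrak{Q}$ is $1$-compatible over $\Phi^{(k)}$ (equivalently, $k$-compatible over $\Phi$), projectors within any original constraint commute, so each trace term appearing in $\qsat_{\mathfrak{Q}}(\Phi)$ is real and retains its value in $\qsat_{\mathfrak{Q}}(\Phi^{(k)})$, now with weight $\pi(\bx)/2$. For each of the $\alpha$ fake binary constraints $(x,x')$, the predicate is the full set $A^2$, and a short computation using $\sum_{a\in A} Q_{x,a}=\id_\HH$ and cyclicity of the trace gives
\begin{align*}
\frac{1}{\dim(\HH)}\sum_{(a,a')\in A^2}\tr(Q_{x,a}Q_{x',a'})=\frac{1}{\dim(\HH)}\tr(\id_\HH\cdot\id_\HH)=1.
\end{align*}
Summing yields $\qsat_{\mathfrak{Q}}(\Phi^{(k)})=\tfrac{1}{2}\qsat_{\mathfrak{Q}}(\Phi)+\tfrac{1}{2}$, so perfectness transfers in both directions.

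Combining the two observations, perfect $k$-compatible assignments for $\Phi$ and perfect $1$-compatible assignments for $\Phi^{(k)}$ are literally the same set of functions $\mathfrak{Q}$, and the corollary follows from Theorem~\ref{thm_abramsky_quantum_strategies_quantum_assignments} applied to $\Phi^{(k)}$. I do not anticipate any real obstacle: the whole point of the construction $\Phi^{(k)}$ is that the added constraints are semantically vacuous (their predicate is $A^2$, contributing $1$ to the value regardless of $\mathfrak{Q}$) but simultaneously modify the Gaifman graph so as to encode the distance-$k$ commutativity condition as a distance-$1$ condition. The only minor care needed is to verify that the rescaled weights $\pi(\bx)/2$ and $1/(2\alpha)$ are positive, so that ``perfectness'' in $\Phi^{(k)}$ forces satisfaction of every individual original constraint, which is immediate from the assumption that $\pi$ is positive and $\alpha\in\N$.
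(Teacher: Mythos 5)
Your proposal is correct and matches the paper's intent exactly: the paper declares the corollary "immediate from Theorem~\ref{thm_abramsky_quantum_strategies_quantum_assignments}", and the content of that immediacy is precisely what you verify — that Gaifman-adjacency in $\Phi^{(k)}$ coincides with $\dist_\Phi\leq k$, and that the fake constraints with predicate $A^2$ each contribute value $1$, so $\qsat_{\mathfrak{Q}}(\Phi^{(k)})=\tfrac12\qsat_{\mathfrak{Q}}(\Phi)+\tfrac12$ and perfectness transfers both ways. No gaps.
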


\subsection{Homomorphism problems}
\label{subsec_CSP_homomorphisms}
We end this preliminary section by giving an equivalent description of CSPs based on homomorphisms between relational structures.
A \emph{signature} $\rho$ is a finite set of relation symbols $R$, each with its \emph{arity} $\ar(R)\in\N$.
A \emph{relational structure} $\mathbf{S}$ with signature $\rho$ (in short, a $\rho$-structure) consists of a finite set $S$ and a relation $R({\mathbf{S}})\subseteq S^{\ar(R)}$ for each symbol $R\in\rho$. A \textit{homomorphism} $f:\mathbf{S}\to\T$ between two relational structures $\mathbf{S}$ and $\T$ on a common signature $\rho$ is a map from the domain $S$ of $\mathbf{S}$ to the domain $T$ of $\T$ that preserves all relations; i.e., $f(\bs)\in R({\T})$ for each symbol $R\in\rho$ and each tuple $\bs\in R({\mathbf{S}})$ (where $f(\bs)$ is the entrywise application of $f$ the the entries of $\bs$). We denote the existence of a homomorphism from $\mathbf{S}$ to $\T$ by the notation $\mathbf{S}\to\T$.
Given a CSP instance $\Phi=(X,E,A,\pi,\phi)$, we can associate with $\Phi$ two relational structures $\X$, $\A$ over a common signature $\rho$ as follows. 
The domains of $\X$ and of $\A$ are the variable set $X$ and the label set $A$, respectively. The signature $\rho$ contains an $r$-ary symbol for each distinct map $\phi_\bx:A^r\to\{0,1\}$; 
the interpretation $R(\X)$ of $R$ in $\X$ contains all tuples $\bx\in E$ such that $\phi_\bx$ corresponds to the symbol $R$, while
the interpretation of $R$ in $\A$ is the set $R(\A)=\{\ba\in A^r:\phi_\bx(\ba)=1\}$. 
Conversely, given two structures $\X$ and $\A$ on a common signature $\rho$, we can associate with them a CSP instance $\Phi=(X,E,A,\pi,\phi)$ such that $X$ and $A$ are the domains of $\X$ and $\A$, respectively; $E=\bigcup_{R\in\rho}R(\X)$; for $R\in\rho$ and $\bx\in R(\X)$, $\phi_\bx$ is the Boolean map defined by $\ba\mapsto 1$ for each $\ba\in R(\A)$ and $\ba\mapsto 0$ for each $\ba\in A^{\ar(R)}\setminus R(\A)$; and $\pi$ is any positive distribution over $E$. Note that, as far as \textit{perfect} (classical or quantum) assignments are concerned, the particular choice of $\pi$ is irrelevant.
Let $\X$ and $\A$ be the $\rho$-structures corresponding to a CSP instance $\Phi$. Clearly, it holds that $\sat(\Phi)=1$ if and only if $\X\to\A$. For $k\in\N$, we write $\X\leadsto^k\A$ if $\Phi$ admits a perfect $k$-compatible quantum assignment. (If $\X$ and $\A$ have only unary and binary relations, we also use the notation $\X\leadsto^0\A$ to indicate that $\Phi$ admits a perfect quantum assignment, with no local compatibility requirements.)
When $\Phi$ is the CSP instance corresponding to two structures $\X$ and $\A$, we denote $\dist_\Phi$ by $\dist_\X$.
\begin{example}
Let $\GG$ be an undirected graph, and view it as the relational structure having a single, binary relation containing, for each undirected edge $\{g,g'\}$ of $\GG$, the two tuples $(g,g')$ and $(g',g)$. Also, for $n\geq 2$, let $\K_n$ denote the $n$-clique.
Recall that the\textit{ classical chromatic number} of $\GG$ is the minimum $n$ such that $\GG\to\K_n$. The \textit{quantum chromatic number} of $\GG$ is defined as the minimum $n$ such that $\GG\leadsto^0\K_n$~\cite{CameronMNSW07}. 
\end{example}

The next proposition easily follows from a result proved in~\cite{abramsky2017quantum}.
%


\begin{prop}
\label{thm_quantum_assignments_as_projectors}
    Let $\X$ and $\Y$ be $\rho$-structures and let $k\in\N$. Then $\X\leadsto^k\Y$ if and only if
    there exists a finite-dimensional Hilbert space $\HH$ and
    a family of projectors $\{Q_{x,y}\}_{y\in Y}\in\PVM_\HH(Y)$ for each $x\in X$ such that
    \begin{enumerate}
    \item $Q_{x_1,y_1}\cdot Q_{x_2,y_2}\cdot\ldots\cdot Q_{x_{\ar(R)},y_{\ar(R)}}=O$ for each $R\in\rho$, $\bx\in R(\X)$, and $\by\in Y^{\ar(R)}\setminus R(\Y)$;
    \item $[Q_{x,y},Q_{x',y'}]=O$ for each $x,x'\in X$, $y,y'\in Y$ such that $\dist_\X(x,x')\leq k$.
\end{enumerate}
\end{prop}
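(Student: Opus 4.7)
The plan is to unpack the definition of $\qsat_{\mathfrak{Q}}(\Phi) = 1$ for the CSP instance $\Phi$ associated with the structures $\X$ and $\Y$, and to translate the trace condition appearing there into an operator-vanishing condition. First I would fix a tuple $\bx = (x_1, \ldots, x_r) \in R(\X)$ and invoke the PVM identity $\sum_{y \in Y} Q_{x_i,y} = \id_\HH$ coordinatewise: by the distributive law (which does not require commutativity), this yields
\[
\sum_{\by \in Y^r} Q_{x_1,y_1}\cdots Q_{x_r,y_r} = \id_\HH,
\qquad \text{hence}\qquad
\sum_{\by \in Y^r} \tr(Q_{x_1,y_1}\cdots Q_{x_r,y_r}) = \dim(\HH).
\]
Plugging this into the formula~\eqref{eqn_1527_232}, the condition $\qsat_{\mathfrak{Q}}(\Phi) = 1$ becomes equivalent to the requirement that $\tr(Q_{x_1,y_1}\cdots Q_{x_r,y_r}) = 0$ for every $R\in\rho$, $\bx\in R(\X)$, and $\by \in Y^r \setminus R(\Y)$, \emph{provided} that each such trace is a nonnegative real number.

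The second step is to use condition~(2) to bootstrap this trace-vanishing into the operator-vanishing of condition~(1). When $k \geq 1$, any two variables $x_i, x_j$ appearing in a common constraint $\bx \in R(\X)$ are at Gaifman distance exactly $1 \leq k$, so by condition~(2) all projectors in the product $Q_{x_1,y_1}\cdots Q_{x_r,y_r}$ commute pairwise; the product of pairwise commuting projectors is itself a projector, whose trace is zero iff the operator itself is zero. For the remaining $k = 0$ case, which by definition is only considered when $\rho$ has arity at most $2$, the relevant product has the form $Q_{x_1,y_1}Q_{x_2,y_2}$, and a cyclic-trace computation gives $\tr(Q_{x_1,y_1}Q_{x_2,y_2}) = \tr(Q_{x_1,y_1}Q_{x_2,y_2}Q_{x_1,y_1})$, the trace of a positive semidefinite operator, which vanishes iff $Q_{x_2,y_2}Q_{x_1,y_1} = O$, equivalently $Q_{x_1,y_1}Q_{x_2,y_2} = O$. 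In both regimes, condition~(1) is equivalent to the vanishing of all ``bad'' traces.

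Combining the two steps closes both directions. If $\X\leadsto^k\Y$ is witnessed by projectors $\{Q_{x,y}\}$, then $k$-compatibility is condition~(2) verbatim, and condition~(1) follows from $\qsat = 1$ via the trace-to-operator equivalence above. Conversely, if $\{Q_{x,y}\}$ satisfy (1) and (2), condition~(2) gives $k$-compatibility, while (1) together with the PVM identity above forces the sum of the ``good'' traces to equal $\dim(\HH)$, whence $\qsat_{\mathfrak{Q}}(\Phi) = 1$.

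The main obstacle I expect is the lack of a fully uniform argument across the cases $k = 0$ (binary signatures, noncommutative) and $k \geq 1$ (commutativity along each single constraint). Both rely only on elementary linear algebra — namely, that a product of pairwise commuting projectors is a projector, and that $\tr(PQ) = 0$ for two projectors $P, Q$ forces $PQ = O$ — but they have to be invoked separately, and one must be careful to note that the nonnegativity of each individual trace is precisely what allows the global identity $\sum_{\by} \tr(\cdot) = \dim(\HH)$ to pin down each summand indexed by $\by \notin R(\Y)$ to zero.
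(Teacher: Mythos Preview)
Your proof is correct and takes a different route from the paper's. The paper does not argue directly: for $k=1$ it invokes \cite[Theorem~7]{abramsky2017quantum} as a black box, and for $k\geq 2$ it reduces to the case $k=1$ by passing to the auxiliary instance $\Phi^{(k)}$ of Section~\ref{subsec_CSP_nonlocal_games} (which adds trivially-satisfied binary constraints between every pair of variables at Gaifman distance $\leq k$, so that $1$-compatibility for $\Phi^{(k)}$ coincides with $k$-compatibility for $\Phi$). The case $k=0$ is not handled in the paper's proof at all; it is addressed only in the subsequent remark, again by citation to \cite{MancinskaR16}. Your approach, by contrast, unpacks the trace formula~\eqref{eqn_1527_232} directly and uses just two elementary facts --- that a product of pairwise commuting projectors is itself a projector, and that for projectors $P,Q$ one has $\tr(PQ)=\tr(PQP)\geq 0$ with equality iff $PQ=O$. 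This buys self-containment (no external references needed) and treats $k=0$ on the same footing as $k\geq 1$. The paper's route, on the other hand, piggy-backs on the non-local-game reformulation via Theorem~\ref{thm_abramsky_quantum_strategies_quantum_assignments} and Corollary~\ref{cor_1239_2103}, which is conceptually useful elsewhere but makes the proposition itself less transparent.
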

\begin{proof}
    If $k=1$, the result is~\cite[Theorem~7]{abramsky2017quantum}. Suppose that $k\geq 2$, and let $\Phi$ be the CSP instance associated with $\X,\Y$. Let also $\Phi^{(k)}$ be the CSP instance obtained from $\Phi$ by adding binary constraints over variables at distance at most $k$ (as described in Section~\ref{subsec_CSP_nonlocal_games}), and let $\X^{(k)},\Y^{(k)}$ be the structures corresponding to $\Phi^{(k)}$. By definition, $\X\leadsto^k\Y$ is equivalent to $\Phi$ admitting a perfect $k$-compatible quantum assignment. By Corollary~\ref{cor_1239_2103}, the latter is equivalent to $\Phi^{(k)}$ admitting a perfect $1$-compatible quantum assignment---i.e., to $\X^{(k)}\leadsto^1\Y^{(k)}$. The result then follows by applying the proposition to the structures $\X^{(k)},\Y^{(k)}$, and noting that $\dist_{\X^{(k)}}(x,x')\leq 1$ $\Leftrightarrow$ $\dist_\X(x,x')\leq k$ for each $x,x'\in X$.
\end{proof}

\begin{rem}
We point out that Proposition~\ref{thm_quantum_assignments_as_projectors} also holds for $k=0$ if $\X$ and $\Y$ are graphs, as proved in~\cite[Corollary~2.2]{MancinskaR16}. In such case, a perfect quantum assignment corresponds to a perfect quantum strategy for a \textit{symmetric} version of the non-local game described in Section~\ref{subsec_CSP_nonlocal_games}, where both players receive a vertex as the verifier's question (such as the colouring game discussed in the Introduction). In particular, this means that the quantum chromatic number of a graph can be defined through the symmetric version of the game; for its formal description,
we refer the reader to~\cite{CameronMNSW07,MancinskaR16}.
\end{rem}

As a consequence,
we shall encode a perfect $k$-compatible quantum assignment for $\X$ and $\Y$ as a collection $\mathfrak{Q}=\{Q_{x,y}:x\in X,y\in Y\}$ of projectors onto some Hilbert space $\HH$ meeting the conditions of Proposition~\ref{thm_quantum_assignments_as_projectors}.
We will often make use of the following result.

\begin{prop}
\label{prop_compose_classic_quantum_homos}
    Let $\X,\X',\Y,\Y'$ be $\rho$-structures such that $\X\to\X'\leadsto^k\Y'\to\Y$ for some $k\in\N$. Then $\X\leadsto^k\Y$.
\end{prop}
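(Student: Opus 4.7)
The plan is to realise the perfect $k$-compatible quantum assignment for $\X\leadsto^k\Y$ by transporting the given one for $\X'\leadsto^k\Y'$ along the two classical homomorphisms. Write $f:\X\to\X'$ and $g:\Y'\to\Y$ for the homomorphisms, and let $\mathfrak{Q}'=\{Q'_{x',y'}\}$ be a family of projectors on some Hilbert space $\HH$ witnessing $\X'\leadsto^k\Y'$ via Proposition~\ref{thm_quantum_assignments_as_projectors}. For each $x\in X$ and $y\in Y$, define
\begin{align*}
Q_{x,y}\;=\;\sum_{y'\in g^{-1}(y)}Q'_{f(x),y'}.
\end{align*}
I will then check that $\mathfrak{Q}=\{Q_{x,y}\}$ satisfies the three conditions of Proposition~\ref{thm_quantum_assignments_as_projectors} for the pair $\X,\Y$ at compatibility level $k$.

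First I would verify that each $\{Q_{x,y}\}_{y\in Y}$ is a PVM on $\HH$: since $\{Q'_{f(x),y'}\}_{y'\in Y'}$ is a PVM, its elements are pairwise orthogonal projectors, so any subsum is again a projector; moreover, since $g^{-1}$ partitions $Y'$ over $y\in Y$, we get $\sum_{y\in Y}Q_{x,y}=\sum_{y'\in Y'}Q'_{f(x),y'}=\id_\HH$. Next, for condition (1), fix $R\in\rho$, $\bx\in R(\X)$, and $\by\in Y^{\ar(R)}\setminus R(\Y)$. Expanding the product,
\begin{align*}
Q_{x_1,y_1}\cdots Q_{x_{\ar(R)},y_{\ar(R)}}\;=\;\sum_{\by'\in g^{-1}(y_1)\times\cdots\times g^{-1}(y_{\ar(R)})}Q'_{f(x_1),y'_1}\cdots Q'_{f(x_{\ar(R)}),y'_{\ar(R)}}.
\end{align*}
Since $f$ is a homomorphism, $f(\bx)\in R(\X')$; and since $g$ is a homomorphism, $\by'\in R(\Y')$ would force $g(\by')=\by\in R(\Y)$, contradicting the assumption on $\by$. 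Hence every index tuple in the sum satisfies $\by'\notin R(\Y')$, and by condition (1) for $\mathfrak{Q}'$ at the constraint $f(\bx)$, every summand vanishes.

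Finally, for condition (2), take $x,x'\in X$ with $\dist_\X(x,x')\leq k$ and any $y,y'\in Y$. Because $f$ is a homomorphism, any path in the Gaifman graph of $\X$ from $x$ to $x'$ is sent to a walk (possibly with repeated vertices) in the Gaifman graph of $\X'$ from $f(x)$ to $f(x')$; hence $\dist_{\X'}(f(x),f(x'))\leq\dist_\X(x,x')\leq k$. Condition (2) for $\mathfrak{Q}'$ gives $[Q'_{f(x),y''},Q'_{f(x'),y'''}]=O$ for all $y''\in g^{-1}(y)$ and $y'''\in g^{-1}(y')$, and linearity yields $[Q_{x,y},Q_{x',y'}]=O$. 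An application of Proposition~\ref{thm_quantum_assignments_as_projectors} then gives $\X\leadsto^k\Y$. No step looks like a genuine obstacle; the only point that needs mild care is observing that homomorphisms are non-expansive on Gaifman distance, which is what keeps the compatibility level $k$ intact under the composition.
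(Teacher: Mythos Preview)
Your proposal is correct and follows essentially the same approach as the paper: both define the new projectors by $Q_{x,y}=\sum_{y'\in g^{-1}(y)}Q'_{f(x),y'}$, verify the PVM and perfectness conditions by expanding the product and using that $f$ and $g$ preserve relations, and establish $k$-compatibility via the non-expansiveness of homomorphisms on Gaifman distance. The only cosmetic difference is that you frame everything through Proposition~\ref{thm_quantum_assignments_as_projectors}, whereas the paper works directly with the definition of a perfect $k$-compatible quantum assignment.
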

\begin{proof}
    Let $f:\X\to\X'$ and $g:\Y'\to\Y$ be homomorphisms, and let $\mathfrak{Q}$ be a perfect $k$-compatible quantum assignment for $\X',\Y'$ over some Hilbert space $\HH$. Consider, for each $x\in X$ and $y\in Y$, the linear application
    \begin{align}
    \label{eqn_1244_1912}
        W_{x,y}
        =
        \sum_{y'\in g^{-1}(y)}Q_{f(x),y'}.
    \end{align}
    We claim that the set $\mathfrak{W}=\{W_{x,y}\suchThat x\in X,y\in Y\}$ yields a perfect $k$-compatible quantum assignment for $\X,\Y$ (over the same Hilbert space $\HH$).
First of all, the summands in~\eqref{eqn_1244_1912} are mutually orthogonal projectors and, thus, the linear applications in $\mathfrak{W}$ are projectors. The fact $\sum_{y\in Y}W_{x,y}=\id_\HH$ follows from the analogous property of $\mathfrak{Q}$. Hence, $\mathfrak{W}$ is a quantum assignment. 
Take now an $r$-ary symbol $R\in\rho$ and two tuples $\bx\in R(\X)$ and $\by\in Y^r\setminus R(\Y)$. Suppose, for the sake of contradiction, that
\begin{align*}
    O
    \neq
    \prod_{i\in [r]}W_{x_i,y_i}
    =
    \prod_{i\in [r]}\,\sum_{y'\in g^{-1}(y_i)}Q_{f(x_i),y'}.
\end{align*}
We deduce that there exist elements $y'_i\in g^{-1}(y_i)$ for each $i\in [r]$ such that
\begin{align}
\label{eqn_1912_1305}
    O\neq \prod_{i\in [r]}Q_{f(x_i),y'_i}.
\end{align}
Consider now the tuple $\by'=(y'_1,\dots,y'_r)$. Since $f$ is a homomorphism and $\bx\in R(\X)$, it holds that $f(\bx)\in R(\X')$. Since $\mathfrak{Q}$ is a perfect quantum assignment, we deduce from~\eqref{eqn_1912_1305} that $\by'\in R(\Y')$. Since $g$ is a homomorphism, we conclude that $\by=g(\by')\in R(\Y)$, a contradiction. This means that the assignment $\mathfrak{W}$ is perfect.
Finally, to show that $\mathfrak{W}$ is $k$-compatible, observe that $\dist_{\X'}(f(x),f(\tilde x))\leq \dist_\X(x,\tilde x)$ for each $x,\tilde x\in X$ since $f$ is a homomorphism. Hence, the $k$-compatibility of $\mathfrak{Q}$ implies that
\begin{align*}
    \Big[W_{x,y},W_{\tilde x,\tilde y}\Big]
    =
    \Big[\sum_{y'\in g^{-1}(y)}Q_{f(x),y'},\sum_{\tilde y'\in g^{-1}(\tilde y)}Q_{f(\tilde x),\tilde y'}\Big]
    =
    \sum_{y'\in g^{-1}(y)}\,\sum_{\tilde y'\in g^{-1}(\tilde y)}\Big[Q_{f(x),y'},Q_{f(\tilde x),\tilde y'}\Big]
    =
    O
\end{align*}
for each $x,\tilde x\in X$ with $\dist_\X(x,\tilde x)\leq k$ and each $y,\tilde y\in Y$.
Therefore, the assignment $\mathfrak{W}$ witnesses that $\X\leadsto^k\Y$, as claimed.
\end{proof}
\begin{rem}
    Following~\cite{ciardo_quantum_minion}, we might alternatively prove Proposition~\ref{prop_compose_classic_quantum_homos} by viewing a quantum assignment $\X\leadsto^k\Y$ as a classical assignment $\X^{(k)}\to\freeQ(\Y^{(k)})$, where $\X^{(k)}$ and $\Y^{(k)}$ are as in the proof of Proposition~\ref{thm_quantum_assignments_as_projectors} and $\freeQ$ denotes the \textit{free structure} generated by the \textit{quantum minion} $\Qminion$. (For the definitions, we refer the reader to~\cite{ciardo_quantum_minion}.) The proposition, then, easily follows by using the transitivity of classical homomorphisms. 
    We also point out that Proposition~\ref{prop_compose_classic_quantum_homos} cannot be strengthened to yield transitivity of perfect locally compatible quantum assignments, in the following sense.
    Suppose that $\X\leadsto^k\Y\leadsto^{k}\W$ is witnessed by two perfect $k$-compatible quantum assignments $\mathfrak{Q}=\{Q_{x,y}\}$ for $\X,\Y$ and $\mathfrak{Q}'=\{Q'_{y,w}\}$ for $\Y,\W$. Following~\cite{abramsky2017quantum}, one might define the perfect quantum assignment $\mathfrak{W}=\{W_{x,w}\}$ for $\X,\W$ by setting
    \begin{align*}
        W_{x,w}=\sum_{y\in Y}Q_{x,y}\otimes Q'_{y,w}.
    \end{align*}
    However, the $k$-compatibility of $\mathfrak{Q}$ and $\mathfrak{Q}'$ is not preserved in $\mathfrak{W}$, as the projectors in the latter assignment depend on \textit{all} projectors $Q'_{y,w}$ in $\mathfrak{Q}'$, rather than only those corresponding to vertices $y,y'\in Y$ having small distance.
\end{rem}

\section{Quantum adjunction of Pultr functors}
\label{sec_pultr_functors}
The goal of this section is to prove a quantum version of Theorem~\ref{thm_pultr_adjoints} on the adjunction of Pultr functors, which shall be needed in later sections to transfer pseudo-telepathy to the colouring game. We start by formally defining Pultr functors and Pultr templates.

\begin{defn}
\label{defn_pultr_template}
Let $\rho,\tau$ be two relational signatures. A $(\rho,\tau)$-\textit{Pultr template} $\mathfrak{T}$ consists of the following data: 
\begin{itemize}
    \item a $\rho$-structure $\A$;
    \item a $\rho$-structure $\B_{T}$ for each $T\in\tau$;
    \item a homomorphism $\epsilon_{i,T}:\A\to\B_{T}$ for each $T\in\tau$ and each $i\in [\ar(T)]$.
\end{itemize}
\end{defn}

We denote by $\mathbf{Str}(\rho)$ the set of all relational structures on signature $\rho$.

\begin{defn}
\label{defn_central_pultr_functor}
    Let $\mathfrak{T}$ be a $(\rho,\tau)$-Pultr template. The \textit{central Pultr functor} corresponding to $\mathfrak{T}$ is the map $\Gamma:\mathbf{Str}(\rho)\to\mathbf{Str}(\tau)$ defined as follows. For a $\rho$-structure $\X$, the domain of $\Gamma\X$ is the set of homomorphisms $h:\A\to\X$. For a symbol $T\in\tau$ of arity $t$, $T(\Gamma\X)$ contains all tuples $(\ell\circ\epsilon_{1,T},\dots,\ell\circ\epsilon_{t,T})$ for any homomorphism $\ell:\B_{T}\to\X$. 
\end{defn}

\begin{defn}
\label{defn_left_pultr_functor}
    Let $\mathfrak{T}$ be a $(\rho,\tau)$-Pultr template. The \textit{left Pultr functor} corresponding to $\mathfrak{T}$ is the map $\Lambda:\mathbf{Str}(\tau)\to\mathbf{Str}(\rho)$ defined as follows. For a $\tau$-structure $\X$, consider the $\rho$-structure $\mathbf{Z}$ given as the disjoint union of a copy $\A^{(x)}$ of $\A$ for each $x\in X$, and a copy $\B_T^{(\bx)}$ of $\B_T$ for each $T\in\tau$ and each $\bx\in T(\X)$. 
    Let $\sim$ be the coarsest equivalence relation over the vertices of $\mathbf{Z}$ that identifies the vertex $a^{(x_j)}$ of $\A^{(x_j)}$
    with the vertex $(\epsilon_{j,T}(a))^{(\bx)}$ of $\B_T^{(\bx)}$ for each $T\in\tau$, $\bx\in T(\X)$, $j\in [\ar(T)]$, and $a\in A$. We define $\Lambda\X$ as the quotient structure $\mathbf{Z}/\sim$.
\end{defn}
While the vertices of $\Lambda\X$ are formally equivalence classes of vertices of $\mathbf{Z}$, we will omit the equivalence class symbol for the sake of readability. For example, the expression ``the vertex $a^{(x)}$ of $\Lambda\X$'' should be understood as ``the equivalence class under $\sim$ of the vertex $a^{(x)}$ of $\mathbf{Z}$''.
%
%
A relational structure $\X$ is connected if its Gaifman graph is connected. In that case, the diameter of $\X$ (in symbols, $\diam(\X)$) is the diameter of the Gaifman graph.
We say that a $(\rho,\tau)$-Pultr template $\mathfrak{T}$ is \textit{connected} if the following two conditions hold:
\begin{itemize}
    \item Each of the $\rho$-structures $\A$, $\{\B_{T}\}_{T\in\tau}$ is connected;
    \item for each $R\in\rho$, $T\in\tau$, and $\bb\in R(\B_{T})$ there exists some $\ba\in R(\A)$ and some $i\in[\ar(T)]$ such that $\epsilon_{i,T}(\ba)=\bb$.
\end{itemize}
We define the diameter $\diam(\mathfrak{T})$ of a connected $(\rho,\tau)$-Pultr template $\mathfrak{T}$ as the maximum among the diameters of $\A$ and of $\B_{T}$ for each $T\in\tau$.

\begin{example}
\label{ex_line_digraph_pultr}
Let $\rho$ be the signature containing a unique, binary symbol $R$, and consider the $(\rho,\rho)$-Pultr template $\mathfrak{T}$ such that $\A$ is a directed path of length $1$ with edge $(a_1,a_2)$; $\B_R$ is a directed path of length $2$ with edges $(b_1,b_2)$ and $(b_2,b_3)$; the homomorphisms $\epsilon_{1,R}$ and $\epsilon_{2,R}:\A\to\B_R$ are defined as follows:
    $\epsilon_{1,R}$ maps $a_1$ to $b_1$ and $a_2$ to $b_2$; $\epsilon_{2,R}$ maps $a_1$ to $b_2$ and $a_2$ to $b_3$.
Then, the left and central Pultr functors corresponding to $\mathfrak{T}$ are the functors $\Lambda''$ and $\Gamma''$ from Example~\ref{ex_some_pultr_functors}. Observe that the Pultr template $\mathfrak{T}$ is connected.
\end{example}

\begin{thm*}
[Theorem~\ref{thm_adjunction_pultr_1} restated]
Let $\mathfrak{T}$ be a connected $(\rho,\tau)$-Pultr template, let $\X$ and $\Y$ be a $\tau$-structure and a $\rho$-structure, respectively, let $k\in\N$, let $k'=(k+1)\cdot\diam(\mathfrak{T})$, and suppose that $\Lambda\X\leadsto^{k'}\Y$. Then $\X\leadsto^{k}\Gamma\Y$.
\end{thm*}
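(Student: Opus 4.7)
The plan is to construct a $k$-compatible quantum assignment for $\X,\Gamma\Y$ by \emph{restricting} the given one to each of the copies $\A^{(x)}$ sitting inside $\Lambda\X$. Write $\mathfrak{Q}=\{Q_{v,y}\}$ for the $k'$-compatible quantum assignment witnessing $\Lambda\X\leadsto^{k'}\Y$. For $x\in X$ and $h\in\Gamma\Y$ (i.e.\ $h:\A\to\Y$ a homomorphism), I define
\[
W_{x,h} \;=\; \prod_{a\in A} Q_{a^{(x)},\,h(a)}.
\]
Since all the vertices $a^{(x)}$ (with $a$ ranging over $A$) lie in the copy $\A^{(x)}$, their pairwise distance in $\Lambda\X$ is at most $\diam(\A)\leq\diam(\mathfrak{T})\leq k'$, so $k'$-compatibility of $\mathfrak{Q}$ makes the factors commute pairwise. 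Hence the product is well-defined (independent of ordering) and is itself a projector.

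First I need to check that $\{W_{x,h}\}_{h\in\Gamma\Y}$ is a PVM for each $x\in X$. Comparing with the sum over all functions $f:A\to Y$ and using commutativity, I would argue that non-homomorphism summands vanish: if $f$ violates some relation, meaning there exist $R\in\rho$ and $\ba\in R(\A)$ with $f(\ba)\notin R(\Y)$, then the tuple $(a_1^{(x)},\dots,a_{\ar(R)}^{(x)})$ lies in $R(\Lambda\X)$ (since $\A^{(x)}$ embeds as a substructure of $\Lambda\X$), and perfectness of $\mathfrak{Q}$ forces the corresponding sub-product of factors to be $O$. Commutativity propagates this to the full product, so
\[
\sum_{h\in\Gamma\Y} W_{x,h} \;=\; \sum_{f:A\to Y}\prod_{a\in A} Q_{a^{(x)},f(a)} \;=\; \prod_{a\in A}\sum_{y\in Y} Q_{a^{(x)},y} \;=\; \id_\HH.
\]

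The core of the proof is verifying the relational condition of Proposition~\ref{thm_quantum_assignments_as_projectors} for $\mathfrak{W}$: for every $T\in\tau$, $\bx\in T(\X)$, and $\bh\in(\Gamma\Y)^{\ar(T)}\setminus T(\Gamma\Y)$, I need $\prod_i W_{x_i,h_i}=O$. Using the identifications defining $\Lambda\X$, every factor $Q_{a^{(x_i)},h_i(a)}$ rewrites as $Q_{(\epsilon_{i,T}(a))^{(\bx)},h_i(a)}$, placing all projectors within the single copy $\B_T^{(\bx)}$, whose diameter is at most $\diam(\mathfrak{T})\leq k'$, so they commute pairwise. If the full product were nonzero, then whenever two factors share a vertex---i.e.\ $\epsilon_{i,T}(a)=\epsilon_{i',T}(a')$---PVM-orthogonality would force $h_i(a)=h_{i'}(a')$. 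This consistency lets one define a map $\ell:B_T\to Y$ with $\ell\circ\epsilon_{i,T}=h_i$ for every $i$ (extending arbitrarily outside the union of images of the $\epsilon_{i,T}$). The two connectedness assumptions on $\mathfrak{T}$ then force $\ell$ to be a homomorphism: every $\bb\in R(\B_T)$ is of the form $\epsilon_{i,T}(\ba)$ for some $\ba\in R(\A)$ and some $i$, giving $\ell(\bb)=h_i(\ba)\in R(\Y)$ since $h_i$ is a homomorphism. But then $\bh=(\ell\circ\epsilon_{1,T},\dots,\ell\circ\epsilon_{\ar(T),T})\in T(\Gamma\Y)$, contradicting the hypothesis. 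This step---where connectedness of $\mathfrak{T}$ plays the decisive role, both in making $\ell$ well-defined and in lifting it to a homomorphism---is where I expect the main subtlety of the argument to lie.

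It remains to check $k$-compatibility of $\mathfrak{W}$. For $x,x'\in X$ with $\dist_\X(x,x')\leq k$, take a Gaifman path $x=x_0,x_1,\dots,x_m=x'$ of length $m\leq k$, with each consecutive pair $(x_{i-1},x_i)$ appearing in some tuple $\bx_i\in T_i(\X)$. Travelling from a vertex of $\A^{(x_{i-1})}$ to a vertex of $\A^{(x_i)}$ within the single copy $\B_{T_i}^{(\bx_i)}$ (both are embedded there by the identifications defining $\Lambda\X$) costs at most $\diam(\mathfrak{T})$; after $m$ such jumps we reach $\A^{(x')}$, and one final hop within $\A^{(x')}$ (again of cost at most $\diam(\mathfrak{T})$) reaches $a'^{(x')}$. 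This yields $\dist_{\Lambda\X}(a^{(x)},a'^{(x')})\leq (m+1)\diam(\mathfrak{T})\leq (k+1)\diam(\mathfrak{T})=k'$ for every $a,a'\in A$. Hence $k'$-compatibility of $\mathfrak{Q}$ makes every projector appearing in $W_{x,h}$ commute with every projector appearing in $W_{x',h'}$, so $[W_{x,h},W_{x',h'}]=O$. Proposition~\ref{thm_quantum_assignments_as_projectors} then delivers $\X\leadsto^{k}\Gamma\Y$.
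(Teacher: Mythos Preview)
Your proof is correct and follows essentially the same approach as the paper: the same definition $W_{x,h}=\prod_{a\in A}Q_{a^{(x)},h(a)}$, the same use of $k'$-compatibility within each $\A^{(x)}$ and $\B_T^{(\bx)}$ to get commutativity, the same construction of the homomorphism $\ell:\B_T\to\Y$ from the connectedness hypotheses, and the same path-chasing distance bound $(m+1)\diam(\mathfrak{T})\leq k'$ for $k$-compatibility. One cosmetic remark: your clause ``extending arbitrarily outside the union of images of the $\epsilon_{i,T}$'' is superfluous, since the two connectedness conditions together already force that union to be all of $B_T$ (every vertex lies in some relation tuple, and every relation tuple is hit by some $\epsilon_{i,T}$); the paper simply defines $\ell$ on all of $B_T$ directly via this observation.
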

\begin{proof}
    Let $\mathfrak{Q}=\{Q_{z,y}\suchThat z\in\Lambda\X,\,y\in Y\}$ be a perfect $k'$-compatible quantum assignment for the $\Lambda\X,\Y$ game. Given a vertex $x\in X$ and a map $h:A\to Y$ (not necessarily a homomorphism $\A\to\Y$), consider the linear application
    \begin{align}
    \label{eqn_1445_2911}
        W_{x,h}=\prod_{a\in A}Q_{a^{(x)},h(a)}.
    \end{align}
    Consider the set $\mathfrak{W}=\{W_{x,h}\suchThat x\in X, h:\A\to\Y\}$. We aim to show that $\mathfrak{W}$ is a quantum assignment witnessing that $\X\leadsto^k\Gamma\Y$.

    \begin{claim}
    \label{eqn_2911_1449}
     All linear applications in $\mathfrak{W}$ are projectors. 
    \end{claim}

    \begin{proof}
        For each $a,a'\in A$, $x\in X$, we have
        \begin{align*}
            \dist_{\Lambda\X}(a^{(x)},a'^{(x)})\leq \dist_{\A}(a,a')\leq\diam(\A)\leq\diam(\mathfrak{T})\leq k'.
        \end{align*}
        Since the strategy $\mathfrak{Q}$ is $k'$-compatible, we deduce that the projectors $Q_{a^{(x)},h(a)}$ and $Q_{a'^{(x)},h(a')}$ commute for each $h:\A\to \Y$. Hence, the product in~\eqref{eqn_1445_2911} is a projector.
    \end{proof}

    \begin{claim}
    \label{claim_2911_1529}
        If $h:A\to Y$ is not a homomorphism from $\A$ to $\Y$, $W_{x,h}=O$ for each $x\in X$.
    \end{claim}
    \begin{proof}
        Since $h$ is not a homomorphism, there exists some symbol $R\in\rho$ of arity $r$ and some tuple $\ba\in R(\A)$ for which $h(\ba)\not\in R(\Y)$. By definition of $\Lambda\X$, we have $\ba^{(x)}=(a_1^{(x)},\dots,a_r^{(x)})\in R(\Lambda\X)$. Since $\mathfrak{Q}$ is a perfect strategy, it follows that
        \begin{align*}
            \prod_{i\in [r]}Q_{a_i^{(x)},h(a_i)}=O.
        \end{align*}
        Let $\tilde A\subseteq A$ be the set of all elements of $A$ that do not appear in the tuple $\ba$. Using the $\diam(\A)$-compatibility of $\mathfrak{Q}$, we find
        \begin{align*}
            W_{x,h}=
            \prod_{a\in A}Q_{a^{(x)},h(a)}
            =
            \Big(\prod_{a\in \tilde A}Q_{a^{(x)},h(a)}\Big)\cdot\Big(\prod_{i\in [r]}Q_{a_i^{(x)},h(a_i)}\Big)
            =
            O,
        \end{align*}
        as claimed.
    \end{proof}

    \begin{claim}
    \label{claim_1559_0803}
        $\mathfrak{W}$ is a quantum assignment for the $\X,\Gamma\Y$ game.
    \end{claim}

    \begin{proof}
%
%
For each vertex $x\in X$, we have
\begin{align*}
    \sum_{h:\A\to\Y}W_{x,h}
    =
    \sum_{h:\A\to\Y}\prod_{a\in A}Q_{a^{(x)},h(a)}
    =
    \sum_{h:A\to Y}\prod_{a\in A}Q_{a^{(x)},h(a)}
    =
    \prod_{a\in A}\sum_{y\in Y}Q_{a^{(x)},y}
    =\prod_{a\in A}\id_\HH
    =\id_\HH,
\end{align*}
where the second equality holds because of Claim~\ref{claim_2911_1529}. It follows that $\mathfrak{W}$ is a quantum assignment, as claimed.
\end{proof}

\begin{claim}
\label{claim_1655_2911}
    The quantum assignment $\mathfrak{W}$ is perfect.
\end{claim}

\begin{proof}
    Take a symbol $T\in\tau$ of arity $t$, a tuple $\bx=(x_1,\dots,x_{t})\in T(\X)$, and a tuple $\bh=(h_1,\dots,h_{t})$ of homomorphisms $h_i:\A\to\Y$ such that $\bh\not\in T(\Gamma\Y)$.
 The claim would follow if we manage to show that 
\begin{align}
    \label{eqn_1652_2111}
    \prod_{i\in [t]}W_{x_i,h_i}=O.
\end{align}

Suppose, for the sake of contradiction, that~\eqref{eqn_1652_2111} is false.
Consider the $\rho$-structure $\B_{T}$. We aim to define a homomorphism $\ell:\B_T\to\Y$. 
Using the fact that the Pultr template $\mathfrak{T}$ is connected, we proceed as follows. Take $b\in B_T$. Since $\B_T$ is connected, there must exist some symbol $R\in\rho$ of arity $r$, some tuple $\bb\in R(\B_T)$, and some index $j\in [r]$ such that $b=b_j$. Using the second property of connected Pultr templates, we deduce that $\bb=\epsilon_{i,T}(\ba)$ for some $\ba\in R(\A)$ and some $i\in [t]$. Recalling that $h_i$ is a map from $A$ to $Y$, we define
\begin{align}
\label{eqn_1611_0803}
    \ell(b)=h_i(a_j).
\end{align}

\begin{subclaim}
    The map $\ell:B_T\to Y$ is well defined.
\end{subclaim}
\begin{proof}
    Suppose that there exist $\tilde R\in\rho$ of arity $\tilde r$, $\tilde\bb\in \tilde R(\B_T)$, $\tilde j\in [\tilde r]$, $\tilde\ba\in \tilde R(\A)$, and $\tilde i\in [t]$ such that $b=\tilde b_{\tilde j}$ and $\tilde\bb=\epsilon_{\tilde i,T}(\tilde\ba)$. Observe that
    \begin{align}
    \label{eqn_1604_2911}
        a_j^{(x_i)}
        =
        [\epsilon_{i,T}(a_j)]^{(\bx)}
        =
        b_j^{(\bx)}
        =
        b^{(\bx)}
        =
        \tilde b_{\tilde j}^{(\bx)}
        =
        [\epsilon_{\tilde i,T}(\tilde a_{\tilde j})]^{(\bx)}
        =
        \tilde a_{\tilde j}^{(x_{\tilde i})}.
    \end{align}
    Furthermore, given any two elements $\alpha,\alpha'\in A$ and any two indices $\iota,\iota'\in [t]$, we have
    \begin{align*}
        \dist_{\Lambda\X}(\alpha^{(x_\iota)},\alpha'^{(x_{\iota'})})
        =
        \dist_{\Lambda\X}([\epsilon_{\iota,T}(\alpha)]^{(\bx)},[\epsilon_{\iota',T}(\alpha')]^{(\bx)})
        \leq 
        \diam(\B_T)
        \leq \diam(\mathfrak{T})
        \leq k'.
    \end{align*}
    As a consequence, the terms in the product 
    \begin{align*}
        \prod_{i\in[t]}W_{x_i,h_i}=
        \prod_{i\in[t]}\prod_{a\in A}Q_{a^{(x_i)},h_i(a)}
    \end{align*}
    can be rearranged in a way that the two projectors $Q_{a_j^{(x_i)},h_i(a_j)}$ and $Q_{\tilde a_{\tilde j}^{(x_{\tilde i})},h_{\tilde i}(\tilde a_{\tilde j})}$ appear consecutively. As we are assuming that the product above is nonzero, we deduce from~\eqref{eqn_1604_2911} and from the fact that $\mathfrak{Q}$ is a perfect quantum assignment that $h_i(a_j)=h_{\tilde i}(\tilde a_{\tilde j})$. This means that the function $\ell:B_T\to Y$ is well defined, as claimed. 
\end{proof}

\begin{subclaim}
\label{claim_1925_2911}
    The map $\ell$ is a homomorphism from $\B_T$ to $\Y$.
\end{subclaim}
\begin{proof}
    Take a symbol $R\in\rho$ and a tuple $\bb\in R(\B_T)$. Using the connectivity of $\mathfrak{T}$, we find some $\ba\in R(\A)$ and some $i\in [t]$ such that $\bb=\epsilon_{i,T}(\ba)$.
    It then follows from the description of $\ell$ that $\ell(\bb)=h_i(\ba)$. Since 
    $h_i$ is a homomorphism from $\A$ to $\Y$, it preserves $R$. Therefore, we have that $h_i(\ba)\in R(\Y)$, thus proving the claim.
\end{proof}

\begin{subclaim}
\label{claim_1924_2911}
    The equality $\ell\circ\epsilon_{i,T}=h_i$ holds for each $i\in [t]$.
\end{subclaim}
\begin{proof}
    Take $a\in A$, and call $b=\epsilon_{i,T}(a)$. Since the $\rho$-structure $\A$ is connected, there exists $R\in\rho$ of arity $r$, $\ba\in R(\A)$, and $j\in[r]$ such that $a=a_j$. Define $\bb=\epsilon_{i,T}(\ba)$.
    Since $\epsilon_{i,T}$ is a homomorphism, $\bb\in R(\B_T)$. Note also that $b=b_j$. By definition of $\ell$, we have that $\ell(b)=h_i(a_j)$. Therefore,
    \begin{align*}
        h_i(a)
        =
        h_i(a_j)
        =
        \ell(b)
        =
        \ell(b_j)
        =
        \ell(\epsilon_{i,T}(a_j))
        =
        \ell\circ\epsilon_{i,T}(a),
    \end{align*}
    so the maps $\ell\circ\epsilon_{i,T}$ and $h_i$ coincide.
\end{proof}
Using Claim~\ref{claim_1924_2911}, we find
\begin{align*}
    \bh
    =
    (h_1,\dots,h_t)
    =
    (\ell\circ\epsilon_{1,T},\dots,\ell\circ\epsilon_{t,T}).
\end{align*}
Since, by Claim~\ref{claim_1925_2911}, $\ell$ is a homomorphism $\B_T\to\Y$, it follows from Definition~\ref{defn_central_pultr_functor} that $\bh\in T(\Gamma\Y)$, contradicting our assumption. This concludes the proof of Claim~\ref{claim_1655_2911}.
\end{proof}

\begin{claim}
    The quantum assignment $\mathfrak{W}$ is $k$-compatible.
\end{claim}

\begin{proof}
    Take two vertices $x,x'\in X$ such that $\dist_\X(x,x')=d\leq k$. We need to show that
    \begin{align}
        \label{1930_2911}
        \Big[W_{x,h},W_{x',h'}\Big]=O
    \end{align}
    for each $h,h':\A\to\Y$. The fact that $\dist_\X(x,x')=d$ means that there exist
    \begin{itemize}
        \item symbols $T_1,\dots,T_d\in\tau$;
        \item vertices $x_0=x,x_1,\dots,x_d=x'\in X$;
        \item tuples $\bx^{(i)}\in T_i(\X)$ for each $i\in [d]$;
        \item indices $\alpha_i,\omega_i\in[\ar(T_i)]$ for each $i\in[d]$
    \end{itemize}
    such that
        $x_{i-1}=x^{(i)}_{\alpha_i}$ and $x_i=x^{(i)}_{\omega_i}$
    for each $i\in [d]$. Take $a,a'\in A$. Using the triangle inequality, we obtain
    \begin{align}
    \label{eqn_1953_2911}
    \notag
        \dist_{\Lambda\X}(a^{(x)},a'^{(x')})
        &\leq
        \dist_{\Lambda\X}(a^{(x)},a^{(x')})+
        \dist_{\Lambda\X}(a^{(x')},a'^{(x')})\\
        &\leq \sum_{i\in [d]}\dist_{\Lambda\X}(a^{(x_{i-1})},a^{(x_i)})+
        \dist_{\Lambda\X}(a^{(x')},a'^{(x')}).
    \end{align}
    For each $i\in [d]$, we have
    \begin{align*}
        a^{(x_{i-1})}
        &=
        a^{(x^{(i)}_{\alpha_i})}
        =
        [\epsilon_{\alpha_i,T_i}(a)]^{\bx^{(i)}}\quad\quad\mbox{and}\quad\quad
        a^{(x_{i})}
        =
        a^{(x^{(i)}_{\omega_i})}
        =
        [\epsilon_{\omega_i,T_i}(a)]^{\bx^{(i)}}.
    \end{align*}
    Hence,

    \begin{align*}
        \dist_{\Lambda\X}(a^{(x_{i-1})},a^{(x_i)})
        &=
        \dist_{\Lambda\X}([\epsilon_{\alpha_i,T_i}(a)]^{\bx^{(i)}},[\epsilon_{\omega_i,T_i}(a)]^{\bx^{(i)}})
        \leq 
        \dist_{\B_T}(\epsilon_{\alpha_i,T_i}(a),\epsilon_{\omega_i,T_i}(a))\\
        &\leq
        \diam(\B_T)
        \leq
        \diam(\mathfrak{T}).
    \end{align*}
    Also, reasoning as in the proof of Claim~\ref{eqn_2911_1449}, we have
    \begin{align*}
        \dist_{\Lambda\X}(a^{(x')},a'^{(x')})
        \leq
        \dist_\A(a,a')
        \leq
        \diam(\A)
        \leq\diam(\mathfrak{T}).
    \end{align*}
    As a consequence,~\eqref{eqn_1953_2911} yields
    \begin{align*}
        \dist_{\Lambda\X}(a^{(x)},a'^{(x')})
        &\leq (d+1)\diam(\mathfrak{T})\leq (k+1)\diam(\mathfrak{T})=k'.
    \end{align*}
    We can then leverage the $k'$-compatibility of $\mathfrak{Q}$ to conclude that the projectors $Q_{a^{(x)},y}$ and $Q_{a'^{(x')},y'}$ commute for each $y,y'\in Y$ and, thus, 
\begin{align*}
    \Big[W_{x,h},W_{x',h'}\Big]
    =
    \Big[\prod_{a\in A}Q_{a^{(x)},h(a)},\prod_{a'\in A}Q_{a'^{(x')},h'(a')}\Big]
    =
    O,
\end{align*}
as claimed.
\end{proof}
    In summary, we have shown that $\mathfrak{W}$ yields a perfect $k$-compatible quantum assignment for the $\X,\Gamma\Y$ game, which means that $\X\leadsto^k\Gamma\Y$. 
\end{proof}


We now turn to prove the converse direction of the quantum Pultr adjunction.
We say that a $(\rho,\tau)$-Pultr template $\mathfrak{T}$ is \textit{faithful} if the following two properties hold for each $T\in\tau$:
\begin{itemize}
    \item $B_{T}=\bigsqcup_{i\in [\ar(T)]}\Imm(\epsilon_{i,T})$; 
    \item $\epsilon_{i,T}$ yields an isomorphism\footnote{I.e., a bijective homomorphism whose inverse map is a homomorphism.} from $\A$ to the substructure of $\B_T$ induced by $\Imm(\epsilon_{i,T})$ for each $i\in[\ar(T)]$.
\end{itemize}

\begin{example}
Fix a digraph $\GG$, and let $n$ be the number of its vertices. 
Let $\rho$ be the signature containing a unique, binary symbol $R$, and consider the $(\rho,\rho)$-Pultr template $\mathfrak{T}$ such that $\A$ consists of $n$ isolated vertices, $\B_R$ is the direct product $\GG\times\K_2$ (where $\K_2$ is the graph consisting of a single, undirected edge), and $\epsilon_{1,R},\epsilon_{2,R}$ are the two natural injections of $\A$ into $\B_R$.
As described in~\cite[\S2]{foniok2013adjoint}, the left and central Pultr functors corresponding to $\mathfrak{T}$ are the functors $\Lambda'$ and $\Gamma'$ from Example~\ref{ex_some_pultr_functors}. Observe that the Pultr template $\mathfrak{T}$ is faithful.
\end{example}

\begin{thm*}
[Theorem~\ref{thm_quantum_adjunction_faithful} restated]
    Let $\mathfrak{T}$ be a faithful $(\rho,\tau)$-Pultr template, let $\X$ and $\Y$ be a $\tau$-structure and a $\rho$-structure, respectively, let $k\in\N$, and suppose that $\X\leadsto^{k}\Gamma\Y$. Then $\Lambda\X\leadsto^{k}\Y$.
\end{thm*}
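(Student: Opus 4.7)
The plan is to construct a perfect $k$-compatible quantum assignment $\mathfrak{Q}$ for the $\Lambda\X,\Y$ game out of a given perfect $k$-compatible quantum assignment $\mathfrak{W}=\{W_{x,h}\}$ for the $\X,\Gamma\Y$ game on a Hilbert space $\HH$ (where the index $h$ ranges over vertices of $\Gamma\Y$, i.e., homomorphisms $h:\A\to\Y$). First I would use faithfulness to argue that every vertex of $\Lambda\X$ admits a canonical representation of the form $a^{(x)}$ for unique $a\in A$ and $x\in X$: the disjointness $B_T=\bigsqcup_i\Imm(\epsilon_{i,T})$ combined with the injectivity of each $\epsilon_{i,T}$ on its image ensures that the identifications $a^{(x_j)}\sim(\epsilon_{j,T}(a))^{(\bx)}$ never collapse distinct vertices inside a single $\A^{(x)}$. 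On such representatives I would set
\[Q_{a^{(x)},y}\;:=\;\sum_{\substack{h:\A\to\Y\\ h(a)=y}}W_{x,h}.\]
These are projectors, as sums of mutually orthogonal projectors within the PVM indexed by $x$, and $\sum_{y\in Y}Q_{a^{(x)},y}=\sum_hW_{x,h}=\id_\HH$, so $\mathfrak{Q}$ is a valid quantum assignment.

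The heart of the proof is perfectness. Fix $R\in\rho$ of arity $r$, a tuple $\bz\in R(\Lambda\X)$, and $\by\in Y^r\setminus R(\Y)$; I must show $\prod_iQ_{z_i,y_i}=O$. A tuple in $R(\Lambda\X)$ originates either from $R(\A^{(x)})$ for some $x\in X$ or from $R(\B_T^{(\bx)})$ for some $T\in\tau$ and $\bx\in T(\X)$. In the first case, $\bz=(a_i^{(x)})_i$ with $\ba\in R(\A)$, and the product collapses via PVM orthogonality to $\sum_{h:\,h(\ba)=\by}W_{x,h}$, which is an empty sum because any homomorphism $h:\A\to\Y$ sends $\ba\in R(\A)$ into $R(\Y)$. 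The second case is the crucial one: faithfulness gives unique indices $j_i\in[\ar(T)]$ and elements $a_i\in A$ with $b_i=\epsilon_{j_i,T}(a_i)$, so $z_i=a_i^{(x_{j_i})}$. Since $k\geq 1$ and the $x_{j_i}$ all lie in the common $T$-tuple $\bx$ (hence at mutual $\X$-distance at most one), the factors in $\prod_iQ_{z_i,y_i}$ commute by $k$-compatibility of $\mathfrak{W}$ and can be grouped by $j$. For each $j$ appearing among the $j_i$'s, PVM orthogonality at $x_j$ collapses the inner sum to a single common homomorphism $h_j^*:\A\to\Y$ satisfying $h_j^*(a_i)=y_i$ for every $i$ with $j_i=j$; for each $j\in[\ar(T)]$ not appearing among the $j_i$'s, I insert $\id_\HH=\sum_{h_j^*}W_{x_j,h_j^*}$ to complete the product over $[\ar(T)]$. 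This yields
\[\prod_iQ_{z_i,y_i}\;=\;\sum_{(h_j^*)_{j\in[\ar(T)]}}\prod_{j\in[\ar(T)]}W_{x_j,h_j^*},\]
where the outer sum is restricted by the constraints $h_{j_i}^*(a_i)=y_i$ for all $i$. Perfectness of $\mathfrak{W}$ applied to $\bx\in T(\X)$ kills any summand whose tuple $(h_j^*)_j$ does not lie in $T(\Gamma\Y)$, forcing each surviving tuple to be of the form $(\ell\circ\epsilon_{j,T})_j$ for some homomorphism $\ell:\B_T\to\Y$. But then $\ell(b_i)=h_{j_i}^*(a_i)=y_i$ for all $i$, so $\by=\ell(\bb)\in R(\Y)$, a contradiction.

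Finally, for $k$-compatibility of $\mathfrak{Q}$, I would prove the elementary inequality $\dist_\X(x,x')\leq\dist_{\Lambda\X}(a^{(x)},a'^{(x')})$ by tracking a shortest path in the Gaifman graph of $\Lambda\X$: each edge sits inside either some $\A^{(x)}$ (no change in the $X$-coordinate) or some $\B_T^{(\bx)}$ (change from $x_j$ to $x_{j'}$, both in the tuple $\bx\in T(\X)$, hence $\X$-distance at most one). Consequently $\dist_\X(x,x')\leq k$ whenever $\dist_{\Lambda\X}(a^{(x)},a'^{(x')})\leq k$, and then expanding $[Q_{a^{(x)},y},Q_{a'^{(x')},y'}]$ as a double sum of commutators $[W_{x,h},W_{x',h'}]$ (all vanishing by the $k$-compatibility of $\mathfrak{W}$) closes the argument. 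The main obstacle is the combinatorial bookkeeping in the perfectness analysis of the $\B_T^{(\bx)}$ case: one must leverage faithfulness to translate tuples in $R(\B_T)$ into tuples of homomorphisms in $T(\Gamma\Y)$, and the reconstruction of the forbidden $\ell:\B_T\to\Y$ from the collapsed sum is precisely the step where the faithfulness hypothesis is essential and cannot be dropped.
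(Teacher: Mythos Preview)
Your proposal is correct and follows essentially the same approach as the paper: define the projector at a vertex $a^{(x)}$ by summing the given PVM elements $W_{x,h}$ over all homomorphisms $h$ with $h(a)=y$, then verify the PVM, perfectness, and $k$-compatibility conditions. Your presentation is in fact slightly cleaner than the paper's---you observe up front that faithfulness gives every vertex of $\Lambda\X$ a \emph{unique} representative of the form $a^{(x)}$, which lets you skip the paper's separate definition of projectors on $\B_T^{(\bx)}$-vertices (equation~\eqref{eqn_1615_2211_b}) and the subsequent Claim~\ref{claim_1308_2311} checking that the two definitions agree on identified vertices; your perfectness argument for tuples coming from $\B_T^{(\bx)}$ (expanding, grouping by $j$, inserting $\id_\HH$ for absent indices, and invoking perfectness of $\mathfrak{W}$ on $T(\X)$) is the direct unrolling of what the paper packages via the auxiliary projectors $Q_{\bx,\ell}=\prod_i Q_{x_i,\ell\circ\epsilon_{i,T}}$ and their mutual orthogonality.
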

\begin{proof}
    Fix a perfect $k$-compatible quantum assignment $\mathfrak{Q}=\{Q_{x,f}:x\in X,f:\A\to\Y\}$ for the $\X,\Gamma\Y$ game. For $a\in A$, $x\in X$, and $y\in Y$, consider the linear application
    \begin{align}
    \label{eqn_1615_2211_a}
        W_{a^{(x)},y}
        =
        \sum_{\substack{f:\A\to\Y\\f(a)=y}}Q_{x,f}.
    \end{align}
    Also, for $T\in\tau$ of arity $t$, $\bx\in T(\X)$, $b\in B_{T}$, and $y\in Y$, consider the linear application
    \begin{align}
    \label{eqn_1615_2211_b}
        W_{b^{(\bx)},y}
        =
        \sum_{\substack{\ell:\B_{T}\to\Y\\\ell(b)=y}}\;\prod_{i\in [t]}Q_{x_i,\ell\circ\epsilon_{i,T}}.
    \end{align}
    We aim to show that the set 
    \begin{align}
    \label{eqn_1727_2211}
    \mathfrak{W}=\{W_{a^{(x)},y}\suchThat a\in A, x\in X, y\in Y\}\cup \{W_{b^{(\bx)},y}\suchThat  T\in\tau, \bx\in T(\X), b\in B_{T}, y\in Y\}
    \end{align}
    is a perfect $k$-compatible quantum assignment for the $\Lambda\X,\Y$ game.
    Before starting off with the analysis of $\mathfrak{W}$, it shall be useful to prove the next two facts.

    \begin{claim}
    \label{claim_1628_2211}
        Fix a symbol $T\in\tau$ of arity $t$, a tuple $\bx\in T(\X)$, and a function $\ell:B_{T}\to Y$. If 
        \begin{align}
            \label{16_09_2211}
            \prod_{i\in [t]}Q_{x_i,\ell\circ\epsilon_{i,T}}\neq O,
        \end{align}
        then $\ell$ is a homomorphism from $\B_{T}$ to $\Y$.
    \end{claim}
    \begin{proof}
    %
    Using that the quantum assignment $\mathfrak{Q}$ is $k$-compatible (and thus, in particular, $1$-compatible), from the fact that $\bx\in T(\X)$ we deduce that the tuple $(\ell\circ\epsilon_{1,T},\dots,\ell\circ\epsilon_{t,T})$ belongs to the relation $T(\Gamma\Y)$. Hence, there exists some homomorphism $\tilde\ell:\B_T\to\Y$ such that $\tilde\ell\circ\epsilon_{i,T}=\ell\circ\epsilon_{i,T}$ for each $i\in [t]$. We now use the fact that the Pultr Template $\mathfrak{T}$ is faithful---specifically, that the union of $\Imm(\epsilon_{i,T})$ over $i\in [t]$ is the whole set $B_T$. We deduce that $\ell=\tilde\ell$, which concludes the proof of the claim.
    \end{proof}

    \begin{claim}
    \label{claim_1248_2311}
        Fix a symbol $T\in\tau$ of arity $t$, a tuple $\bx\in T(\X)$, and two distinct homomorphisms $\ell,\ell':\B_{T}\to \Y$. It holds that
        \begin{align*}
        \Big(\prod_{i\in [t]}Q_{x_i,\ell\circ\epsilon_{i,T}}\Big)\cdot \Big(\prod_{i\in [t]}Q_{x_i,\ell'\circ\epsilon_{i,T}}\Big)=O.
    \end{align*}
    \end{claim}
    \begin{proof}
        Since $\mathfrak{T}$ is faithful, there exists some $j\in[t]$ such that $\ell\circ\epsilon_{j,T}\neq\ell'\circ\epsilon_{j,T}$, which means that $Q_{x_j,\ell\circ\epsilon_{j,T}}\cdot Q_{x_j,\ell'\circ\epsilon_{j,T}}=O$. The $1$-compatibility of $\mathfrak{Q}$ allows rearranging at wish the projectors $Q_{x_i,\ell\circ\epsilon_{i,T}}$ and $Q_{x_i,\ell'\circ\epsilon_{i,T}}$ appearing in the product on the left-hand side of~\eqref{claim_1248_2311} and, thus, the claim follows.
    \end{proof}

    \begin{claim}
    \label{claim_1308_2311}
        The linear applications defined in~\eqref{eqn_1615_2211_a} and~\eqref{eqn_1615_2211_b} respect the equivalence relation $\sim$ of Definition~\ref{defn_left_pultr_functor}.
    \end{claim}

\begin{proof}
    Recall that $\sim$ identifies vertices $v=a^{(x_j)}$ and $w=[\epsilon_{j}(a)]^{(\bx)}$ for each $T\in\tau$, $\bx\in T(\X)$, $j\in [\ar(T)]$, and $a\in A$. Thus, we need to show that the equality 
    \begin{align}
    \label{eqn_1723_2211}
        W_{v,y}
        =
        W_{w,y}
    \end{align}
    holds for $v$ and $w$ as above, for any $y\in Y$. Observe that
    \begin{align}
    \label{eqn_1704_2211}
        W_{w,y}
        =
        \sum_{\substack{\ell:\B_T\to\Y\\\ell(\epsilon_{j,T}(a))=y}}\;\prod_{i\in [t]}Q_{x_i,\ell\circ\epsilon_{i,T}}.
    \end{align}
    For each $i\in [t]$, let $\B_i$ be the substructure of $\B_T$ induced by $\Imm(\epsilon_{i,T})$; since $\mathfrak{T}$ is faithful, $\B_T$ is the disjoint union of the $\rho$-structures $\B_i$'s. 
    Consider the two sets
    \begin{align*}
        \mathcal{L}&=\{\ell:\B_T\to\Y\mbox{ s.t. } \ell(\epsilon_{j,T}(a))=y\},\\ 
        \mathcal{\hat{L}}&=\{(\ell_1,\dots,\ell_{t})\mbox{ s.t. }\ell_i:\B_i\to\Y\;\forall i\in [t],\; \ell_j(\epsilon_{j,T}(a))=y\}.
    \end{align*}
    By assigning the tuple $(\ell|_{B_1},\dots,\ell|_{B_{t}})$ to any $\ell\in\mathcal{L}$, we see that $\mathcal{L}\subseteq\mathcal{\hat L}$. The inclusion might be strict if not all constraints of $\B_T$ are inside some $\B_i$. Nevertheless, we know from Claim~\eqref{claim_1628_2211} that those functions $\ell$ that are not homomorphisms yield a trivial contribution to the sum in~\eqref{eqn_1704_2211}. Thus, we can rewrite the sum as
    \begin{align*}
        W_{w,y}
        =\sum_{(\ell_1,\dots,\ell_{t})\in\mathcal{\hat L}}\;\prod_{i\in [t]}Q_{x_i,\ell_i\circ\epsilon_{i,T}}.
    \end{align*}
Now we use that each $\epsilon_{i,T}$ yields an isomorphism from $\A$ to $\B_i$, which gives
\begin{align*}
    W_{w,y}
        &=
        \sum_{\substack{(f_1,\dots,f_{t})\\ f_i:\A\to\Y\;\forall i\in [t]\\f_j(a)=y}}\;\prod_{i\in [t]}Q_{x_i,f_i}
        =      \Big(\sum_{\substack{f:\A\to\Y\\f(a)=y}}Q_{x_j,f}
        \Big)\cdot\Big(\prod_{i\in [t]\setminus \{j\}}\;\sum_{f:\A\to\Y}Q_{x_i,f}\Big)\\
        &=      \Big(\sum_{\substack{f:\A\to\Y\\f(a)=y}}Q_{x_j,f}
        \Big)\cdot\Big(\prod_{i\in [t]\setminus \{j\}}\id_{\HH}\Big)
        =
        \sum_{\substack{f:\A\to\Y\\f(a)=y}}Q_{x_j,f}
        =
        W_{v,y},
\end{align*}
where the third equality follows from the fact that $\mathfrak{Q}$ is a quantum assignment for $\X,\Gamma\Y$. This establishes~\eqref{eqn_1723_2211} and concludes the proof of the claim.
\end{proof}

We now prove that
the set $\mathfrak{W}$ defined in~\eqref{eqn_1727_2211} yields a perfect quantum assignment for the $\Lambda\X,\Y$ game.

\begin{claim}
\label{eqn_1348_2311}
 All linear applications in $\mathfrak{W}$ are projectors. 
\end{claim}
\begin{proof}
    The fact that the linear application $W_{a^{(x)},y}$ defined in~\eqref{eqn_1615_2211_a} is a projector immediately follows by recalling that $Q_{x,f}\cdot Q_{x,f'}=O$ whenever $f\neq f':\A\to\Y$, since $\mathfrak{Q}$ is a perfect quantum assignment. Consider now a linear application $W_{b^{(\bx)},y}$ as defined in~\eqref{eqn_1615_2211_b}. Since $\mathfrak{Q}$ is a $1$-compatible assignment, the projectors $Q_{x_i,\ell\circ\epsilon_{i,T}}$ and $Q_{x_{i'},\ell\circ\epsilon_{i',T}}$ commute  for each $\ell:\B_T\to\Y$ and each $i,i'\in [t]$ and, thus, the product $\prod_{i\in [t]}Q_{x_i,\ell\circ\epsilon_{i,T}}$
    is a projector. Moreover, the product of any two projectors $\prod_{i\in [t]}Q_{x_i,\ell\circ\epsilon_{i,T}}$ and $\prod_{i\in [t]}Q_{x_i,\ell'\circ\epsilon_{i,T}}$ is zero for any distinct $\ell\neq\ell':\B_T\to\Y$, as shown in Claim~\ref{claim_1248_2311}. Hence, their sum $W_{b^{(\bx)},y}$ is a projector.
\end{proof}

\begin{claim} $\mathfrak{W}$ is a quantum assignment for the $\Lambda\X,\Y$ game.
\end{claim}
\begin{proof}
Using that $\mathfrak{Q}$ is a quantum assignment, it is straightforward to check from the definition~\eqref{eqn_1615_2211_a} that 
\begin{align*}
    \sum_{y\in Y}W_{a^{(x)},y}
    =
    \sum_{f:\A\to\Y}Q_{x,f}
    =
    \id_\HH
\end{align*}
for any $a\in A$ and $x\in X$.
Similarly, for $T\in\tau$ of arity $t$, $\bx\in T(\X)$, and $b\in B_{T}$, we have
\begin{align*}
    \sum_{y\in Y}W_{b^{(\bx)},y}
    =
    \sum_{\ell:\B_T\to\Y}\;\prod_{i\in [t]}Q_{x_i,\ell\circ\epsilon_{i,T}}
    =
    \prod_{i\in [t]}\sum_{f:\A\to\Y}Q_{x_i,f}
    =
    \prod_{i\in [t]}\id_\HH=\id_\HH,
\end{align*}
where we have used the same argument as in the proof of Claim~\ref{claim_1308_2311}.
\end{proof}

\begin{claim} The quantum assignment $\mathfrak{W}$ is perfect.
\end{claim}
\begin{proof}
Fix a symbol $R\in\rho$ of arity $r$ and a tuple $\by\in Y^r\setminus R(\Y)$. Recall from Definition~\ref{defn_left_pultr_functor} that the elements of $R(\Lambda\X)$ can be of two types, based on whether they correspond to $(i)$ elements of $R(\A)$, or $(ii)$ elements of $R(\B_{T})$ for some $T\in\tau$.
In the first case, we have $\ba^{(x)}=(a_1^{(x)},\dots,a_{r}^{(x)})\in R(\Lambda\X)$ for some $\ba\in R(\A)$ and $x\in X$. Observe that
\begin{align*}
    \prod_{j\in [r]}W_{a_j^{(x)},y_j}
    =
    \prod_{j\in [r]}\sum_{\substack{f:\A\to\Y\\f(a_j)=y_j}}Q_{x,f}.
\end{align*}
Distributing the product over the sums in the expression above, and recalling that $\{Q_{x,f}\}_{f:\A\to\Y}$ is a PVM and, thus, its members are mutually orthogonal projectors, we obtain
\begin{align*}
    \prod_{j\in [r]}W_{a_j^{(x)},y_j}
    =
    \sum_{\substack{f:\A\to\Y\\f(\ba)=\by}}Q_{x,f}=O,
\end{align*}
since $\ba\in R(\A)$ and $\by\not\in R(\Y)$. This means that the assignment $\mathfrak{W}$ is perfect for tuples $\ba^{(x)}\in R(\Lambda\X)$ of type $(i)$.

Consider now a tuple of type $(ii)$---i.e., a tuple $\bb^{(\bx)}\in R(\Lambda\X)$ for some $T\in\tau$ of arity $t$, $\bx\in T(\X)$, and $\bb\in R(\B_{T})$. For $\ell:\B_{T}\to\Y$, define $Q_{\bx,\ell}=\prod_{i\in [t]}Q_{x_i,\ell\circ\epsilon_{i,T}}$. We know from Claims~\ref{claim_1248_2311} and~\ref{eqn_1348_2311} that the family $\{Q_{\bx,\ell}\}_{\ell:\B_{T}\to\Y}$ contains mutually orthogonal projectors. Hence, we can proceed in the same way as in the previous case:
\begin{align*}
    \prod_{j\in [r]}W_{b_j^{(\bx)},y_j}
    =
    \prod_{j\in [r]}\;\sum_{\substack{\ell:\B_{T}\to\Y\\\ell(b_j)=y_j}}Q_{\bx,\ell}
    =
    \sum_{\substack{\ell:\B_{T}\to\Y\\\ell(\bb)=\by}}Q_{\bx,\ell}=O,
\end{align*}
since $\bb\in R(\B_{T})$ and $\by\not\in R(\Y)$. Thus, $\mathfrak{W}$ is perfect for tuples of type $(ii)$, too.
\end{proof}

\begin{claim} The quantum assignment $\mathfrak{W}$ is $k$-compatible.
\end{claim}
\begin{proof}
    Take four vertices $a^{(x)},\tilde a^{(\tilde x)},b^{(\bx)},\tilde b^{(\tilde\bx)}$ of $\Lambda\X$, where $x,\tilde x\in X$, $a,\tilde a\in A$, $\bx,\tilde\bx\in T(\X)$ for some symbol $T\in\tau$ or arity $t$, and $b,\tilde b\in B_{T}$. Take also two indices $i,j\in [t]$. It is easy to check from Definition~\ref{defn_left_pultr_functor} that
    \begin{align*}
        \dist_\X(x,\tilde x)&\leq \dist_{\Lambda\X}(a^{(x)},\tilde a^{(\tilde x)}),\\
        \dist_\X(x_i,\tilde x_j)&\leq \dist_{\Lambda\X}(b^{(\bx)},\tilde b^{(\tilde\bx)}),\\
        \dist_\X(x,\tilde x_j)&\leq\dist_{\Lambda\X}(a^{(x)},\tilde b^{(\tilde \bx)}),\;\;\mbox{and}\\
        \dist_{\X}(x_i,\tilde x)&\leq\dist_{\Lambda\X}(b^{(\bx)},\tilde a^{(\tilde x)}).
    \end{align*}
    Therefore, if $\dist_{\Lambda\X}(a^{(x)},\tilde a^{(\tilde x)})\leq k$ (resp. $\dist_{\Lambda\X}(b^{(\bx)},\tilde b^{(\tilde\bx)})\leq k$, $\dist_{\Lambda\X}(a^{(x)},\tilde a^{(\tilde x)})\leq k$, $\dist_{\Lambda\X}(b^{(\bx)},\tilde a^{(\tilde x)})\leq k$), the $k$-compatibility of the assignment $\mathfrak{Q}$ implies that the projectors $Q_{x,f}$, $Q_{\tilde x,\tilde f}$, $Q_{x_i,\ell\circ\epsilon_{i,T}}$, and $Q_{\tilde x_{\tilde j},\tilde \ell\circ\epsilon_{j,T}}$ appearing in the correponding definitions~\eqref{eqn_1615_2211_a} and~\eqref{eqn_1615_2211_b} commute with each other. As consequence, for each $y,y'\in Y$, we have $[W_{a^{(x)},y},W_{\tilde a^{(\tilde x)},y'}]=O$ (resp. $[W_{b^{(\bx)},y},W_{\tilde b^{( \tilde \bx)},y'}]=O$, $[W_{a^{(x)},y},W_{\tilde b^{( \tilde \bx)},y'}]=O$, $[W_{b^{(\bx)},y},W_{\tilde a^{( \tilde x)},y'}]=O$). This shows that the assignment $\mathfrak{W}$ is $k$-compatible, thus concluding the proof of the claim.
\end{proof}
In summary, we have proved that the set $\mathfrak{W}$ defined in~\eqref{eqn_1727_2211} yields a perfect $k$-compatible quantum assignment for the $\Lambda\X,\Y$ game, thus showing that $\Lambda\X\leadsto^k\Y$.
\end{proof}

In the classical setting, the adjunction between $\Gamma$ and $\Lambda$ straightforwardly implies that $\Gamma$ and $\Lambda$ are both functors between the thin categories $\mathbf{Str}(\rho)$ and $\mathbf{Str}(\tau)$, in that they satisfy the implications $\X\to\Y\;\Rightarrow\;\Gamma\X\to\Gamma\Y$ for each pair of $\rho$-structures $\X,\Y$, and $\X'\to\Y'\;\Rightarrow\;\Lambda\X'\to\Lambda\Y'$ for each pair of $\tau$-structures $\X',\Y'$. 
The adjunction relation of Theorems~\ref{thm_adjunction_pultr_1} and~\ref{thm_quantum_adjunction_faithful} allows establishing analogous results for locally compatible quantum strategies, provided that the Pultr template is connected or faithful, respectively. These functoriality properties shall be useful in the next section.

\begin{cor}
\label{cor_gamma_quantum_functor}
    Let $\mathfrak{T}$ be a connected $(\rho,\tau)$-Pultr template, let $\X$ and $\Y$ be $\rho$-structures, let $k\in\N$, let $k'=(k+1)\cdot\diam(\mathfrak{T})$, and suppose that $\X\leadsto^{k'}\Y$. Then $\Gamma\X\leadsto^{k}\Gamma\Y$.
\end{cor}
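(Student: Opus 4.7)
The plan is to deduce this directly from the two adjunction theorems already established, using the classical Pultr adjunction to produce a counit homomorphism.

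First I would invoke the classical adjunction (Theorem~\ref{thm_pultr_adjoints}) applied to the trivial identity homomorphism $\Gamma\X \to \Gamma\X$: taking the $\tau$-structure to be $\Gamma\X$ and the $\rho$-structure to be $\X$, the equivalence $\Lambda(\Gamma\X) \to \X \Leftrightarrow \Gamma\X \to \Gamma\X$ furnishes a (classical) homomorphism $\Lambda\Gamma\X \to \X$. This is the counit of the Pultr adjunction and will do all the heavy lifting.

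Next, I would compose this counit with the hypothesised quantum assignment. Since $\Lambda\Gamma\X \to \X$ classically and $\X \leadsto^{k'} \Y$ by assumption, Proposition~\ref{prop_compose_classic_quantum_homos} immediately yields $\Lambda\Gamma\X \leadsto^{k'} \Y$, preserving the compatibility level $k'$.

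Finally, I would apply the quantum adjunction in the form of Theorem~\ref{thm_adjunction_pultr_1}, taking the $\tau$-structure to be $\Gamma\X$ and the $\rho$-structure to be $\Y$. Since $k' = (k+1)\cdot\diam(\mathfrak{T})$ and $\Lambda\Gamma\X \leadsto^{k'} \Y$, the theorem gives $\Gamma\X \leadsto^{k} \Gamma\Y$, which is exactly the desired conclusion. There is no genuine obstacle here: the whole argument is a three-line chase through the adjunction, and the compatibility decay factor $\diam(\mathfrak{T})$ is already built into the definition of $k'$, so it matches the requirements of Theorem~\ref{thm_adjunction_pultr_1} on the nose.
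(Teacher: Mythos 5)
Your proposal is correct and follows exactly the paper's own proof: the counit $\Lambda\Gamma\X\to\X$ from Theorem~\ref{thm_pultr_adjoints}, composition via Proposition~\ref{prop_compose_classic_quantum_homos} to get $\Lambda\Gamma\X\leadsto^{k'}\Y$, then Theorem~\ref{thm_adjunction_pultr_1}. No differences to report.
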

\begin{proof}
    Applying Theorem~\ref{thm_pultr_adjoints} to $\Gamma\X\to\Gamma\X$, we find $\Lambda\Gamma\X\to\X$. Composing the latter homomorphism with the quantum homomorphism $\X\leadsto^{k'}\Y$ via Proposition~\ref{prop_compose_classic_quantum_homos} yields $\Lambda\Gamma\X\leadsto^{k'}\Y$. We can then apply Theorem~\ref{thm_adjunction_pultr_1} to conclude that $\Gamma\X\leadsto^k\Gamma\Y$, as claimed.
\end{proof}

\begin{cor}
\label{cor_left_pultr_functor}
    Let $\mathfrak{T}$ be a faithful $(\rho,\tau)$-Pultr template, let $\X$ and $\Y$ be $\tau$-structures, let $k\in\N$, and suppose that $\X\leadsto^{k}\Y$. Then $\Lambda\X\leadsto^{k}\Lambda\Y$.
\end{cor}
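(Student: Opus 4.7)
The plan is to mirror the strategy used for Corollary~\ref{cor_gamma_quantum_functor}, exploiting the classical adjunction of Theorem~\ref{thm_pultr_adjoints} together with the quantum adjunction in the faithful direction (Theorem~\ref{thm_quantum_adjunction_faithful}). The idea is to route the hypothesis $\X\leadsto^k\Y$ through the composite structure $\Gamma\Lambda\Y$, so that the faithful-side quantum adjunction becomes applicable with the roles of the parameters correctly aligned.

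Concretely, I would proceed as follows. First, apply Theorem~\ref{thm_pultr_adjoints} to the identity homomorphism $\Lambda\Y\to\Lambda\Y$; this classical adjunction immediately yields a (classical) homomorphism $\Y\to\Gamma\Lambda\Y$. Next, combine the given quantum homomorphism $\X\leadsto^k\Y$ with this classical homomorphism via Proposition~\ref{prop_compose_classic_quantum_homos}, obtaining $\X\leadsto^k\Gamma\Lambda\Y$. Finally, because $\mathfrak{T}$ is faithful, Theorem~\ref{thm_quantum_adjunction_faithful} applies to the $\tau$-structure $\X$ and the $\rho$-structure $\Lambda\Y$, converting $\X\leadsto^k\Gamma\Lambda\Y$ into the desired $\Lambda\X\leadsto^k\Lambda\Y$.

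I do not expect any real obstacle here: the compatibility parameter $k$ is preserved both by Proposition~\ref{prop_compose_classic_quantum_homos} (composition with classical homomorphisms on either side does not alter local compatibility) and by Theorem~\ref{thm_quantum_adjunction_faithful} (the faithful direction of the quantum Pultr adjunction is compatibility-preserving, in contrast with the connected direction of Theorem~\ref{thm_adjunction_pultr_1}, which incurs a multiplicative diameter factor). This is precisely why the statement can be quantised with the same $k$ on both sides, without any ``compatibility decay'' of the kind appearing in Corollary~\ref{cor_gamma_quantum_functor}. The whole argument is a three-line diagram chase, so no delicate case analysis or projector computation is required beyond invoking the results already established.
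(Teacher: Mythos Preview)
Your proposal is correct and follows essentially the same approach as the paper: apply Theorem~\ref{thm_pultr_adjoints} to $\Lambda\Y\to\Lambda\Y$ to get $\Y\to\Gamma\Lambda\Y$, compose with $\X\leadsto^k\Y$ via Proposition~\ref{prop_compose_classic_quantum_homos}, then invoke Theorem~\ref{thm_quantum_adjunction_faithful}. Your observation about the absence of compatibility decay (in contrast with Corollary~\ref{cor_gamma_quantum_functor}) is also accurate.
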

\begin{proof}
 From $\Lambda\Y\to\Lambda\Y$, we obtain $\Y\to\Gamma\Lambda\Y$ via Theorem~\ref{thm_pultr_adjoints}. Composing the latter homomorphism with the quantum homomorphism $\X\leadsto^{k}\Y$ via Proposition~\ref{prop_compose_classic_quantum_homos} yields $\X\leadsto^{k}\Gamma\Lambda\Y$. We can then apply Theorem~\ref{thm_quantum_adjunction_faithful} to conclude that $\Lambda\X\leadsto^k\Lambda\Y$, as claimed.   
\end{proof}

\begin{rem}
\label{rem_pultr_preserves_H}
We conclude this section by noting that the proofs of Theorems~\ref{thm_adjunction_pultr_1} and~\ref{thm_quantum_adjunction_faithful} show that Pultr reductions preserve the Hilbert space $\HH$ witnessing perfect (locally compatible) quantum assignments. As we shall see later, together with a result from~\cite{simul_banakh}, this implies that the dimension of the shared quantum system witnessing perfect quantum assignments in Conjecture~\ref{conj_d_to_1_khot_quantum} must be unbounded, see Remark~\ref{rem_higher_consistency_gap}.
\end{rem}

\section{The quantum chromatic gap via $d$-to-$1$ games}
\label{sec_reductions}

The goal of this section is to prove that all links in a chain of reductions from $d$-to-$1$ games to $3$ vs. $\mathcal{O}(1)$ colouring preserve quantum completeness and, thus, can be used to transfer pseudo-telepathy. We divide the task into three phases, described in Sections~\ref{subsec_preprocessing},~\ref{subsec_GS_reduction}, and~\ref{subsec_line_digraph}, respectively. The results are then combined in Section~\ref{subsect_putting_all_together}, thus yielding a proof of the main Theorem~\ref{thm_main}. 

\subsection{From $d$-to-$1$ games to $d$-to-$d$ games} 
\label{subsec_preprocessing}
As we shall see in the next section, the known reductions from label cover to graph colouring based on a Markov chain having specific spectral properties require that the constraints in the input label-cover instance be $d$-to-$d$ rather than $d$-to-$1$. Hence, 
the first step is to reduce $d$-to-$1$ instances to $d$-to-$d$ instances. To that end, we make use of a sequence of transformations from~\cite{Dinur09:sicomp}.

Take $m,d\in\N$, and let $A$ be an alphabet set of size $md$. We say that a set $S\subseteq A^2$ is \textit{$d$-to-$d$} if the Boolean $md\times md$ matrix encoding $S$ can be written as $P_\mu(I_m\otimes J_d)P_\nu$ for two permutations $\mu,\nu$ of $[md]$, where $I$ and $J$ are the identity and all-one matrices, respectively, and $P_\mu$ and $P_\nu$ are the permutation matrices associated with $\mu$ and $\nu$, respectively.
We say that a label-cover instance $\Phi=(X,E,A,\pi,\phi)$ is \textit{$d$-to-$d$} if $|A|=md$ for some $m\in\N$, and for each $\bx\in E$ the set $\phi_\bx$ is $d$-to-$d$.

In order for the reduction in Section~\ref{subsec_GS_reduction} to go through, 
we shall need a slightly different notion of classical satisfiability 
for label-cover instances that does not make use of weights. Exactly the same notion is used in various approximation hardness results based on the UGC or $d$-to-$1$ games Conjecture, see in particular~\cite{khot2008vertex,Dinur09:sicomp}.
Given an integer $t\in\N$, 
we denote by ${A\choose \leq t}$ the set of subsets of the alphabet $A$ of size at most $t$.
A \textit{$t$-assignment} for a label-cover instance $\Phi$ is a function $f:X\to{A\choose \leq t}$. We say that $f$ satisfies a constraint $\bx=(x_1,x_2)\in E$ if there exist $a\in f(x_1)$, $b\in f(x_2)$ such that $\phi_\bx(a,b)=1$. 
We define
\begin{align*}
    \isat_t(\Phi)
    =
    \frac{1}{|X|}\cdot \max_{S\subseteq X}\Big\{|S|\suchThat\exists\mbox{ $t$-assignment for $\Phi$ satisfying all constraints induced by $S$}\Big\}.
\end{align*}

Recall that, following~\cite{Khot02stoc}, we have defined label cover in Section~\ref{subsec_CSP_predicates} as a \textit{weighted} CSP, where edges carry a probability distribution $\pi$ that appears in the definition of $\sat(\Phi)$. 
Since the probability distribution $\pi$ does not appear in the definition of $\isat_t(\Phi)$ (and, as discussed in Section~\ref{sec_classical_quantum_CSPs}, it also does not appear in the definition of perfect quantum assignments), we shall
now consider \textit{unweighted} label-cover instances, where $\pi$ is assumed to be, say, the uniform distribution over $E$.
We denote an unweighted label-cover instance simply by $\Phi=(X,E,A,\phi)$. 

The next result shows that, assuming Conjecture~\ref{conj_d_to_1_khot_quantum}, there exist $d$-to-$d$ instances admitting a perfect locally compatible quantum assignment but arbitrarily small $\isat_t$ value. Its proof is deferred to Appendix~\ref{app_dinur_reductions}.

\begin{prop}
\label{prop_d_to_d_after_dinur}
    Suppose Conjecture~\ref{conj_d_to_1_khot_quantum} holds for some $d\geq 2$. 
Then for each $\epsilon>0$ and each $k,t\in\N$ there exists a $d$-to-$d$ instance $\Phi$ (on some alphabet size $n$) such that $\Phi$ admits a perfect $k$-compatible quantum assignment but
$\isat_t(\Phi)<\epsilon$.
\end{prop}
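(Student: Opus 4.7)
The plan is to adapt the classical reduction from $d$-to-$1$ to $d$-to-$d$ label cover of~\cite{khot2008vertex,Dinur09:sicomp} and verify that it both preserves perfect locally compatible quantum assignments and delivers the required $\isat_t$ soundness. First I would invoke Conjecture~\ref{conj_d_to_1_khot_quantum} with suitably small soundness $\delta = \delta(\epsilon, t) > 0$ and compatibility parameter $k' = 2k + 1$, obtaining a $d$-to-$1$ instance $\Psi = (X_1 \sqcup X_2, E, A_1 \sqcup A_2, \pi, \phi)$ with $\sat(\Psi) < \delta$ together with a perfect $k'$-compatible quantum assignment $\mathfrak{Q} = \{Q_{x,a}\}$. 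A routine preliminary step replaces $\Psi$ by an unweighted instance by rationalising $\pi$ and replicating each edge accordingly, preserving both $\mathfrak{Q}$ and the $\sat$ bound; below I may therefore assume $\pi$ to be uniform.

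Next, I would construct the $d$-to-$d$ instance $\Phi$ via path doubling: its variable set is $X_1$, and for each $x_2 \in X_2$ and each ordered pair $(x_1, x_1')$ of $X_1$-neighbours of $x_2$ in $\Psi$, $\Phi$ contains a constraint on $(x_1, x_1')$ admitting exactly those $(a, a') \in A_1 \times A_1$ for which some common $b \in A_2$ satisfies $\phi_{(x_1, x_2)}(a, b) = 1 = \phi_{(x_1', x_2)}(a', b)$. A combinatorial check using the $d$-to-$1$ property of $\Psi$ shows that the composed relation has the block-diagonal $d$-to-$d$ structure required by Section~\ref{subsec_preprocessing}.

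For quantum completeness, I would assign to each $x_1 \in X_1$ the PVM $\{Q_{x_1, a}\}_{a \in A_1}$ inherited from $\mathfrak{Q}$. Fixing a constraint $(x_1, x_1')$ of $\Phi$ arising from $x_2 \in X_2$ together with a pair $(a, a')$ outside its relation, for every $b \in A_2$ at least one of $(a, b) \notin \phi_{(x_1, x_2)}$ or $(a', b) \notin \phi_{(x_1', x_2)}$ must hold, as otherwise $(a, a')$ would be admitted. Expanding
\begin{align*}
Q_{x_1, a}\, Q_{x_1', a'} = \sum_{b \in A_2} Q_{x_1, a}\, Q_{x_2, b}\, Q_{x_1', a'},
\end{align*}
each summand vanishes: in the first case by perfection of $\mathfrak{Q}$ on $(x_1, x_2)$; in the second case, the $1$-compatibility of $\mathfrak{Q}$ (implied by $k'$-compatibility) lets us commute $Q_{x_2, b}$ past $Q_{x_1', a'}$, after which perfection on $(x_1', x_2)$ closes the argument. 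Since each edge of $\Phi$ corresponds to a path of length $2$ in $\Psi$, distances at worst double, so the choice $k' = 2k+1$ secures $k$-compatibility of the induced assignment.

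Finally, for classical soundness I would follow the decoding argument of~\cite{Dinur09:sicomp}: a $t$-assignment for $\Phi$ satisfying all constraints induced by a set $S \subseteq X_1$ of density at least $\epsilon$ yields, by averaging over $x_2 \in X_2$ and sampling labels uniformly from the $t$-sets, a randomised classical assignment for $\Psi$ achieving an $\Omega(\epsilon^2 / t^2)$ fraction of satisfied constraints, contradicting $\sat(\Psi) < \delta$ once $\delta$ is chosen sufficiently small in terms of $\epsilon$ and $t$. The main obstacle I anticipate is reconciling this soundness amplification with the compatibility parameter: if an intermediate parallel-repetition step is needed to drive $\sat(\Psi)$ low enough, it distorts the Gaifman graph of $\Psi$, so the compatibility level $k'$ obtained from Conjecture~\ref{conj_d_to_1_khot_quantum} must be chosen in advance large enough to absorb both this distortion and the factor-$2$ blow-up from path doubling.
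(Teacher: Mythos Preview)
Your core strategy---the path-doubling reduction of~\cite{Dinur09:sicomp} together with the obvious restriction of the PVMs to $X_1$---is exactly what the paper does in Lemma~\ref{lem_dinur_tre}, and your quantum completeness argument (inserting $\sum_{b\in A_2}Q_{x_2,b}$ and using $1$-compatibility to commute) is the same as the paper's. Two minor points: (i) for $\{Q_{x_1,a}\}_{a\in A_1}$ to be a PVM you need $Q_{x_1,a}=O$ for $a\in A_2$, and similarly $\sum_{b\in A_2}Q_{x_2,b}=\id_\HH$ requires $Q_{x_2,b}=O$ for $b\in A_1$; the paper isolates this as Proposition~\ref{prop_only_relevant_projectors}, though as you may have noticed it follows directly from perfection at any non-isolated vertex; (ii) the compatibility loss is exactly a factor of $2$ (the paper uses $2k$, not $2k+1$), since $\dist_{\Phi}(x,x')=2\,\dist_{\Phi'}(x,x')$ for $x,x'\in X_1$.

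The real gap is in your ``routine preliminary step''. Rationalising $\pi$ and replicating edges makes $\Psi$ unweighted but does \emph{not} make it left-regular, and without left-regularity the soundness implication $\isat_t(\Phi)\geq\epsilon\Rightarrow\sat(\Psi)\geq\Omega(\epsilon^2/t^2)$ simply fails. For a concrete obstruction, take $\Psi$ in which every vertex of $X_2$ has degree~$1$: then the path-doubled $\Phi$ has no constraints at all, so $\isat_t(\Phi)=1$ regardless of $\sat(\Psi)$. More generally, if a small set of high-degree $X_1$-vertices carries almost all edges of $\Psi$, a dense $S\subseteq X_1$ avoiding them can have $\isat_t$ close to~$1$ while $\sat(\Psi)$ is arbitrarily small. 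The paper resolves this with two further preprocessing steps (Lemmas~\ref{lem_dinur_uno} and~\ref{lem_dinur_due}, following~\cite{Dinur09:sicomp,khot2008vertex}): first equalise the marginals $\pi_x$ over $X_1$ by vertex replication, then convert to an unweighted \emph{left-regular} instance by a bucketing/sampling construction, both shown to preserve perfect $k$-compatible quantum assignments. Only then does the path-doubling soundness (with the clean $\epsilon/t^2$ bound, not $\epsilon^2/t^2$) apply.

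Finally, your anticipated obstacle is a phantom: since Conjecture~\ref{conj_d_to_1_khot_quantum} already delivers arbitrarily small $\sat(\Psi)$ for any prescribed compatibility level, no parallel repetition is needed anywhere in the argument, and there is no tension between soundness amplification and the compatibility parameter beyond the single factor-of-$2$ loss in the path-doubling step.
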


\subsection{From $d$-to-$d$ games to $2d$ vs. $\mathcal{O}(1)$ colouring}
\label{subsec_GS_reduction}

We now present a reduction described in~\cite{Dinur09:sicomp} (in the case $d=2$) and in~\cite{GS20:icalp} (in the general case $d\geq 2$).
The starting point is the following, spectral-graph-theoretic result.

\begin{thm}[\cite{Dinur09:sicomp,GS20:icalp}]
\label{thm_transistion_matrix}
For each $2\leq d\in\N$ there exists a Markov chain on state set $\Omega=[2d]^d$ whose transition matrix $T=(t_{ij})$ has the following properties:
    \begin{itemize}
        \item $T$ is symmetric, i.e., the transition probabilities $t_{\bx,\by}$ and $t_{\by,\bx}$ are equal for each $\bx,\by\in\Omega$;
        \item the eigenvalue $1$ has multiplicity one in the spectrum of $T$, and all other eigenvalues have modulus strictly smaller than $1$;
        \item for $\bx=(x_1,\dots,x_d),\by=(y_1,\dots,y_d)\in\Omega$, $t_{\bx,\by}=0$ when $\{x_1,\dots,x_d\}\cap\{y_1,\dots,y_d\}\neq\emptyset$.
    \end{itemize}
\end{thm}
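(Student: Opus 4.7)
My plan is to construct $T$ explicitly on $\Omega=[2d]^d$ and then verify the three required properties in turn. The natural candidate is an $S_{2d}$-equivariant random walk, where $S_{2d}$ acts on $\Omega$ by permuting the alphabet $[2d]$. Imposing $S_{2d}$-equivariance collapses the unknowns: $T(\bx,\by)$ depends only on the $S_{2d}$-orbit of the pair $(\bx,\by)$, which is characterised by a short list of combinatorial invariants (the support-size profile together with the coincidence pattern among the coordinates). Each orbit contributes one free parameter, so symmetric stochastic $T$'s satisfying the disjoint-support condition form a small-dimensional affine polytope. Non-emptiness of this polytope is witnessed by a natural candidate: the ``uniform random image'' chain, in which from $\bx$ one samples $\by\in\Omega$ by first choosing a uniformly random injection $\phi:\supp(\bx)\to[2d]\setminus\supp(\bx)$ and then setting $y_i=\phi(x_i)$; since $\phi$ depends only on $\supp(\bx)$, the resulting matrix is manifestly $S_{2d}$-equivariant, and a direct count shows it is doubly stochastic with the required disjointness.

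To establish the spectral properties, I would exploit the commutation of $T$ with $S_{2d}$: by Schur's lemma, $T$ acts as a scalar on each isotypic component of the permutation representation $\C^\Omega$, so the eigenvalues of $T$ may be read off orbit by orbit. The trivial component is spanned by the all-ones vector and contributes eigenvalue $1$ with multiplicity one, matching the first required property. For the remaining components, I would either compute the corresponding eigenvalues directly by a character-theoretic calculation (in the spirit of the Johnson scheme), or combine an irreducibility argument with a Perron--Frobenius-plus-aperiodicity argument to bound their moduli strictly below $1$.

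The main obstacle is ruling out eigenvalue $-1$. The ``uniform random image'' chain described above preserves the coincidence pattern of the tuple, which means the chain is reducible: it decomposes into independent sub-chains, one for each combinatorial type. Even within the type of distinct-entry tuples (the subset $\Omega'\subseteq\Omega$ of $\bx$ with $|\supp(\bx)|=d$), the chain is bipartite, since every transition swaps a $d$-subset $S\subseteq[2d]$ with its complement $[2d]\setminus S$, and hence carries the eigenvalue $-1$ with multiplicity $\binom{2d}{d}/2$. To circumvent this, $T$ must also include transitions that alter the coincidence pattern, i.e., transitions from tuples with $|\supp(\bx)|=d$ to tuples with $|\supp(\by)|<d$, and vice versa. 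The parameters of the $S_{2d}$-invariant polytope above are precisely rich enough to accommodate such ``type-mixing'' transitions while maintaining symmetry and double stochasticity; adding them with appropriate weights breaks both the bipartite decomposition and the type-preservation, yielding simultaneously irreducibility and aperiodicity of $T$. Concrete constructions meeting all these conditions are given in~\cite{Dinur09:sicomp} for $d=2$ and in~\cite{GS20:icalp} for general $d$; the residual calculation of the spectral gap within each non-trivial isotypic component is then a standard matter.
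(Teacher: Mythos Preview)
The paper does not prove this statement: Theorem~\ref{thm_transistion_matrix} is quoted as an external result from~\cite{Dinur09:sicomp,GS20:icalp} and used as a black box in Section~\ref{subsec_GS_reduction}. Your proposal actually provides more than the paper does---a high-level sketch of the $S_{2d}$-equivariant construction, identification of the bipartiteness/type-preservation obstacle, and the need for type-mixing transitions---before deferring to the same references for the concrete chain and spectral-gap computation. So there is no discrepancy to report; you have correctly identified that the substantive work lives in the cited papers.
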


Let $\Phi=(X,E,[n],\phi)$ be an (unweighted) $d$-to-$d$ instance with alphabet $[n]$, where $n=md$ for some $m\in\N$.
Recall that, for each $\bx\in E$, the corresponding set of admitted labellings $\phi_\bx$ is $d$-to-$d$, meaning that there exists two permutations $\mu,\nu\in\Sym_{n}$ such that $\phi_\bx$ is encoded by the matrix $P_\mu(I_m\otimes J_d)P_\nu$. We shall call $\mu,\nu$ the permutations associated with $\bx$. 
On input $\Phi$, the reduction returns a digraph $\eta\Phi$ defined as follows. The vertex set is $\{(x,z):x\in X,\,z\in[2d]^n\}$.
The edge set contains those pairs $((x,z),(x',z'))$ of vertices for which
\begin{itemize}
    \item[$(i)$] 
    $(x,x')$ forms an edge in $E$;
    \item [$(ii)$] for each $i\in [m]$, the $(\alpha(i),\beta(i))$-th entry of the transition matrix $T$ of Theorem~\ref{thm_transistion_matrix} is nonzero, where $\mu,\nu$ are the permutations associated with the edge $(x,x')$ and
    \begin{equation}
        \label{eqn_1248_2411}
         \begin{aligned}
        \alpha(i)&=(z\circ\mu(di-d+1),z\circ\mu(di-d+2),\dots,z\circ\mu(di))\in [2d]^d,\\
        \beta(i)&=(z'\circ\nu(di-d+1),z'\circ\nu(di-d+2),\dots,z'\circ\nu(di))\in [2d]^d.
    \end{aligned}
    \end{equation}
\end{itemize}

The following soundness result was proved in~\cite{Dinur09:sicomp} for the case $d=2$, and in~\cite{GS20:icalp} for the case $d\geq 3$.

\begin{thm}[\cite{Dinur09:sicomp,GS20:icalp}]
\label{thm_soundess_third_reduction}
For each $\epsilon>0$ there exist $\epsilon'>0$, $t\in\N$ such that, given a $d$-to-$d$ instance $\Phi$, there are no independent sets of relative size $\epsilon$ in $\eta\Phi$ whenever $\isat_t(\Phi)<\epsilon'$. 
\end{thm}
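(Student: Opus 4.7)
The plan is to establish the contrapositive: if $\eta\Phi$ admits an independent set $I$ of relative size at least $\epsilon$, then $\isat_t(\Phi)\geq\epsilon'$ for some $\epsilon'=\epsilon'(\epsilon,d)>0$ and $t=t(\epsilon,d)\in\N$. For each variable $x\in X$, let $f_x:[2d]^n\to\{0,1\}$ be the indicator of the slice $\{z:(x,z)\in I\}$. A double-counting argument shows that an $\Omega(\epsilon)$-fraction of variables $x$ are \emph{heavy}, meaning $\mathbb{E}[f_x]\geq \epsilon/2$.

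Fix an edge $(x,x')\in E$ of $\Phi$, with associated permutations $\mu,\nu\in\Sym_n$. The construction of $\eta\Phi$ pairs up the blocks of coordinates $B_i^\mu=\{\mu(di-d+1),\dots,\mu(di)\}$ on the $x$-side with the blocks $B_i^\nu=\{\nu(di-d+1),\dots,\nu(di)\}$ on the $x'$-side, and places the Markov chain $T$ from Theorem~\ref{thm_transistion_matrix} on each such pair of blocks independently. The absence of $\eta\Phi$-edges between vertices of $I$ translates into the identity
\begin{align*}
\mathbb{E}_{(z,z')\sim T^{\otimes m}}\bigl[f_x(z)\,f_{x'}(z')\bigr]=0,
\end{align*}
where the expectation is with respect to the product measure induced by the chain on the $m$ paired blocks.

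The heart of the argument is to leverage the spectral gap guaranteed by Theorem~\ref{thm_transistion_matrix}: since every non-trivial eigenvalue of $T$ has modulus strictly below $1$, the product chain $T^{\otimes m}$ is a noise operator of strength bounded away from $1$. Applied to a pair of Boolean functions with non-negligible expectation and vanishing correlation, hypercontractivity combined with a Mossel-type invariance principle forces each of $f_x$ and $f_{x'}$ to possess a block $B_i^\mu$ (respectively $B_j^\nu$) whose low-degree influence exceeds a threshold $\theta=\theta(\epsilon,d)$. The standard $\ell^2$ budget on low-degree influences caps the number of such blocks per vertex, yielding for each heavy $x$ a list $L(x)\subseteq[n]$ of cardinality at most $t=\mathcal{O}(\theta^{-2})$ of candidate labels read off from the top-influential blocks.

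It remains to verify that $L$ supplies a $t$-assignment witnessing $\isat_t(\Phi)\geq\epsilon'$. For each edge $(x,x')$ between two heavy vertices, the pairing enforced by the $d$-to-$d$ permutations $\mu,\nu$ matches each large-influence block of $f_x$ to one of $f_{x'}$, so a pigeonhole argument extracts labels $a\in L(x)$, $b\in L(x')$ with $(a,b)\in\phi_{(x,x')}$; restricting to a suitable subset $S\subseteq X$ of heavy vertices on which these matches are simultaneously consistent produces the required $t$-assignment. The main obstacle — and the reason~\cite{GS20:icalp} had to extend~\cite{Dinur09:sicomp} beyond $d=2$ — is designing a Markov chain $T$ meeting all three requirements of Theorem~\ref{thm_transistion_matrix} for general $d$ and proving an appropriately matched invariance principle for the induced product chain $T^{\otimes m}$ acting on the $d$-to-$d$ block structure; the remaining bookkeeping that balances $\theta$ against the spectral gap and the $\epsilon/2$ lower bound on $\mathbb{E}[f_x]$ is routine.
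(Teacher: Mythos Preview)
The paper does not supply its own proof of this statement; it is quoted from \cite{Dinur09:sicomp} (case $d=2$) and \cite{GS20:icalp} (case $d\geq 3$), so there is no in-paper argument to compare against. Your outline does follow the architecture of those proofs: slice the independent set into per-vertex Boolean functions $f_x:[2d]^n\to\{0,1\}$, observe that independence of $I$ forces $\langle f_x, T^{\otimes m} f_{x'}\rangle=0$ along every $\Phi$-edge, and combine the spectral gap of $T$ with an invariance principle to decode short label lists.

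One step does need correcting. You write that the invariance principle hands $f_x$ an influential block $B_i^\mu$ and $f_{x'}$ a possibly different $B_j^\nu$, to be matched afterwards by pigeonhole. That would not work: the $d$-to-$d$ constraint $\phi_{(x,x')}$ pairs block $i$ on the $x$-side only with block $i$ on the $x'$-side, so separately influential blocks with $i\neq j$ yield no satisfying label pair at all, and no amount of pigeonhole repairs this. What is actually invoked is Mossel's two-function Gaussian bound, whose hypothesis reads $\max_i \min\bigl(\mathrm{Inf}_i^{\leq k}(f),\mathrm{Inf}_i^{\leq k}(g)\bigr)\leq\tau$; its contrapositive delivers a \emph{single} block index $i$ on which \emph{both} $f_x$ and $f_{x'}$ have low-degree influence above threshold, and that common $i$ is precisely what makes the extracted labels satisfy $\phi_{(x,x')}$. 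Relatedly, the list $L(x)$ has to be built from \emph{coordinate} influences of $f_x$---these are intrinsic to the vertex---rather than from block influences, which depend on the incident edge through $\mu$; passing from a common influential block to a coordinate in each per-vertex list costs only a factor of $d$ in the threshold. With these two adjustments your sketch matches the cited arguments.
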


We shall now prove that $\eta$ is quantum-complete.

\begin{thm}
\label{thm_quantum_completeness_dinur}
    Let $\Phi$ be a $d$-to-$d$ instance admitting a perfect $k$-compatible quantum assignment for some $k\in\N$. Then $\eta\Phi\leadsto^k\K_{2d}$.
\end{thm}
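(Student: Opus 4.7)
The plan is to lift the given quantum assignment $\mathfrak{Q}=\{Q_{x,a}:x\in X, a\in[n]\}$ of $\Phi$ to a $2d$-colouring quantum assignment of $\eta\Phi$ by ``projecting'' the alphabet $[n]=[md]$ onto $[2d]$ via the second coordinate of each vertex. Concretely, for every vertex $(x,z)$ of $\eta\Phi$ and every colour $c\in[2d]$, I would define
\begin{align*}
    W_{(x,z),c}\;=\;\sum_{a\in[n]:\,z(a)=c}Q_{x,a}.
\end{align*}
Since the $Q_{x,a}$ for fixed $x$ form a PVM, the summands are mutually orthogonal projectors, so $W_{(x,z),c}$ is itself a projector and $\sum_{c\in[2d]}W_{(x,z),c}=\sum_{a\in[n]}Q_{x,a}=\id_\HH$; hence $\{W_{(x,z),c}\}_{c\in[2d]}$ is a PVM for each $(x,z)$.

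The main technical step is to check perfectness, i.e.\ that $W_{(x,z),c}\cdot W_{(x',z'),c}=O$ for every edge $((x,z),(x',z'))$ of $\eta\Phi$ and every colour $c$. Expanding the product yields
\begin{align*}
    W_{(x,z),c}\cdot W_{(x',z'),c}
    \;=\;\sum_{\substack{a,a'\in[n]\\ z(a)=z'(a')=c}}Q_{x,a}\cdot Q_{x',a'}.
\end{align*}
Because $(x,x')\in E$ and $\mathfrak{Q}$ is a perfect quantum assignment of $\Phi$, each term with $(a,a')\notin\phi_{(x,x')}$ already vanishes. The crux is therefore to rule out the remaining case: I would show that no admitted pair $(a,a')\in\phi_{(x,x')}$ can have $z(a)=z'(a')=c$. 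Writing $\phi_{(x,x')}$ as $P_\mu(I_m\otimes J_d)P_\nu$, the condition $(a,a')\in\phi_{(x,x')}$ forces $a$ and $a'$ to sit in matching blocks of size $d$ under $\mu$ and $\nu$, respectively, so that there exists $i\in[m]$ and $j,j'\in[d]$ with $z(a)$ appearing as the $j$-th coordinate of $\alpha(i)$ and $z'(a')$ as the $j'$-th coordinate of $\beta(i)$. Hence $z(a)=z'(a')=c$ would imply $c\in\{\alpha(i)_1,\dots,\alpha(i)_d\}\cap\{\beta(i)_1,\dots,\beta(i)_d\}$. But condition $(ii)$ in the definition of $\eta\Phi$ together with the third property of $T$ from Theorem~\ref{thm_transistion_matrix} guarantees precisely that these two sets are disjoint, a contradiction. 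This is the step I expect to carry the combinatorial weight of the argument; the rest reduces to bookkeeping.

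Finally, for $k$-compatibility I would observe that any walk of length at most $k$ in the Gaifman graph of $\eta\Phi$ projects (by forgetting the $z$-coordinates) onto a walk of length at most $k$ in the Gaifman graph of $\Phi$. Hence $\dist_{\eta\Phi}((x,z),(x',z'))\le k$ implies $\dist_\Phi(x,x')\le k$, and the $k$-compatibility of $\mathfrak{Q}$ yields
\begin{align*}
    \bigl[W_{(x,z),c},W_{(x',z'),c'}\bigr]
    \;=\;\sum_{\substack{a:z(a)=c\\a':z'(a')=c'}}\bigl[Q_{x,a},Q_{x',a'}\bigr]
    \;=\;O.
\end{align*}
Combining the three steps gives a perfect $k$-compatible quantum assignment of $\eta\Phi$ into $\K_{2d}$, i.e.\ $\eta\Phi\leadsto^{k}\K_{2d}$, as required.
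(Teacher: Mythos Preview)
Your proof is correct. It is, however, a direct hands-on argument, whereas the paper proceeds abstractly: it identifies $\eta$ with the left Pultr functor $\Lambda_{\mathfrak{T}}$ for a faithful template $\mathfrak{T}$, proves classically that $\Lambda_{\mathfrak{T}}\dGames\to\K_{2d}$ (Proposition~\ref{prop_simple_lambda_dinur}, via the map $\xi(z^{(a)})=z(a)$), and then invokes Corollary~\ref{cor_left_pultr_functor} plus Proposition~\ref{prop_compose_classic_quantum_homos} to get $\eta\Phi=\Lambda_{\mathfrak{T}}\X_\Phi\leadsto^k\Lambda_{\mathfrak{T}}\dGames\to\K_{2d}$. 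If one unwinds that composition, one lands precisely on your formula $W_{(x,z),c}=\sum_{a:z(a)=c}Q_{x,a}$: it is exactly the pushforward of $\mathfrak{Q}$ along the classical colouring $\xi$, and your perfectness check is the concrete content of the homomorphism proof in Proposition~\ref{prop_simple_lambda_dinur}. So the two arguments coincide at the level of the constructed assignment; the paper's route demonstrates that the result is a formal consequence of the general quantum adjunction machinery developed in Section~\ref{sec_pultr_functors}, while your route is self-contained and bypasses the Pultr formalism entirely.
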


To establish Theorem~\ref{thm_quantum_completeness_dinur}, we use the results of Section~\ref{sec_pultr_functors}.
First of all, it shall be useful to see $\eta$ as the left Pultr functor associated with a suitable
Pultr template.
Let $\rho$ be the signature containing a unique, binary symbol $R$, and let $\tau$ be the signature containing a binary symbol $T_{\mu,\nu}$ for each choice of $\mu,\nu\in\Sym_n$.
Consider the $(\rho,\tau)$-Pultr template $\mathfrak{T}$ described as follows:
\begin{itemize}
    \item $\A$ is the union of $[2d]^n$ isolated vertices.
    \item For each symbol $T_{\mu,\nu}\in\tau$, $\B_{T_{\mu,\nu}}$ is a bipartite digraph with domain $B_1\sqcup B_2$, where $|B_1|=|B_2|=[2d]^n$. The edges of $\B_{T_{\mu,\nu}}$ (i.e., the elements of $R(\B_{T_{\mu,\nu}})$) are defined as follows. Given $z\in B_1$ and $z'\in B_2$, the pair $(z,z')$ forms an edge if and only if for each $i\in [m]$, the $(\alpha(i),\beta(i))$-th entry of the transition matrix $T$ is nonzero, with $\alpha(i)$ and $\beta(i)$ as in~\eqref{eqn_1248_2411}.
    \item For each symbol $T_{\mu,\nu}\in\tau$ and each $i\in [2]$, the homomorphism $\epsilon_{i,T_{\mu,\nu}}:\A\to\B_{T_{\mu,\nu}}$ sends $\A$ to the first or the second part of the domain of $\B_{T_{\mu,\nu}}$, depending on whether $i=1$ or $i=2$.
\end{itemize}

Let $\Lambda_\mathfrak{T}$ be the left Pultr functor corresponding to $\mathfrak{T}$. 
Given a $d$-to-$d$ instance $\Phi=(X,E,[n],\phi)$, let $\X_\Phi$ 
and $\dGames$ be the corresponding relational structures, as described in Section~\ref{subsec_CSP_homomorphisms}.
To spell this out,  $\X_\Phi$ 
is the $\tau$-structure having domain $X$ such that, for each $\mu,\nu\in\Sym_n$, 
\begin{align*}
T_{\mu,\nu}(\X_\Phi)=\{\bx\in E\suchThat \mu,\nu\mbox{ are the permutations associated with }\phi_\bx\}. 
\end{align*}
Moreover, $\dGames$ is the $\tau$-structure having domain $[n]$ and whose relations are defined as follows. Fix the matrix $H=[h_{ij}]=I_{m}\otimes J_d$.
For $\mu,\nu\in\Sym_n$, we let ${T_{\mu,\nu}}({\dGames})=\{(a,b)\in [n]^2:h_{\mu^{-1}(a),\nu^{-1}(b)}=1\}$. 
It is then straightforward to check that 
\begin{align}
\label{eqn_1349_1812}
\eta\Phi=\Lambda_{\mathfrak{T}}\X_\Phi.
\end{align}
The proof of the next fact uses the same ideas as in~\cite{Dinur09:sicomp,GS20:icalp}, though stated in the language of Pultr functors as needed in our setting.
 
%
%

\begin{prop}
\label{prop_simple_lambda_dinur}
    $\Lambda_{\mathfrak{T}}\dGames\to\K_{2d}$.
\end{prop}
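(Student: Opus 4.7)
The plan is to exhibit an explicit homomorphism $c\colon\Lambda_{\mathfrak{T}}\dGames\to\K_{2d}$. Unfolding Definition~\ref{defn_left_pultr_functor}, a vertex of $\Lambda_{\mathfrak{T}}\dGames$ is (the equivalence class of) a pair $(a,z)$ with $a\in[n]$ a vertex of $\dGames$ and $z\in[2d]^n$; an edge from $(a,z)$ to $(b,z')$ exists precisely when there are permutations $\mu,\nu\in\Sym_n$ such that $(a,b)\in T_{\mu,\nu}(\dGames)$ (equivalently, $h_{\mu^{-1}(a),\nu^{-1}(b)}=1$) and $(z,z')\in R(\B_{T_{\mu,\nu}})$. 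I will simply set $c(a,z)=z(a)\in[2d]$.

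To verify that $c$ is well defined on equivalence classes, I will note that each identification under $\sim$ glues $z^{(a)}\in\A^{(a)}$ to a vertex of some fiber $\B_{T_{\mu,\nu}}^{(\bx)}$ that still records the same pair $(a,z)$, so all identified vertices evaluate to the common value $z(a)$.

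The main step is checking the edge condition. Given an edge $((a,z),(b,z'))$ as above, the relation $h_{\mu^{-1}(a),\nu^{-1}(b)}=1$ together with the block form $H=I_m\otimes J_d$ produces an index $i\in[m]$ for which both $\mu^{-1}(a)$ and $\nu^{-1}(b)$ lie in the block $\{di-d+1,\dots,di\}$. Therefore $z(a)$ occurs as a coordinate of the $d$-tuple $\alpha(i)$ from~\eqref{eqn_1248_2411}, and $z'(b)$ as a coordinate of $\beta(i)$. The membership $(z,z')\in R(\B_{T_{\mu,\nu}})$ forces $t_{\alpha(i),\beta(i)}\neq 0$, whereupon property~(iii) of Theorem~\ref{thm_transistion_matrix} yields $\{\alpha(i)_1,\dots,\alpha(i)_d\}\cap\{\beta(i)_1,\dots,\beta(i)_d\}=\emptyset$, and in particular $z(a)\neq z'(b)$. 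Hence $(c(a,z),c(b,z'))\in R(\K_{2d})$, as desired.

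I expect this to be essentially a bookkeeping exercise: the genuine combinatorial content is the disjoint-support property~(iii) of the Markov chain transition matrix, and no obstacle beyond carefully unwinding the definition of the left Pultr functor is anticipated.
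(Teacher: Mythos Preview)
Your proposal is correct and takes essentially the same approach as the paper's proof: define the colour of the vertex corresponding to $(x,z)$ (with $x\in[n]$, $z\in[2d]^n$) as $z(x)\in[2d]$, then use the block structure of $H=I_m\otimes J_d$ together with property~(iii) of the transition matrix $T$ to verify that adjacent vertices receive distinct colours. Your edge verification is in fact slightly more explicit than the paper's, as you spell out why $z(a)$ and $z'(b)$ appear as coordinates of $\alpha(i)$ and $\beta(i)$ for the common block index $i$.
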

\begin{proof}

Take a vertex $a^{(x)}$ of $\Lambda_{\mathfrak{T}}\dGames$ where
    $a\in [2d]^n$ and $x\in [n]$, and define 
        $\xi(a^{(x)})=a(x)$.
    Also, for $\mu,\nu\in\Sym_n$, take a vertex $z^{(\bx)}$ of $\Lambda_{\mathfrak{T}}\dGames$, where $z\in\B_{T_{\mu,\nu}}$ and $\bx=(x_1,x_2)\in T_{\mu,\nu}(\dGames)$. Recall that the domain of $\B_{T_{\mu,\nu}}$ is split into two disjoint parts $B_1$ and $B_2$, each of size $[2d]^n$.    Let $\xi(z^{(\bx)})=z(x_1)$ if $z\in B_1$, and $\xi(z^{(\bx)})=z(x_2)$ if $z\in B_2$.
    We claim that the map $\xi$ yields the required homomorphism from $\Lambda_{\mathfrak{T}}\dGames$ to $\K_{2d}$.

    First of all, we need to make sure that $\xi$ respects the equivalence relation of Definition~\ref{defn_left_pultr_functor} and is thus well defined. Take $\mu,\nu\in\Sym_n$, $\bx\in T_{\mu,\nu}(\dGames)$, $j\in [2]$, and $a\in A=[2d]^n$. The two vertices $a^{(x_j)}$ and $[\epsilon_{j}(a)]^{(\bx)}$ are identified in $\Lambda_{\mathfrak{T}}\dGames$, so we need to prove that
    \begin{align}
    \label{eqn_1119_2411}
        \xi(a^{(x_j)})
        =
        \xi([\epsilon_{j}(a)]^{(\bx)}).
    \end{align}
    (We are dropping the index $T_{\mu,\nu}$ from $\epsilon_{j,T_{\mu,\nu}}$ to improve readability.)
    It follows from the definition of $\xi$ and $\epsilon_j$ that $\xi([\epsilon_{j}(a)]^{(\bx)})=a(x_1)$ if $j=1$, and $\xi([\epsilon_{j}(a)]^{(\bx)})=a(x_2)$ if $j=2$. Hence, $\xi([\epsilon_{j}(a)]^{(\bx)})=a(x_j)$. Since $\xi(a^{(x_j)})=a(x_j)$,~\eqref{eqn_1119_2411} follows.

    We now show that $\xi$ preserves the edge relation and is thus a homomorphism from $\Lambda_{\mathfrak{T}}\dGames$ to $\K_{2d}$. $\A$ has no edges, so we only need to consider the edges of $\Lambda_{\mathfrak{T}}\dGames$ coming from $\B_{T_{\mu,\nu}}$. Take $\bx\in T_{\mu,\nu}(\dGames)$ and $(z,z')\in R(\B_{T_{\mu,\nu}})$, so $(z^{(\bx)},z'^{(\bx)})\in R(\Lambda_{\mathfrak{T}}\dGames)$. (Recall that $R$ denotes the unique, binary symbol of the signature $\rho$.)
    We have $\xi(z^{(\bx)},z'^{(\bx)})=(z(x_1),z'(x_2))$. Since $(z,z')$ forms an edge of $\B_{T_{\mu,\nu}}$, the $(\alpha,\beta)$-th entry of the transition matrix $T$ used to describe the Pultr template $\mathfrak{T}$ is nonzero for each $i\in [m]$, where $\alpha$ and $\beta$ are as in~\eqref{eqn_1248_2411}. It then follows from the third property of $T$ that $z(x_1)\neq z'(x_2)$, which means that $(z(x_1),z'(x_2))\in R(\K_{2d})$, as needed.
\end{proof}

\begin{proof}[Proof of Theorem~\ref{thm_quantum_completeness_dinur}]
Let $\Phi$ be an instance of $d$-to-$d$ games, and let $\X_\Phi$ be the corresponding $\tau$-structure as defined above, so that the equation~\eqref{eqn_1349_1812} holds.
Recall from Section~\ref{subsec_CSP_homomorphisms} that 
$\Phi$ admitting a perfect $k$-compatible quantum assignment is equivalent to $\X_\Phi\leadsto^k\dGames$.
Observe that the Pultr template $\mathfrak{T}$ is faithful. 
Using Corollary~\ref{cor_left_pultr_functor}, we obtain $\Lambda_{\mathfrak{T}}\X_\Phi\leadsto^k\Lambda_{\mathfrak{T}}\dGames$. Composing this with the homomorphism from Proposition~\ref{prop_simple_lambda_dinur} via Proposition~\ref{prop_compose_classic_quantum_homos}, we conclude that $\eta\Phi=\Lambda_{\mathfrak{T}}\X_\Phi\leadsto^k\K_{2d}$, as required.
\end{proof}

\subsection{The line-digraph reduction}
\label{subsec_line_digraph}

The results of Sections~\ref{subsec_preprocessing} and~\ref{subsec_GS_reduction} yield a $2d$ vs. $\mathcal{O}(1)$ quantum chromatic gap, conditional to the pseudo-telepathy of $d$-to-$1$ games. The next task is to reduce the completeness parameter from $2d$ to $3$. To that end, we make use of a standard graph-theoretical reduction, which we now define (recall also Example~\ref{ex_some_pultr_functors}). 

\begin{defn}
Let $\X$ be a digraph having vertex set $X$ and edge set $R(\X)$.
The \emph{line digraph} of $\X$ is the digraph $\delta\X$ whose vertex set is $R(\X)$ and whose edge set is $$\{((x,y),(y,z))\suchThat (x,y),(y,z)\in R(\X)\}.$$
\end{defn}

The line-digraph reduction is known to change the chromatic number of digraphs in a controlled way, as described next.
For $n\in\N$, let $\alpha(n)=2^n$ and $\beta(n)={n\choose \floor{n/2}}$. Observe that $\alpha(n)\geq \beta(n)$ for each $n$.   
\begin{thm}[\cite{HarnerE72,poljak1981arc}]
\label{prop_chromatic_number_line_digraph}
Let $\X$ be a digraph and let $n\in\N$.
If $\delta\X\to\K_n$, then $\X\to\K_{\alpha(n)}$;
if $\X\to\K_{\beta(n)}$, then $\delta\X\to\K_n$.
\end{thm}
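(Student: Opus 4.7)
The plan is to prove the two implications separately, each via an explicit and short construction of a proper colouring, treating the digraph $\X$ as given.

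For the first implication, I assume a proper $n$-colouring $c:R(\X)\to [n]$ of the line digraph $\delta\X$ and define a map $f:X\to 2^{[n]}$ by
\begin{align*}
    f(x)=\{c(x,y):(x,y)\in R(\X)\},
\end{align*}
i.e., the set of colours assigned by $c$ to the edges leaving $x$. Identifying $2^{[n]}$ with the vertex set of $\K_{\alpha(n)}$, it then suffices to show that $f(x)\neq f(z)$ whenever $(x,z)\in R(\X)$. Letting $j=c(x,z)$, I would argue that $j\in f(x)$ by definition of $f$, whereas $j\notin f(z)$ because, for any outgoing edge $(z,w)$ of $z$, the pair $((x,z),(z,w))$ is an edge of $\delta\X$ and therefore $c(z,w)\neq c(x,z)=j$. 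Hence $f$ yields the required homomorphism $\X\to\K_{\alpha(n)}$.

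For the second implication, I would identify the colour set $[\beta(n)]$ with the family of middle-layer subsets of $[n]$, i.e.\ subsets of size $\lfloor n/2\rfloor$, and start from a proper colouring $c:X\to [\beta(n)]$. For each edge $(x,y)\in R(\X)$, the sets $c(x)$ and $c(y)$ are distinct and of equal cardinality, so $c(x)\setminus c(y)\neq\emptyset$; I would then define $g:R(\X)\to [n]$ by selecting, arbitrarily, some element $g(x,y)\in c(x)\setminus c(y)$. To verify that $g$ is a proper $n$-colouring of $\delta\X$, I would consider any edge $((x,y),(y,z))$ of $\delta\X$ and observe that $g(x,y)=g(y,z)=i$ would force $i$ to lie simultaneously in $c(x)\setminus c(y)$ and in $c(y)\setminus c(z)$, contradicting $i\notin c(y)$ and $i\in c(y)$.

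Both constructions are short and direct, and no substantial obstacle is anticipated. The only minor subtlety to watch out for in the first part is that a vertex $x\in X$ might have no outgoing edges, in which case $f(x)=\emptyset$; but such $x$ never appears as the tail of an edge of $\X$, so the verification is unaffected. A mirror construction using \emph{incoming} edges instead of outgoing ones would work equally well, and the choice between the two variants does not affect the combinatorial bounds $\alpha(n)=2^n$ and $\beta(n)=\binom{n}{\lfloor n/2\rfloor}$.
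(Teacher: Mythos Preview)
Your proof is correct. The paper does not give its own proof of this statement---it is quoted as a known result from~\cite{HarnerE72,poljak1981arc}---and your argument is precisely the standard one from those references: the out-neighbourhood colour-set map for the bound $\alpha(n)=2^n$, and the Sperner/middle-layer encoding for the bound $\beta(n)=\binom{n}{\lfloor n/2\rfloor}$.
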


\begin{thm}
\label{thm_quantum_completeness_arc_digraph}
    Let $\X$ and $\Y$ be digraphs and let $k\in\N$. If $\X\leadsto^{2k+2}\Y$, then $\delta\X\leadsto^{k}\delta\Y$.
\end{thm}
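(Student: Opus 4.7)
The plan is to recognise the line-digraph operator as the central Pultr functor $\Gamma$ associated with the $(\rho,\rho)$-Pultr template $\mathfrak{T}$ described in Example~\ref{ex_line_digraph_pultr}, where $\rho$ is the signature containing a unique binary symbol $R$. Given this identification, the statement will follow from a direct application of Corollary~\ref{cor_gamma_quantum_functor}.

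First I would recall from Example~\ref{ex_line_digraph_pultr} that $\mathfrak{T}$ has $\A$ equal to a directed edge $(a_1,a_2)$ and $\B_R$ equal to a directed path of length $2$ with edges $(b_1,b_2),(b_2,b_3)$, where $\epsilon_{1,R}$ maps $(a_1,a_2)\mapsto(b_1,b_2)$ and $\epsilon_{2,R}$ maps $(a_1,a_2)\mapsto(b_2,b_3)$. Comparing with Definition~\ref{defn_central_pultr_functor}, one checks that for any digraph $\X$, the structure $\Gamma\X$ coincides with the line digraph $\delta\X$: its vertices are the homomorphisms $\A\to\X$, i.e., the edges of $\X$, and its edges are the tuples $(\ell\circ\epsilon_{1,R},\ell\circ\epsilon_{2,R})$ as $\ell$ ranges over homomorphisms $\B_R\to\X$, which are precisely the pairs of the form $((x,y),(y,z))$ with $(x,y),(y,z)\in R(\X)$.

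Next I would verify that $\mathfrak{T}$ satisfies the connectivity hypothesis of Corollary~\ref{cor_gamma_quantum_functor}: both $\A$ and $\B_R$ are connected, and each of the two edges of $\B_R$ is the image under $\epsilon_{1,R}$ or $\epsilon_{2,R}$ of the unique edge of $\A$, so the second condition in the definition of connected Pultr template is trivially satisfied. The diameter computation is equally simple:
\[
\diam(\mathfrak{T})=\max\{\diam(\A),\diam(\B_R)\}=\max\{1,2\}=2.
\]

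Finally, taking $k'=(k+1)\cdot\diam(\mathfrak{T})=2k+2$, Corollary~\ref{cor_gamma_quantum_functor} yields the implication $\X\leadsto^{2k+2}\Y\;\Rightarrow\;\Gamma\X\leadsto^{k}\Gamma\Y$, which, via the identification above, is exactly $\delta\X\leadsto^{k}\delta\Y$. Since the heavy lifting has already been performed in the quantum Pultr adjunction machinery (specifically, Theorem~\ref{thm_adjunction_pultr_1} underlying the corollary), no genuine obstacle remains; the only things to check are the identification $\Gamma=\delta$ and the trivial diameter bound, which together explain the factor of $2$ in the hypothesis $\X\leadsto^{2k+2}\Y$.
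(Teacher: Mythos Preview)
Your proposal is correct and follows essentially the same approach as the paper's proof: identify $\delta$ as the central Pultr functor $\Gamma$ of the template from Example~\ref{ex_line_digraph_pultr}, observe that this template is connected with $\diam(\mathfrak{T})=2$, and invoke Corollary~\ref{cor_gamma_quantum_functor} with $k'=(k+1)\cdot 2=2k+2$. The paper's proof is just a terser version of exactly this argument.
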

\begin{proof}
Recall from Example~\ref{ex_line_digraph_pultr} that $\rho$ is the central Pultr functor corresponding to a connected Pultr template $\mathfrak{T}$. In particular, notice that $\diam(\mathfrak{T})=2$. Hence, the result immediately follows from Corollary~\ref{cor_gamma_quantum_functor}.
\end{proof}

\subsection{Putting all together}
\label{subsect_putting_all_together}
Combining the results of Subsections~\ref{subsec_preprocessing},~\ref{subsec_GS_reduction}, and~\ref{subsec_line_digraph}, we can now conclude the proof of the first main result of this paper.

\begin{thm*}
[Theorem~\ref{thm_main} restated]
Suppose that Conjecture~\ref{conj_d_to_1_khot_quantum} holds for some $d\geq 2$. Then, for any $c\in\N$, there exists a graph with quantum chromatic number $3$ and classical chromatic number larger than $c$.
\end{thm*}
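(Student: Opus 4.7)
The plan is to combine the three reductions developed in the preceding subsections. Fix $d\geq 2$ for which Conjecture~\ref{conj_d_to_1_khot_quantum} holds, and let $c\in\N$. I first fix an iteration count $r\in\N$ for the line-digraph large enough so that $\delta^r\K_{2d}\to\K_3$; a compatibility level $k\in\N$ large enough to survive the three phases of the reduction (in particular the compatibility decay $2k'+2\mapsto k'$ at each of the $r$ applications of Theorem~\ref{thm_quantum_completeness_arc_digraph}); and classical-soundness parameters $\epsilon'>0$, $t\in\N$ from Theorem~\ref{thm_soundess_third_reduction} selected to ensure $\chi(\eta\Phi)>\alpha^{(r)}(c)$, whence $\chi(\delta^r\eta\Phi)>c$ by the contrapositive of the first half of Theorem~\ref{prop_chromatic_number_line_digraph}.

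With these parameters in place, I apply Proposition~\ref{prop_d_to_d_after_dinur} to obtain a $d$-to-$d$ instance $\Phi$ admitting a perfect $k$-compatible quantum assignment and satisfying $\isat_t(\Phi)<\epsilon'$. Theorem~\ref{thm_quantum_completeness_dinur} then gives $\eta\Phi\leadsto^{k}\K_{2d}$, while Theorem~\ref{thm_soundess_third_reduction} gives $\chi(\eta\Phi)>\alpha^{(r)}(c)$. Iterating Theorem~\ref{thm_quantum_completeness_arc_digraph} $r$ times yields $\delta^r\eta\Phi\leadsto^{k_r}\delta^r\K_{2d}$ for the appropriate $k_r\geq 0$; composing with the homomorphism $\delta^r\K_{2d}\to\K_3$ via Proposition~\ref{prop_compose_classic_quantum_homos} gives $\delta^r\eta\Phi\leadsto^{k_r}\K_3$, hence $\chi_q(\delta^r\eta\Phi)\leq 3$. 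On the classical side, iterating the first half of Theorem~\ref{prop_chromatic_number_line_digraph} produces $\chi(\delta^r\eta\Phi)>c$; in particular $\delta^r\eta\Phi$ is not bipartite, so $\chi_q(\delta^r\eta\Phi)\geq 3$. Combining, $\chi_q(\delta^r\eta\Phi)=3$ and $\chi(\delta^r\eta\Phi)>c$, as required.

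The main technical delicacy lies in the parameter bookkeeping and in confirming the existence of an iteration count $r$ with $\delta^r\K_{2d}\to\K_3$, which relies on the known behaviour of the iterated line-digraph of the complete graph under Theorem~\ref{prop_chromatic_number_line_digraph}. Beyond that, the proof is essentially a matter of threading the three reductions together and tracking how the quantum compatibility level and classical chromatic-number bound degrade across the pipeline.
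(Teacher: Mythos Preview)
Your proposal is correct and follows essentially the same approach as the paper: obtain a $d$-to-$d$ instance via Proposition~\ref{prop_d_to_d_after_dinur}, apply $\eta$ (Theorems~\ref{thm_soundess_third_reduction} and~\ref{thm_quantum_completeness_dinur}), iterate the line digraph, and compose with a classical homomorphism $\delta^{r}\K_{2d}\to\K_3$.

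Two small points worth tightening. First, Theorem~\ref{prop_chromatic_number_line_digraph} alone does \emph{not} give $\delta^{r}\K_{2d}\to\K_3$: since $\beta(3)=3$ and $\beta(4)=6$, the second half of that theorem only brings you down to $\K_4$ and then stalls. The paper closes the gap with the separate fact (cited from \cite{zhu1998survey}) that $\delta^{(2)}\K_4\to\K_3$, and chooses the iteration count accordingly as $i=j+2$ where $\beta^{(j)}(4)\geq 2d$. Second, you omit the final symmetrisation of the digraph $\delta^{r}\eta\Phi$ into an undirected graph; the paper notes explicitly that this leaves both the classical and quantum chromatic numbers unchanged.
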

\begin{proof}
Take an integer $d\geq 2$ such that Conjecture~\ref{conj_d_to_1_khot_quantum} holds for $d$.
Recall the functions $\alpha$ and $\beta$ defined in Section~\ref{subsec_line_digraph}.
For $i\in \N$, we denote by $\alpha^{(i)}$ (resp. $\beta^{(i)}$) the function obtained by iterating $\alpha$ (resp. $\beta$) $i$-many times.
Observe that $\beta(n)>n$ whenever $n\geq 4$. Hence, there exists some $j\in\N$ such that $\beta^{(j)}(4)\geq 2d$. Let $i=j+2$, and define $k'=3\cdot 2^i-2$. 
Assume, without loss of generality, that $c\geq 2$.
Define $\epsilon=\frac{1}{\alpha^{(i)}(c)}$, and choose $\epsilon'>0$, $t\in\N$ witnessing the truth of Theorem~\ref{thm_soundess_third_reduction} applied to $\epsilon$. Using Proposition~\ref{prop_d_to_d_after_dinur}, find a $d$-to-$d$ instance $\Phi$ such that 
$\Phi$ admits a perfect $k'$-compatible quantum assignment
but $\isat_t(\Phi)<\epsilon'$.
Consider the digraph $\delta^{(i)}\eta\Phi$, where $\delta^{(i)}$ denotes the $i$-fold application of the line-digraph operator $\delta$ of Section~\ref{subsec_line_digraph} and $\eta$ is the reduction from Section~\ref{subsec_GS_reduction}. Let also $\GG$ be the undirected graph obtained by replacing each directed edge of $\delta^{(i)}\eta\Phi$ by an undirected edge. We claim that $\GG$ has quantum chromatic number $3$ and classical chromatic number larger than $c$, which would conclude the proof.

\emph{(Classical) soundness}.\quad Theorem~\ref{thm_soundess_third_reduction} implies that $\eta\Phi$ has no independent sets of relative size $\epsilon$. This means that the classical chromatic number of $\eta\Phi$ is larger than $\frac{1}{\epsilon}=\alpha^{(i)}(c)$; hence, $\eta\Phi\not\to\K_{\alpha^{(i)}(c)}$. By repeatedly applying (the contrapositive of the first part of) Theorem~\ref{prop_chromatic_number_line_digraph}, we deduce that $\delta^{(i)}\eta\Phi\not\to\K_c$. Since the classical chromatic number of a digraph is equal to that of the corresponding undirected graph, we obtain $\GG\not\to\K_c$.

\emph{(Quantum) completeness}.\quad Theorem~\ref{thm_quantum_completeness_dinur} guarantees that $\eta\Phi\leadsto^{k'}\K_{2d}$. Since $\K_{2d}\to\K_{\beta^{(j)}(4)}$, Proposition~\ref{prop_compose_classic_quantum_homos} ensures that $\eta\Phi\leadsto^{k'}\K_{\beta^{(j)}(4)}$. Consider now the function $f:\N\to\N$ given by $n\mapsto 2n+2$, and observe that its $i$-fold iteration applied to the input $1$ is $f^{(i)}(1)=3\cdot 2^i-2=k'$. Hence, a repeated application of Theorem~\ref{thm_quantum_completeness_arc_digraph} yields
\begin{align}
    \label{eqn_1933_1912}
    \delta^{(i)}\eta\Phi\leadsto^1\delta^{(i)}\K_{\beta^{(j)}(4)}.
\end{align}
Starting from the identity homomorphism $\K_{\beta^{(j)}(4)}\to\K_{\beta^{(j)}(4)}$, a repeated application of (the second part of) Theorem~\ref{prop_chromatic_number_line_digraph} yields $\delta^{(j)}\K_{\beta^{(j)}(4)}\to\K_4$. Then, the functoriality of $\delta$ ensures that
\begin{align}
    \label{eqn_1933_1912_b}\delta^{(i)}\K_{\beta^{(j)}(4)}
    =
    \delta^{(2)}\delta^{(j)}(\K_{\beta^{(j)}(4)})\to\delta^{(2)}\K_4\to\K_3,
\end{align}
where the last homomorphism comes from the fact that $\delta^{(2)}\K_4$ is classically $3$-colourable---as was shown in~\cite{zhu1998survey} (see also~\cite{poljak1991coloring}). Combinining~\eqref{eqn_1933_1912} and~\eqref{eqn_1933_1912_b} via Proposition~\ref{prop_compose_classic_quantum_homos}, we obtain $\delta^{(i)}\eta\Phi\leadsto^1\K_3$, which implies that $\delta^{(i)}\eta\Phi\leadsto^0\K_3$.
Like in the classical case, making the edges of a digraph undirected does not affect the quantum chromatic number. Hence, we deduce that $\GG\leadsto^0\K_3$. Finally, as noted in~\cite{MancinskaR16}, a graph has quantum chromatic number $2$ precisely when it is bipartite. Since $\GG\not\to\K_c$ and $c\geq 2$, $\GG$ is not bipartite.
It follows that the quantum chromatic number of $\GG$ is exactly $3$, which concludes the proof.
\end{proof}

\begin{rem}
\label{rem_higher_consistency_gap}
    We point out that the proof of Theorem~\ref{thm_main} can be easily modified to enforce local compatibility of the perfect quantum assignment witnessing that the graph is quantum $3$-colourable. More precisely, conditional to Conjecture~\ref{conj_d_to_1_khot_quantum}, for each $k\in\N$ there exist graphs $\GG$ having arbitrarily large classical chromatic number and such that $\GG\leadsto^k\K_3$. To prove this, one simply needs to set $k'=3\cdot 2^{i+k}-2$ instead of $k'=3\cdot 2^{i}-2$ in the proof of Theorem~\ref{thm_main}. This choice guarantees that~\eqref{eqn_1933_1912} can be replaced by the stronger
    \begin{align*} \delta^{(i)}\eta\Phi\leadsto^k\delta^{(i)}\K_{\beta^{(j)}(4)},
\end{align*}
which implies that $\GG\leadsto^k\K_3$.

This fact, together with an upper bound on the quantum chromatic gap in terms of the number of shared qubits in the quantum strategy recently proved in~\cite{simul_banakh}, has the interesting consequence that the Hilbert spaces necessary for the validity of Conjecture~\ref{conj_d_to_1_khot_quantum} cannot have bounded dimension.
More precisely, it was shown in~\cite[Theorem~29]{simul_banakh} that, for each graph $\GG$ and each $n\in\N$, if $\GG\leadsto^1\K_n$ on a Hilbert space of dimension $p$, then $\GG\to\K_{n\cdot \alpha_p}$ with $\alpha_p=(2+\mathcal{o}(1))^{p-1}$. 
Now, all reductions from $d$-to-$1$ games to $3$ vs. $\mathcal{O}(1)$ colouring described in Section~\ref{sec_reductions}  preserve the Hilbert space $\HH$ witnessing perfect quantum assignments, as discussed in Remarks~\ref{rem_pultr_preserves_H} and~\ref{rem_appendix_reductions_preserve_H}. It follows that, in Conjecture~\ref{conj_d_to_1_khot_quantum}, the dimension of the Hilbert spaces witnessing the presence of a perfect quantum assignment for the classically $\epsilon$-unsatisfiable instance $\Phi$ must approach infinity as  $\epsilon$ approaches $0$.
\end{rem}

\section{The quantum  chromatic gap via $3$XOR games}
\label{sec_quantum_chromatic_gap_3xor}
The Mermin--Peres ``magic square'' construction gives a well-known example of a pseudo-telepathic system of Boolean equations. The classical value of the system is strictly smaller than $1$; however, by sharing a pair of entangled Bell states and measuring it via PVMs arising from combinations of Pauli matrices, the corresponding non-local game can be won with probability $1$ by two cooperating non-communicating players~\cite{mermin1990simple,mermin1993hidden,peres1990incompatible}.
The Mermin--Peres system is an instance of $3$XOR, i.e., the CSP expressing systems of Boolean equations involving three variables each. Furthermore, the  Mermin--Peres system is \textit{$2$-regular}, in that every variable appears in at most (in fact, precisely) $2$ equations, and two distinct equations share at most one variable.
Systems of linear equations can be naturally interpreted as instances of ``many-to-$1$'' games: For every partial assignment to some equation, there exists precisely one consistent assignment to each of the variables appearing in the equation.
Hence, a naive approach to proving Conjecture~\ref{conj_d_to_1_khot_quantum}---and, thus, establishing Theorem~\ref{thm_main} unconditionally---could be to consider a parallel-repetition version of the Mermin--Peres magic square game. In that way, the classical value of the game is decreased while preserving the existence of a perfect quantum assignment (which is obtained by simply taking the tensor product of PVMs corresponding to one iteration of the game). 

Unfortunately, it is not hard to see that this approach cannot work: Increasing the number of repetitions of the game generates label-cover instances that are $d$-to-$1$ for increasingly large $d$. On the other end, in order to prove Theorem~\ref{thm_main} by transferring pseudo-telepathy from the instances in Conjecture~\ref{conj_d_to_1_khot_quantum}, one would need to reduce the soundness parameter $\epsilon$ to zero while leaving $d$ \textit{fixed}.
Therefore, we consider a stronger version of repetition for $3$XOR games. It shall be convenient to present it in the form of a non-local game. Let $S$ be a system of linear equations over the Boolean field $\F_2$, and fix an integer $n\in\N$. Alice receives from the verifier a list $\bu=(u_1,\dots,u_n)$ of equations from $S$, and replies with an assignment $\vartheta:\widetilde\bu\to\F_2$, where $\tilde\bu$ is the set of all variables appearing in the equations from $\bu$. Similarly, Bob receives from the verifier a list $\bw=(w_1,\dots,w_n)$ of equations from $S$ and replies with an assignment $\vartheta':\widetilde\bw\to\F_2$. Alice and Bob win the game if $\vartheta$ respects the equations in $\bu$, $\vartheta'$ respects the equations in $\bw$, and the replies are consistent, in the sense that 
\begin{align}
\label{eqn_0801_1029}
\vartheta|_{\widetilde\bu\cap\widetilde{\bw}}=\vartheta'|_{\widetilde\bu\cap\widetilde{\bw}}. 
\end{align}
We denote this game by $\mathcal{G}(S,n)$.
Note that the last requirement would not be enforced in a standard parallel repetition of $3$XOR, where Alice's and Bob's replies need not be consistent across different iterations of the game. 
%


Let $s(\bu)$ denote the set of assignments $\vartheta:\widetilde\bu\to\F_2$ such that $\vartheta$ satisfies all equations in $\bu$.
Using Proposition~\ref{thm_quantum_assignments_as_projectors}, we have that a perfect quantum strategy $\mathfrak{Q}$ for the $\mathcal{G}(S,n)$ game is given by projectors $Q_{\bu,\vartheta}$ onto a Hilbert space $\HH$ for each $n$-tuple $\bu$ of equations and each $\vartheta\in s(\bu)$, such that
\begin{itemize}
    \item $\sum_{\vartheta\in s(\bu)}Q_{\bu,\vartheta}=\id_\HH$ for each $\bu$, and
    \item $Q_{\bu,\vartheta}\cdot Q_{\bw,\vartheta'}=O$ whenever $\vartheta|_{\widetilde\bu\cap\widetilde{\bw}}\neq\vartheta'|_{\widetilde\bu\cap\widetilde{\bw}}$.
\end{itemize}
We shall further require that the strategy is $1$-compatible; i.e., 
\begin{itemize}
    \item $[Q_{\bu,\vartheta},Q_{\bw,\vartheta'}]=O$ whenever $\tilde\bu\cap\tilde\bw\neq \emptyset$. 
\end{itemize}

As usual, we let $\sat(S)$ denote the maximum fraction of equations simultaneously satisfiable by a classical assignment. 
Moreover, as a generalisation of the regularity property of the Mermin--Peres construction, a system of linear equations is said to be \textit{regular} for some $p\in\N$ (which we consider as a fixed constant) if $(i)$ every variable appears in at most $p$ equations, and $(ii)$ two distinct equations share at most one variable. The main result of this section is the following.

\begin{thm*}
[Theorem~\ref{thm_equations_to_colouring} restated]
    Suppose that there exists some $0<s^*<1$ such that, for each $n\in\N$, there exists a regular instance $S$ of $3$XOR for which $\sat(S)<s^*$ but $\mathcal{G}(S,n)$ admits a perfect $1$-compatible quantum assignment. Then, for each $c\in\N$, there exists a graph with quantum chromatic number at most $4$ and classical chromatic number larger than $c$.
\end{thm*}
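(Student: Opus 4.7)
The plan is to realise the strategy sketched in Section~\ref{subsec_quantum_chromatic_gap_3xor}: build on the geometric framework of~\cite{Khot17:stoc-independent,Khot18:focs-pseudorandom,Dinur18:stoc-non-optimality,Dinur18:stoc-towards} to reduce a pseudo-telepathic \textsc{3XOR} instance $S$ to a strongly pseudo-telepathic $2$-to-$2$ instance $\Phi_S$, then feed $\Phi_S$ into the pipeline of Sections~\ref{subsec_GS_reduction} to obtain a graph with the required chromatic gap. In detail, I would first spell out the reduction: fix an integer $\ell$ (to be chosen as a function of $c$), take $\Phi_S$ to have variables of the form $L\oplus H_\bu$ where $\bu$ is an $n$-tuple of equations of $S$, $H_\bu$ is the $\F_2$-linear space encoding the variables of $\bu$ together with the equations they satisfy, and $L$ is an $\ell$-dimensional subspace linearly independent from $H_\bu$; the alphabet consists of linear functionals $\psi:L\oplus H_\bu\to\F_2$ consistent with the equations in $\bu$, and the $2$-to-$2$ constraints relate vertices whose $H_\bu$-spaces intersect appropriately.

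The first key step, and in my view the heart of the proof, is quantum completeness. Given a perfect $1$-compatible quantum assignment $\{Q_{\bu,\vartheta}\}$ for $\mathcal{G}(S,n)$, I would define
\[
W_{L\oplus H_\bu,\psi}=\sum_{\vartheta}Q_{\bu,\vartheta},
\]
where the sum runs over those $\vartheta\in s(\bu)$ whose associated linear functional agrees with $\psi$ on $L\oplus H_\bu$. Three things must then be verified: that the $W$'s form a PVM on each vertex (this follows from orthogonality of the $Q_{\bu,\vartheta}$'s over fixed $\bu$ and completeness $\sum_\vartheta Q_{\bu,\vartheta}=\id_{\HH}$); that adjacent vertices yield commuting PVMs (from $1$-compatibility of $\mathfrak{Q}$, since two adjacent variables of $\Phi_S$ share a common equation list); and that every forbidden pair $(\psi,\psi')$ on an edge annihilates the corresponding product $W\cdot W'$. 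The last point is the delicate one: I would combine the consistency relation~\eqref{eqn_0801_1029} (which kills $Q_{\bu,\vartheta}\cdot Q_{\bw,\vartheta'}$ whenever $\vartheta$ and $\vartheta'$ disagree on $\widetilde\bu\cap\widetilde\bw$) with the extension Lemma~\ref{lem_extension_side_conditions} of~\cite{Khot17:stoc-independent}, which asserts that any partial assignment on the intersection extends to a full assignment respecting the side conditions; this is exactly what is needed to show that if $(\psi,\psi')$ violates the constraint in $\Phi_S$, then every pair $(\vartheta,\vartheta')$ contributing to $W\cdot W'$ disagrees on $\widetilde\bu\cap\widetilde\bw$.

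For classical soundness I would simply invoke the Grassmann-graph expansion analysis of~\cite{Khot17:stoc-independent,Khot18:focs-pseudorandom,Dinur18:stoc-non-optimality,Dinur18:stoc-towards}, which shows that for large enough $n$ and suitable choice of $\ell$, the hypothesis $\sat(S)<s^*$ forces the classical value of $\Phi_S$ to be arbitrarily small. Since the downstream reduction $\eta$ of Section~\ref{subsec_GS_reduction} requires the $\isat_t$ measure and not $\sat$, I would bridge the two using the result of~\cite{DawarM25} as flagged in Section~\ref{subsec_preprocessing}. This yields, for every $\epsilon,t$, a $2$-to-$2$ instance $\Phi$ with $\isat_t(\Phi)<\epsilon$ admitting a perfect $1$-compatible quantum assignment. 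Applying Theorem~\ref{thm_quantum_completeness_dinur} and Theorem~\ref{thm_soundess_third_reduction} to $\Phi$ (in the case $d=2$) produces a digraph $\eta\Phi$ such that $\eta\Phi\leadsto^1\K_4$ while $\eta\Phi\not\to\K_c$; undirecting the edges then gives the desired graph.

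The main obstacle will be the quantum completeness analysis. It is tempting, as in the proof of Theorem~\ref{thm_main}, to further reduce the quantum chromatic number from $4$ to $3$ by composing with the line-digraph operator, but this is blocked by the compatibility decay of Theorem~\ref{thm_quantum_completeness_arc_digraph}: demanding $2$-compatibility of the $\mathcal{G}(S,n)$-strategy would force commutativity of \emph{all} projectors in the assignment (because consistency makes the Gaifman graph of $\mathcal{G}(S,n)$ have diameter at most $2$), collapsing the strategy to a classical one. Hence the completeness parameter $4$ in Theorem~\ref{thm_equations_to_colouring} appears to be intrinsic to this route, and the bulk of the technical work will lie precisely in the careful geometric bookkeeping needed to push the perfect quantum strategy of $\mathcal{G}(S,n)$ through the Grassmann-graph reduction without any loss in completeness.
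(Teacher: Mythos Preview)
Your proposal is correct and follows essentially the same route as the paper: the quantum completeness argument via the projectors $W_{L\oplus H_\bu,\psi}=\sum_\vartheta Q_{\bu,\vartheta}$ together with Lemma~\ref{lem_extension_side_conditions} is exactly Theorem~\ref{thm_quantum_completeness_rho_minzer}, and the remaining pipeline (soundness via Theorem~\ref{thm_soundess_first_reduction}, the $\sat\to\isat$ bridge of Lemma~\ref{lem_classical_soundness_minzer_isat}, then Theorems~\ref{thm_soundess_third_reduction} and~\ref{thm_quantum_completeness_dinur}) matches the paper's proof. One small imprecision: adjacency in $\Phi_S$ does not literally force $\bu$ and $\bw$ to share an equation, only that $\tilde\bu\cap\tilde\bw\neq\emptyset$; the paper extracts this from the dimension conditions defining an edge together with $\ell\geq 2$, so you should make that step explicit.
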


\subsection{Description of the reduction}
\label{subsec_description_rho_one}
We now outline the reduction used in~\cite{Khot17:stoc-independent,Khot18:focs-pseudorandom,Dinur18:stoc-non-optimality,Dinur18:stoc-towards} to establish the $2$-to-$2$ Games Theorem with imperfect completeness. We shall mostly follow the presentation in~\cite[ch.~3]{minzer2022monotonicity}.

Let $S$ be a regular instance of $3$XOR. 
The final reduction $\rho$
is the composition of two reductions $\rho_1$ and $\rho_2$:
The first takes $S$ as input and returns a label-cover instance $\rho_1 S $ all of whose constraints are either $1$-to-$1$ or $2$-to-$2$; the second takes  $\rho_1 S$ as input and returns a $2$-to-$2$ instance $\rho S=\rho_2 \rho_1 S$.

Fix positive integers $n\in\N$ and $2\leq\ell\in\N$, which we consider fixed parameters of the reduction.
For any equation $u$ in $S$, we let its \textit{right-hand side} be the number $b\in\F_2$ such that $u$ is the equation $x+y+z=b$. 
Consider the set $\mathcal{U}$ containing all $n$-tuples $\bu=(u_1,\dots,u_n)$ of equations in $S$ that are \textit{legitimate}, i.e., $(i)$ all equations $u_1,\dots,u_n$ are distinct and do not share any variable, and $(ii)$ if two distinct variables $x_1,x_2$ appear in two distinct equations of $\bu$, there is no equation in $S$ in which both $x_1$ and $x_2$ appear.
Let $X$ be the set of variables appearing in the system $S$.
We shall consider the set $\F_2^X$, which we view as a $|X|$-dimensional vector space over the Boolean field $\F_2$. Given a set $X'\subseteq X$, the vector space $\F_2^{X'}$ is canonically identified with a subspace of $\F_2^X$ (where the identification simply consists in padding each vector $v\in\F_2^{X'}$ with $(|X|-|X'|)$-many zero entries).
For $x\in X$, we let $v_x$ be the one-hot encoding of $x$; i.e, the vector of $\F_2^X$ whose $x$-th entry is $1$ and all other entries are $0$.
%
Also, for each $i\in [n]$, we consider the vector $v_{u_i}=v_x+v_y+v_z\in\F_2^{X}$, where $x$, $y$, and $z$ are the variables in $u_i$.
Denote the space $\Span(v_{u_1},\dots,v_{u_n})$ by $H_\bu$. For a space $X'\subseteq X$, we say that a linear function $\psi:\F_2^{X'}\to\F_2$ \emph{respects} $\bu$ if $\psi(v_{u_i})=b_i$ for each $i\in [n]$ for which $v_{u_i}\in\F_2^{X'}$, where $b_i$ is the right-hand side of $u_i$. Fix a tuple $\bu\in\mathcal{U}$, and let $\tilde \bu$ denote the set of variables appearing in the equations of $\bu$, as usual. 
Consider the set $\mathcal{L}_\bu$ containing all vector subspaces $L\subseteq\F_2^{\tilde \bu}$ such that $\dim(L)=\ell$ and $L\cap H_\bu=\{\bzero\}$.
The next property of $\mathcal{L}_\bu$ shall be crucial for establishing quantum completeness of $\rho_1$.
\begin{lem}[\cite{Khot17:stoc-independent}]
\label{lem_extension_side_conditions}
    Choose $\bu,\bw\in\mathcal{U}$ and $L\in\mathcal{L}_\bu$. Then any linear function on $L\oplus H_\bu$ that respects $\bu$ admits a unique linear extension to $L\oplus H_\bu\oplus H_{\bw}$ that respects $\bw$.
\end{lem}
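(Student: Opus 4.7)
The plan is to handle uniqueness and existence separately. For uniqueness, I first note that by the legitimacy of $\bw$ the vectors $v_{w_1},\ldots,v_{w_n}$ have pairwise disjoint supports of size three, so they form a basis of $H_\bw$. Any linear extension $\tilde\psi$ of $\psi$ that respects $\bw$ must therefore satisfy $\tilde\psi(v_{w_j})=c_j$ for every $j\in[n]$, where $c_j$ is the right-hand side of $w_j$; together with the prescribed values on $L\oplus H_\bu$, this pins $\tilde\psi$ down on the whole subspace $L+H_\bu+H_\bw$, which gives uniqueness.

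The bulk of the argument is existence, which reduces to checking that the forced values on $H_\bw$ are compatible with $\psi$ on the (a priori possibly nontrivial) intersection $(L+H_\bu)\cap H_\bw$. The key structural claim I plan to prove is
\[
(L+H_\bu)\cap H_\bw \;=\; H_\bu\cap H_\bw \;=\; \Span\{v_u : u\in \bu\cap \bw\}.
\]
Since $L\oplus H_\bu\subseteq\F_2^{\widetilde\bu}$, this reduces to the inclusion $H_\bw\cap\F_2^{\widetilde\bu}\subseteq H_\bu$. Given $v\in H_\bw\cap\F_2^{\widetilde\bu}$, I would expand $v=\sum_{j\in J}v_{w_j}$ in the disjoint-support basis of $H_\bw$; the disjointness forces each $w_j$ with $j\in J$ to have all three of its variables already inside $\widetilde\bu$. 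The contrapositive of legitimacy condition $(ii)$ for $\bu$ then forbids any two of these three variables from living in distinct equations of $\bu$, so all three must sit inside a single equation $u_{i(j)}$ of $\bu$; the regularity assumption on $S$ (two distinct equations share at most one variable) upgrades this to $w_j = u_{i(j)}$, whence $v=\sum_{j\in J}v_{u_{i(j)}}\in H_\bu$.

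Given this structural claim, consistency is immediate: on $(L+H_\bu)\cap H_\bw$ the identifications $w_j=u_{i(j)}$ also give $c_j=b_{i(j)}$ (they are the right-hand side of the same equation of $S$), so the value $\sum_{j\in J}c_j$ forced by respecting $\bw$ matches $\sum_{j\in J}b_{i(j)}$, which is precisely $\psi(v)$. Defining $\tilde\psi(v_{w_j}):=c_j$ and extending linearly then yields a well-defined extension of $\psi$ to $L+H_\bu+H_\bw$ that respects $\bw$, completing the proof. The step I expect to be the main obstacle is the structural claim: it requires using \emph{both} parts of the legitimacy hypothesis for $\bu$ together with the regularity of $S$ to trap $H_\bw\cap\F_2^{\widetilde\bu}$ inside $H_\bu$, and getting the combinatorial bookkeeping right is the delicate step.
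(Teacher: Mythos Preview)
The paper does not prove this lemma itself; it is cited from \cite{Khot17:stoc-independent} without argument, so there is no in-paper proof to compare against. Your argument is correct: the key structural claim $(L+H_\bu)\cap H_\bw=\Span\{v_u:u\in\bu\cap\bw\}$ follows exactly as you describe from the disjoint-support basis of $H_\bw$, condition~$(ii)$ of legitimacy for $\bu$ (combined with condition~$(i)$ to pin all three variables to a single $u_{i(j)}$), and the regularity of $S$ to force $w_j=u_{i(j)}$, after which the consistency check for existence is immediate since a shared equation has a single right-hand side.
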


We let $\rho_1 S$ be the unweighted label-cover instance $(V,E,A,\phi)$ defined as follows.
The variable set $V$ is the set
$\{
    L\oplus H_\bu:\bu\in\mathcal{U},L\in\mathcal{L}_\bu
    \}$.
The alphabet $A$ is the set $[2^\ell]$.
Since each of the $2^\ell$-many linear functions $L\to\F_2$ can be uniquely extended to a linear function on $L\oplus H_\bu$ that respects $\bu$, the elements of $[2^\ell]$ are naturally identified with linear functions $L\oplus H_\bu\to\F_2$ respecting $\bu$.
Given a pair $e=(L\oplus H_\bu,L'\oplus H_{\bw})$ of variables, we let $e\in E$ if and only if 
$
    \dim(L\oplus H_\bu\oplus H_{\bw})
    =
    \dim(L'\oplus H_\bu\oplus H_{\bw})
    =
    \dim(L\oplus L'\oplus H_\bu\oplus H_{\bw})
$
(in which case the constraint is $1$-to-$1$), or 
$
    \dim(L\oplus H_\bu\oplus H_{\bw})
    =
    \dim(L'\oplus H_\bu\oplus H_{\bw})
    =
    \dim(L\oplus L'\oplus H_\bu\oplus H_{\bw})-1
$
(in which case, the constraint is $2$-to-$2$). In both cases, the set $\phi_e$ of admitted labellings is defined as follows. Given a linear function $\psi:L\oplus H_\bu\to\F_2$ (i.e., an element of the alphabet),
let $\psi^{(\bw)}$ be its unique linear extension to $L\oplus H_\bu\oplus H_{\bw}$ respecting $\bw$, as per Lemma~\ref{lem_extension_side_conditions}; similarly, given $\psi':L'\oplus H_{\bw}\to\F_2$, let $\psi'^{(\bu)}$ be its unique linear extension to $L'\oplus H_\bu\oplus H_{\bw}$ respecting $\bu$. The pair $(\psi,\psi')$ belongs to $\phi_e$ if and only if $\psi^{(\bw)}$ and $\psi'^{(\bu)}$ agree on the shared space $(L\oplus H_\bu\oplus H_{\bw})\cap (L'\oplus H_\bu\oplus H_{\bw})$. It is not hard to show that the resulting constraints are $1$-to-$1$ or $2$-to-$2$ according to which of the two conditions above holds.

It was shown in~\cite{Khot17:stoc-independent} that the label-cover instance $\rho_1 S$ described above satisfies a specific \textit{transitivity property}. This is used in order to perform the second step $\rho_2$ of the reduction, which discards all $1$-to-$1$ constraints of $\rho_1 S$ and assigns a suitable weight $\pi$ to the $2$-to-$2$ constraints, so that the composition $\rho S=\rho_2 \rho_1 S$ is a (weighted) $2$-to-$2$ instance. 
The details of $\rho_2$ are not relevant to the current analysis, so we do not include them here. It is sufficient for our purposes to note that $(i)$ the domains of $\rho_1 S$ and $\rho_2\rho_1 S$ coincide, and $(ii)$ while some of the constraints are discarded in the second step of the reduction, no new constraint is introduced. Both these facts follow from the description in~\cite[ch.~3]{minzer2022monotonicity}.

\subsection{Quantum completeness of $\rho$}
The reduction $\rho$ is easily seen to preserve perfect \textit{classical} completeness, in the sense that $\rho S$ admits a perfect classical assignment whenever $S$ does. In fact, $\rho$ also preserves imperfect classical completeness, in that almost satisfiable $3$XOR instances are mapped to almost satisfiable $2$-to-$2$ instances. This is crucial for the purpose of~\cite{Khot17:stoc-independent,Khot18:focs-pseudorandom,Dinur18:stoc-non-optimality,Dinur18:stoc-towards}, as H\aa stad's NP-hardness of approximating $3$XOR games~\cite{haastad2001some} only holds in the imperfect completeness regime (otherwise, the problem trivially becomes tractable). Our goal is now to show that $\rho$ preserves \textit{quantum} completeness, and can thus be used to transfer pseudo-telepathy from $3$XOR to $2$-to-$2$ games. Since $3$XOR games are pseudo-telepathic despite being tractable in the exact version, we prove perfect quantum completeness.

\begin{thm}
\label{thm_quantum_completeness_rho_minzer}
Let $S$ be a regular $3$XOR instance. If $\mathcal{G}(S,n)$ admits a perfect $1$-compatible quantum assignment, the same holds for the $2$-to-$2$ instance $\rho S$.
\end{thm}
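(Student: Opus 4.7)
The plan is to lift the given perfect $1$-compatible quantum assignment $\{Q_{\bu,\vartheta}\}$ for $\mathcal{G}(S,n)$ to a perfect $1$-compatible quantum assignment for $\rho S$, by aggregating projectors according to their action on the subspaces $L\oplus H_\bu$. Concretely, I would define, for each vertex $L\oplus H_\bu$ of $\rho S$ and each linear functional $\psi:L\oplus H_\bu\to\F_2$ that respects $\bu$,
\begin{align*}
W_{L\oplus H_\bu,\psi} \;=\; \sum_{\substack{\vartheta\in s(\bu)\\ \vartheta|_{L\oplus H_\bu}=\psi}}Q_{\bu,\vartheta},
\end{align*}
where $\vartheta\in s(\bu)$ is identified with its unique induced linear functional $\F_2^{\widetilde\bu}\to\F_2$. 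The first routine step is to check that $\{W_{L\oplus H_\bu,\psi}\}_\psi$ is a PVM: the summands are mutually orthogonal projectors since $\{Q_{\bu,\vartheta}\}_{\vartheta\in s(\bu)}$ is a PVM, and summing over all admissible $\psi$ recovers $\sum_{\vartheta\in s(\bu)}Q_{\bu,\vartheta}=\id_\HH$ because every $\vartheta\in s(\bu)$ restricts to some $\psi$ respecting $\bu$.

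The heart of the argument is perfectness. Given an edge $e=(L\oplus H_\bu,L'\oplus H_\bw)$ of $\rho_1 S$ and a pair $(\psi,\psi')\notin\phi_e$, I would expand $W_{L\oplus H_\bu,\psi}\cdot W_{L'\oplus H_\bw,\psi'}$ as a double sum over $(\vartheta,\vartheta')\in s(\bu)\times s(\bw)$. The consistency clause of the $\mathcal{G}(S,n)$ strategy implies that $Q_{\bu,\vartheta}\cdot Q_{\bw,\vartheta'}$ can be nonzero only when $\vartheta$ and $\vartheta'$ agree on $\widetilde\bu\cap\widetilde\bw$, in which case they patch together into a single linear functional $\Theta:\F_2^{\widetilde\bu\cup\widetilde\bw}\to\F_2$ respecting both $\bu$ and $\bw$. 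Since $L\oplus H_\bu\oplus H_\bw\subseteq\F_2^{\widetilde\bu\cup\widetilde\bw}$, the restriction $\Theta|_{L\oplus H_\bu\oplus H_\bw}$ is a linear extension of $\psi$ respecting $\bw$, and by the uniqueness part of Lemma~\ref{lem_extension_side_conditions} it must coincide with $\psi^{(\bw)}$; symmetrically $\Theta|_{L'\oplus H_\bw\oplus H_\bu}=\psi'^{(\bu)}$. Both restrictions agree with $\Theta$ on the shared space $(L\oplus H_\bu\oplus H_\bw)\cap(L'\oplus H_\bu\oplus H_\bw)$, contradicting $(\psi,\psi')\notin\phi_e$. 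Hence no term survives and the product vanishes.

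For $1$-compatibility, the key observation is that whenever $\widetilde\bu\cap\widetilde\bw=\emptyset$, the subspaces $L,H_\bu$ and $L',H_\bw$ sit in coordinate-disjoint copies of $\F_2$, and a direct dimension count using $\ell\geq 2$ rules out both the $1$-to-$1$ and the $2$-to-$2$ dimensional conditions that define edges of $\rho_1 S$. Every edge of $\rho_1 S$ therefore forces $\widetilde\bu\cap\widetilde\bw\neq\emptyset$, in which case the $1$-compatibility of $\{Q_{\bu,\vartheta}\}$ yields $[Q_{\bu,\vartheta},Q_{\bw,\vartheta'}]=O$ for all $\vartheta,\vartheta'$; this lifts linearly to $[W_{L\oplus H_\bu,\psi},W_{L'\oplus H_\bw,\psi'}]=O$. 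Finally, I would note that the second step $\rho_2$ only deletes constraints and reweights the surviving ones, without adding new constraints or changing the variable set, so the same assignment witnesses perfect $1$-compatibility for $\rho S=\rho_2\rho_1 S$.

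The hard part will be the perfectness step: it is the place where the geometry of the Grassmann-style reduction must interlock with the algebraic structure of the quantum assignment. The crucial ingredient is the ``side conditions'' machinery of~\cite{Khot17:stoc-independent}---Lemma~\ref{lem_extension_side_conditions} is what allows one to move from a local pair of linear functionals on $L\oplus H_\bu$ and $L'\oplus H_\bw$ to a single coherent functional $\Theta$ defined on $L\oplus H_\bu\oplus H_\bw$ and on $L'\oplus H_\bw\oplus H_\bu$, so that the edge predicate $\phi_e$ can be read off unambiguously from its global restriction.
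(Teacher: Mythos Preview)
Your proposal is correct and matches the paper's proof essentially step for step: the same definition of $W_{L\oplus H_\bu,\psi}$ by aggregating the $Q_{\bu,\vartheta}$, the same PVM verification, the same perfectness argument via patching $\vartheta,\vartheta'$ into a common linear functional and invoking the uniqueness part of Lemma~\ref{lem_extension_side_conditions}, the same $1$-compatibility argument by showing $\widetilde\bu\cap\widetilde\bw=\emptyset$ forces a dimension jump of $\ell\geq 2$ that violates the edge conditions, and the same observation that $\rho_2$ only discards constraints.
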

\begin{proof}
    Let $\mathfrak{Q}=\{Q_{\bu,\vartheta}\}$ be a perfect $1$-compatible quantum assignment for $S$ over a Hilbert space $\HH$. Let $X$ be the set of variables appearing in $S$. 
    Given a function $\xi:X'\to\F_2$ for some $X'\subseteq X$, we denote by $\widehat{\xi}:\F_2^{X'}\to\F_2$ the corresponding linear function defined by $\widehat{\xi}(v_x)=\xi(x)$ for each $x\in X'$ and extended by linearity. (Note that the set $\{v_x\}_{x\in X'}$ forms a basis for ${\F_2}^{X'}$ since $v_x$ is the one-hot encoding of $x$.)
    Recall that $\mathcal{U}$ denotes the set of all legitimate $n$-tuples of equations of $S$.
    
    We first prove quantum completeness for the first part $\rho_1$ of the reduction $\rho$.
    Take a variable $L\oplus H_\bu$ of $\rho_1 S$;
    i.e., $\bu\in\mathcal{U}$ and $L\in\mathcal{L}_\bu$. Take also a label $\psi\in A$, which corresponds to a linear map $\psi:L\oplus H_\bu\to\F_2$ that respects $\bu$. Consider the linear application
    \begin{align}
    \label{eqn_0801_1740}
        W_{L\oplus H_\bu,\psi}
        =
        \sum_{\substack{\vartheta\in s(\bu)\\\widehat\vartheta|_{L\oplus H_\bu}=\psi}}Q_{\bu,\vartheta}.
    \end{align}

We claim that the set $\mathfrak{W}=\{W_{L\oplus H_\bu,\psi}\}$ is a $1$-compatible perfect quantum assignment for $\rho_1 S$ over $\HH$.

First of all, since each family $\{Q_{\bu,\vartheta}\suchThat\vartheta\in s(\bu)\}$ is a PVM, the linear applications in~\eqref{eqn_0801_1740} are projectors. Take now two different labels $\psi\neq\psi':L\oplus H_\bu\to\F_2$. If $\vartheta,\vartheta'\in s(\bu)$ are such that $\widehat\vartheta|_{L\oplus H_\bu}=\psi$ and $\widehat\vartheta'|_{L\oplus H_\bu}=\psi'$, it must hold that $\vartheta\neq\vartheta'$, which means that $Q_{\bu,\vartheta}\cdot Q_{\bu,\vartheta'}=O$. It follows that $W_{L\oplus H_\bu,\psi}\cdot W_{L\oplus H_\bu,\psi'}=O$; i.e., the projectors in $\mathfrak{W}$ corresponding to the same variable are mutually orthogonal. Furthermore, we have
\begin{align*}
    \sum_{\psi\in A}W_{L\oplus H_\bu,\psi}
        =
    \sum_{\psi\in A}\sum_{\substack{\vartheta\in s(\bu)\\\widehat\vartheta|_{L\oplus H_\bu}=\psi}}Q_{\bu,\vartheta}
    =
    \sum_{\substack{\vartheta:\tilde \bu\to\F_2\\\widehat\vartheta\mbox{ respects }\bu}}Q_{\bu,\vartheta}
    =
    \sum_{\vartheta\in s(\bu)}Q_{\bu,\vartheta}
    =\id_\HH.
\end{align*}
This shows that each set $\{W_{L\oplus H_\bu,\psi}\suchThat \psi\in A\}$ is a PVM and, thus, $\mathfrak{W}$ is a quantum assignment. 

Consider now a pair $e=(L\oplus H_\bu,L'\oplus H_\bw)\in E$
(where $E$ is the edge set of $\rho_1 S$), and take a pair $(\psi,\psi')$ of labels, with $\psi:L\oplus H_\bu\to\F_2$ and $\psi':L'\oplus H_\bw\to\F_2$. Recall the definition of the linear maps $\psi^{(\bw)}:L\oplus H_\bu\oplus H_\bw\to\F_2$ and $\psi'^{(\bu)}:L'\oplus H_\bu\oplus H_\bw\to\F_2$ given in Section~\ref{subsec_description_rho_one}. Suppose that $(\psi,\psi')\not\in\phi_e$, which means that 
\begin{align}
\label{eqn_1851_0801}
    \psi^{(\bw)}|_{(L\oplus H_\bu\oplus H_\bw)\cap(L'\oplus H_\bu\oplus H_\bw)}\neq\psi'^{(\bu)}|_{(L\oplus H_\bu\oplus H_\bw)\cap(L'\oplus H_\bu\oplus H_\bw)}.
\end{align}

Consider two linear functions $\vartheta\in s(\bu),\vartheta'\in s(\bw)$ such that $\widehat{\vartheta}|_{L\oplus H_\bu}=\psi$ and $\widehat{\vartheta'}|_{L'\oplus H_\bw}=\psi'$. Suppose, for the sake of contradiction, that
\begin{align*}
    Q_{\bu,\vartheta}\cdot Q_{\bw,\vartheta'}\neq O.
\end{align*}
Since the quantum assignment $\mathfrak{Q}$ is perfect, it follows that $\vartheta|_{\widetilde\bu\cap\widetilde{\bw}}=\vartheta'|_{\widetilde\bu\cap\widetilde{\bw}}$. Hence, there exists a well defined linear map $\xi:\tilde\bu\cup\tilde\bw\to\F_2$ extending both $\vartheta$ and $\vartheta'$. Note that $\widehat\xi$ respects both $\bu$ and $\bw$. Hence, by Lemma~\ref{lem_extension_side_conditions} on the unicity of extension, it must hold that $\widehat\xi|_{L\oplus H_\bu\oplus H_\bw}=\psi^{(\bw)}$ and $\widehat\xi|_{L'\oplus H_\bu\oplus H_\bw}=\psi'^{(\bu)}$. But this means that the functions $\psi^{(\bw)}$ and $\psi'^{(\bu)}$ agree on the intersection of their domains, which contradicts~\eqref{eqn_1851_0801}. Hence, the product $Q_{\bu,\vartheta}\cdot Q_{\bw,\vartheta'}$ is zero.
We deduce that
\begin{align*}
    W_{L\oplus H_\bu,\psi}\cdot W_{L'\oplus H_\bw,\psi'}&=\Big(\sum_{\substack{\vartheta\in s(\bu)\\\widehat\vartheta|_{L\oplus H_\bu}=\psi}}Q_{\bu,\vartheta}\Big)\Big(\sum_{\substack{\vartheta'\in s(\bw)\\\widehat\vartheta'|_{L'\oplus H_\bw}=\psi'}}Q_{\bw,\vartheta'}\Big)\\
    &=
    \sum_{\substack{\vartheta\in s(\bu)\\\widehat\vartheta|_{L\oplus H_\bu}=\psi}}\;\sum_{\substack{\vartheta'\in s(\bw)\\\widehat\vartheta'|_{L'\oplus H_\bw}=\psi'}}Q_{\bu,\vartheta}\cdot Q_{\bw,\vartheta'}=O,
\end{align*}
as needed. We conclude that the quantum assignment $\mathfrak{W}$ is perfect. 

We are left to establish $1$-compatibility. Like above, take an edge $e=(L\oplus H_\bu,L'\oplus H_\bw)\in E$ and two labels $\psi:L\oplus H_\bu\to\F_2$ and $\psi':L'\oplus H_\bw\to\F_2$. 
Suppose, for the sake of contradiction, that $\tilde\bu\cap\tilde\bw=\emptyset$. It follows that $\F_2^{\tilde\bu}\cap\F_2^{\tilde\bw}=\{\bzero\}$. Recall that $L\oplus H_\bu\subseteq\F_2^{\tilde\bu}$ and $L'\oplus H_\bw\subseteq\F_2^{\tilde\bw}$. Moreover, from $L'\in\mathcal{L}_\bw$, we deduce that $L'\cap H_\bw=\{\bzero\}$. As a consequence, 
the two spaces $L'$ and $L\oplus H_\bu\oplus H_\bw$ are linearly independent, which means that 
\begin{align*}
\dim(L\oplus L'\oplus H_\bu\oplus H_\bw)=\ell+\dim(L\oplus H_\bu\oplus H_\bw).
\end{align*}
Since $\ell\geq 2$, this would imply that $e\not\in E$, a contradiction. Hence, the two sets $\tilde\bu$ and $\tilde \bw$ are not disjoint. Since the assignment $\mathfrak{Q}$ is $1$-compatible, this means that $[Q_{\bu,\vartheta},Q_{\bw,\vartheta'}]=O$ for each $\vartheta\in s(\bu)$, $\vartheta'\in s(\bw)$. Using the linearity of the commutator, we deduce that $[W_{L\oplus H_\bu,\psi},W_{L'\oplus H_\bw,\psi'}]=O$, as needed.

We now need to extend quantum completeness to the second part $\rho_2$ of the reduction $\rho$. Recall from the description in Section~\ref{subsec_description_rho_one} that $\rho_2$ simply discards some of the constraints of $\rho_1(S)$. Hence, it is immediate to see that the same assignment $\mathfrak{W}$ is also a perfect $1$-compatible quantum assignment for $\rho S=\rho_2\rho_1 S$, and the proof is concluded.
\end{proof}

\subsection{Classical soundness of $\rho$}
\label{subsec_classical_soundness_rho}
The core of the sequence of works~\cite{Khot17:stoc-independent,Khot18:focs-pseudorandom,Dinur18:stoc-non-optimality,Dinur18:stoc-towards} is a proof of the classical soundness of $\rho$, established via an advanced analysis of the expansion properties of the Grassmann graphs encoding the geometry of the linear spaces $L\oplus H_\bu$ involved in the reduction.
\begin{thm}[\cite{Khot18:focs-pseudorandom}]
\label{thm_soundess_first_reduction}
For each $0<s^*<1$ and each $\delta>0$, there exist parameters $n,\ell\in\N$ such that the corresponding reduction $\rho$ satisfies the following: 
If the $3$XOR instance $S$ is such that $\sat(S)< s^*$, the $2$-to-$2$ instance $\rho S$ 
is such that $\sat(\rho S)< \delta$.
%
\end{thm}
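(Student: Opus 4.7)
The plan is to establish Theorem~\ref{thm_soundess_first_reduction} by contrapositive: given any classical assignment $\sigma$ for $\rho S$ of weighted value at least $\delta$, construct a classical assignment for $S$ satisfying a fraction of at least $s^*$ of the equations. This is the main technical result of the four-paper sequence \cite{Khot17:stoc-independent,Khot18:focs-pseudorandom,Dinur18:stoc-non-optimality,Dinur18:stoc-towards} and my proposal reproduces the structure of that argument, in the composition form presented in \cite[ch.~3]{minzer2022monotonicity}.

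First, I would reinterpret $\sigma$ as a family of linear functionals $\{\sigma_{L\oplus H_\bu}:L\oplus H_\bu\to\F_2\}$ indexed by the vertices of $\rho_1 S$, each respecting the corresponding tuple $\bu$; since $\rho_2$ only discards constraints and reweights them, a $\delta$-good assignment for $\rho S$ remains a $\delta'$-good assignment for $\rho_1 S$ under the natural distribution on $2$-to-$2$ constraints for some $\delta'=\delta'(\delta)>0$. By definition of $\phi_e$ for an edge $e=(L\oplus H_\bu,L'\oplus H_\bw)$, satisfaction of $\sigma$ at $e$ amounts to the unique extensions $\sigma_{L\oplus H_\bu}^{(\bw)}$ and $\sigma_{L'\oplus H_\bw}^{(\bu)}$ provided by Lemma~\ref{lem_extension_side_conditions} agreeing on $(L\oplus H_\bu\oplus H_\bw)\cap (L'\oplus H_\bu\oplus H_\bw)$. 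Thus the family $\{\sigma_{L\oplus H_\bu}\}$ defines a collection of local linear labels on $\ell$-dimensional subspaces of $\F_2^X$ whose extensions exhibit non-negligible agreement under the $2$-to-$2$ test associated with the Grassmann graph of $\ell$-spaces in $\F_2^X$.

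The heart of the argument is an application of the \emph{Grassmann Inverse Theorem} established across \cite{Khot17:stoc-independent,Khot18:focs-pseudorandom,Dinur18:stoc-non-optimality,Dinur18:stoc-towards}: any linear-label assignment on a Grassmann graph that passes the consistency test with noticeable probability must admit a non-trivial \emph{zoom-in}, i.e., there is a modest-dimensional subspace $W\subseteq\F_2^X$ and a global linear functional $\Psi:W\to\F_2$ such that, for a non-negligible fraction of legitimate tuples $\bu\in\mathcal{U}$ with $H_\bu\subseteq W$, the local label $\sigma_{L\oplus H_\bu}$ agrees with $\Psi|_{H_\bu}$ for a non-negligible fraction of $L\in\mathcal{L}_\bu$. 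Because each such $\Psi$ respecting $\bu$ evaluates $v_{u_i}$ to the right-hand side of $u_i$, reading off the restriction of $\Psi$ to the one-hot vectors $v_x$ yields a Boolean assignment to the variables in $W\cap X$ that satisfies all equations $u_i$ with $v_{u_i}\in W$ and $\Psi(v_{u_i})=b_i$. Averaging over the distribution of zoom-ins (and assigning remaining variables arbitrarily) produces a classical assignment for $S$ of value at least $s^*$, contradicting $\sat(S)<s^*$, once $\ell$ is chosen large enough to make the error term in the inverse theorem smaller than $1-s^*$ and $n$ is chosen large enough so that legitimate tuples cover a $(1-\mathcal{o}(1))$-fraction of $S$ (here the regularity of $S$ is what makes $\mathcal{U}$ dense).

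The main obstacle is, unsurprisingly, the Grassmann Inverse Theorem itself, whose proof constitutes the technical bulk of \cite{Khot17:stoc-independent,Khot18:focs-pseudorandom,Dinur18:stoc-non-optimality,Dinur18:stoc-towards} and requires a delicate iterative zoom-in argument combined with pseudorandom expansion bounds for Grassmann graphs; I would not attempt to reprove it. The role of my proposal is to isolate its use and to verify that the family $\{\sigma_{L\oplus H_\bu}\}$ arising from a $\delta$-satisfying assignment of $\rho S$ is exactly the object to which the inverse theorem applies, so that the classical soundness follows from the black-box application of that theorem together with the decoding step described above.
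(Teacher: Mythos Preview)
The paper does not prove this theorem at all: it is stated as a cited result from \cite{Khot18:focs-pseudorandom}, with the surrounding text noting only that it is ``established via an advanced analysis of the expansion properties of the Grassmann graphs encoding the geometry of the linear spaces $L\oplus H_\bu$ involved in the reduction.'' So there is no in-paper proof to compare against; the author treats classical soundness of $\rho$ as a black box imported from \cite{Khot17:stoc-independent,Khot18:focs-pseudorandom,Dinur18:stoc-non-optimality,Dinur18:stoc-towards}.

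Your proposal correctly identifies this and sketches the overall architecture of the cited proof (contrapositive, interpreting a good assignment as consistent linear labels on $\ell$-spaces, invoking the Grassmann Inverse Theorem to extract a global decoding, then reading off an assignment for $S$). That is indeed the shape of the argument in \cite[ch.~3]{minzer2022monotonicity}, and you are right to isolate the inverse theorem as the step you would not reprove. For the purposes of the present paper your level of detail already exceeds what is needed, since the paper's contribution lies entirely on the quantum-completeness side and uses Theorem~\ref{thm_soundess_first_reduction} only as an imported statement.
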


In order to link $\rho$ with the reductions of Section~\ref{sec_reductions} and, thus, use it for transferring pseudo-telepathy all the way to the colouring game, we need a different classical soundness than the one in Theorem~\ref{thm_soundess_first_reduction}, involving the $\isat$ value as opposed to the $\sat$ value (cf.~Section~\ref{subsec_preprocessing}). There exist multiple ways to circumvent this technical issue. One way is to slightly modify the reduction $\rho$ by making the constraints of $\rho S$ be $2$-to-$1$ instead of $2$-to-$2$, as is done in~\cite{Dinur18:stoc-towards} (see also~\cite[Remark~3.4.11]{minzer2022monotonicity}). This involves considering two different types of vertices for $\rho_1 S$ corresponding to $\ell$-dimensional and $(\ell-1)$-dimensional subspaces of $H_\bu$, respectively, and suitably modifying the constraints and the weights in $\rho_2$. As a result, we would obtain a weighted $2$-to-$1$ instance admitting a perfect quantum assignment and arbitrarily small classical $\sat$ value, which is ready to be plugged in the chain of reductions from Sections~\ref{subsec_preprocessing} and~\ref{subsec_GS_reduction}. 

A second fix, which does not involve modifying the reduction $\rho$, is to apply the general framework of Crescenzi, Silvestri, and Trevisan~\cite{crescenzi2001weighted} based on the construction of mixing sets, which allows reducing weighted to unweighted instances of a large class of combinatorial problems including  CSPs without unary constraints---and suitably adapting the argument of~\cite[Claim~24]{crescenzi2001weighted} to capture soundness expressed in terms of $\isat$.

However, we can avoid both complications by making use of a recent result from~\cite{DawarM25}, showing through a probabilistic argument that the two soundness requirements for $\rho S$ are in fact equivalent. (The $\isat$ value is called ``irregular value'' in~\cite{DawarM25}.)

\begin{lem}[\cite{DawarM25}]
\label{lem_classical_soundness_minzer_isat}
    Let $\Phi$ be a (weighted) $2$-to-$2$ instance with $\isat_t(\Phi)=\epsilon$ for some $t\in\N$ and $\epsilon>0$. Then $\sat(\Phi)=\Omega(\frac{\epsilon^2}{t^2})$.
\end{lem}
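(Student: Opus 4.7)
The plan is to derive a single-label classical assignment of value $\Omega(\epsilon^2/t^2)$ by probabilistic rounding of a $t$-assignment witnessing $\isat_t(\Phi)=\epsilon$, combined with a random subset argument to guarantee that enough constraint weight is captured.

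First I would unpack the hypothesis: there is a set $S\subseteq X$ with $|S|\geq \epsilon|X|$ together with a $t$-assignment $f:X\to \binom{A}{\leq t}$ such that every binary constraint $(x_1,x_2)\in E$ with $x_1,x_2\in S$ admits a pair $(a_1,a_2)\in f(x_1)\times f(x_2)$ satisfying $\phi_{(x_1,x_2)}$. (Outside $S$ the values $f(x)$ may be arbitrary, of size at most $t$.)

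Next I would perform random rounding: define a random single-label assignment $g:X\to A$ by choosing, independently for each $x\in X$, a uniformly random label $g(x)\in f(x)$. For any constraint induced by $S$, the probability that $g$ hits a satisfying pair is at least $t^{-2}$, since each set $f(x_i)$ has at most $t$ elements. Taking expectation over $g$ and applying the probabilistic method yields
\begin{equation*}
    \sat(\Phi)\;\geq\;\mathbb{E}_g[\sat_g(\Phi)]\;\geq\;\frac{1}{t^2}\cdot p_S,\qquad p_S\;:=\;\Pr_{\bx\sim\pi}[\bx\text{ is induced by }S].
\end{equation*}

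The main obstacle is to show that $p_S=\Omega(\epsilon^2)$. This is delicate because in pathological cases (say, a bipartite constraint graph with $S$ concentrated on one side) the induced constraint mass can vanish; the obstacle cannot be bypassed by the trivial bound $p_S\geq \epsilon^2$. To handle this, I would follow the strategy of~\cite{DawarM25}: instead of working with the single witness $S$ directly, I would consider a randomised refinement $S'\subseteq S$ obtained by subsampling each vertex of $X$ independently with probability $\epsilon$ and restricting to $S$. A double-counting (or Cauchy--Schwarz) argument on the marginals of $\pi$ then shows that a constant fraction of witnesses $(S',f)$ of size $\Theta(\epsilon|X|)$ capture at least an $\Omega(\epsilon^2)$ fraction of the constraint weight; averaging and choosing the best such witness provides the required lower bound on $p_S$. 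Plugging this back into the rounding bound completes the proof, giving $\sat(\Phi)=\Omega(\epsilon^2/t^2)$.
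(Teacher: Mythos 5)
The paper does not actually prove this lemma: it is imported wholesale from~\cite{DawarM25}, so there is no in-paper argument to measure you against, and your proposal has to stand on its own. Its first two steps do: rounding a $t$-assignment by picking $g(x)$ uniformly from $f(x)$ satisfies each constraint induced by the witness set $S$ with probability at least $t^{-2}$, so $\sat(\Phi)\geq p_S/t^2$ with $p_S$ the $\pi$-weight of the constraints induced by $S$. That part is correct.

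The gap is the third step, and it is fatal as written. You rightly observe that $p_S$ can vanish (your own example: bipartite constraint graph, witness concentrated on one side), but the proposed repair cannot close this hole. Any subsampled $S'\subseteq S$ induces a \emph{subset} of the constraints induced by $S$, so no refinement of $S$ can increase the captured weight; and the hypothesis $\isat_t(\Phi)=\epsilon$ only guarantees the existence of \emph{some} witness $(S,f)$ of density $\epsilon$ --- you are not entitled to ``choose the best such witness'', and bounds for other witnesses say nothing about the $p_S$ of the one you were handed. In the pathological case your entire pipeline outputs the bound $\sat(\Phi)\geq 0$. Indeed the intermediate claim $p_S=\Omega(\epsilon^2)$ is simply false for arbitrary weighted $2$-to-$2$ instances, which shows that no argument confined to the constraints induced by $S$ can prove the lemma: a correct proof must either extract value from constraints \emph{not} induced by $S$ (e.g.\ by extending the partial assignment beyond $S$ and exploiting the $2$-to-$2$ structure of the predicates, which your rounding never uses), or rely on structural properties of the specific instances to which the lemma is applied --- note that the paper invokes it only for outputs of the DKKMS reduction $\rho$, and describes~\cite{DawarM25} as a probabilistic argument about ``the two soundness requirements for $\rho S$''. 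As it stands, the proposal identifies the right obstacle but does not overcome it.
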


\subsection{Proving Theorem~\ref{thm_equations_to_colouring}}
We now have the right forms of classical soundness and quantum completeness for $\rho$ that are necessary to link the reduction with that of Section~\ref{subsec_GS_reduction}, thus transferring pseudo-telepathy from the strong version of $3$XOR games repetition to the colouring game.

\begin{proof}[Proof of Theorem~\ref{thm_equations_to_colouring}]
    We follow the same structure as the proof of Theorem~\ref{thm_main}.
    Take a number $0<s^*<1$ as in the statement of the theorem, and choose an integer $c\in\N$. We aim to find a graph having quantum chromatic number at most $4$ and classical chromatic number at least $c+1$. Let $\epsilon=\frac{1}{c}$, and choose $\epsilon'>0$, $t\in\N$ such that Theorem~\ref{thm_soundess_third_reduction} holds. Take $\delta=\Omega(\frac{\epsilon'^2}{t^2})$ witnessing Lemma~\ref{lem_classical_soundness_minzer_isat}. Choose integers $n,\ell\in\N$ such that Theorem~\ref{thm_soundess_first_reduction} applied to $s^*$ and $\delta$ holds. Finally, take a regular $3$XOR instance $S$ such that $\sat(S)<s^*$ but $\mathcal{G}(S,n)$ admits a perfect $1$-compatible quantum assignment, as per the statement of Theorem~\ref{thm_equations_to_colouring}.

    Let $\rho$ be the reduction of Section~\ref{subsec_description_rho_one} with parameters $n$ and $\ell$, and let $\eta$ be the reduction from Section~\ref{subsec_GS_reduction}.
    By Theorem~\ref{thm_quantum_completeness_rho_minzer}, the $2$-to-$2$ instance $\rho S$ admits a perfect $1$-compatible quantum assignment. Therefore, we deduce from Theorem~\ref{thm_quantum_completeness_dinur} that $\eta\rho S\leadsto^1\K_4$ (and, thus, $\eta\rho S\leadsto^0\K_4$).
    However, Theorem~\ref{thm_soundess_first_reduction} implies that $\sat(\rho S)<\delta$; by Lemma~\ref{lem_classical_soundness_minzer_isat}, it follows that $\isat_t(\rho S)<\epsilon'$. Hence, Theorem~\ref{thm_soundess_third_reduction} implies that $\eta\rho S$ has no independent sets of relative size $\epsilon$---i.e., the classical chromatic number of $\eta\rho S$ is larger than $\frac{1}{\epsilon}=c$.

    To conclude, we simply need to symmetrise the digraph $\eta\rho S$ by turning each directed edge into an undirected edge. As noted in the proof of Theorem~\ref{thm_main}, this leaves the quantum and classical chromatic numbers unaffected. Hence, the resulting undirected graph has the required properties.
\end{proof}

\begin{rem}
\label{rem_proving_conj}
   We observe that the reduction discussed in this section not only is quantum complete, but it also preserves the Hilbert space witnessing the existence of perfect quantum assignments---just like the reduction we used to prove Theorem~\ref{thm_main}, see~Remark~\ref{rem_higher_consistency_gap}. Furthermore, it is clear from the proof of Theorem~\ref{thm_equations_to_colouring} that the resulting graph of arbitrarily large classical chromatic number is
   quantum $4$-colourable via 
   a $1$-compatible assignment. Hence, the same argument as in~Remark~\ref{rem_higher_consistency_gap}, based on~\cite[Theorem~29]{simul_banakh}, implies that the dimension of the Hilbert spaces in the perfect quantum strategies for the $3$XOR games $\mathcal{G}(S,n)$ in the hypothesis of Theorem~\ref{thm_equations_to_colouring} must tend to infinity as the number $n$ of repetitions of the game grows. 



As noted in Section~\ref{subsec_quantum_chromatic_gap_3xor}, the chain of reductions in the proof of Theorem~\ref{thm_equations_to_colouring} skips the $2$-to-$1$ games step, and reduces directly from $3$XOR to $2$-to-$2$ games. 
    Nevertheless, the case $d=2$, $k=1$ of Conjecture~\ref{conj_d_to_1_khot_quantum} can be derived from
    the hypothesis of Theorem~\ref{thm_equations_to_colouring} by applying the same argument to the modified reduction from~\cite{Dinur18:stoc-towards} instead of $\rho$, as discussed in Section~\ref{subsec_classical_soundness_rho}.
\end{rem}

\appendix

\section{The Dinur--Mossel--Regev reduction is quantum complete}
\label{app_dinur_reductions}

In this appendix, we prove the following result.

\begin{prop*}
[Proposition~\ref{prop_d_to_d_after_dinur} restated]
Suppose Conjecture~\ref{conj_d_to_1_khot_quantum} holds for some $d\geq 2$. 
Then for each $\epsilon>0$ and each $k,t\in\N$ there exists a $d$-to-$d$ instance $\Phi$ (on some alphabet size $n$) such that $\Phi$ admits a perfect $k$-compatible quantum assignment but
$\isat_t(\Phi)<\epsilon$.
\end{prop*}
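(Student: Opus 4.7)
The plan is to prove Proposition~\ref{prop_d_to_d_after_dinur} by composing a preprocessing step (regularization and unweighting) with the $d$-to-$1$ $\to$ $d$-to-$d$ transformation of~\cite{Dinur09:sicomp,khot2008vertex}, and verifying that the quantum assignment supplied by Conjecture~\ref{conj_d_to_1_khot_quantum} survives the composition with the required local compatibility. I would first invoke Conjecture~\ref{conj_d_to_1_khot_quantum} with compatibility parameter $2k$ and soundness parameter $\epsilon'$ of order $\epsilon/t^2$, obtaining a bipartite $d$-to-$1$ instance $\tilde\Phi$ (with sides $X_A\sqcup X_B$ and alphabet $A_1\sqcup A_2$) admitting a perfect $2k$-compatible quantum assignment $\mathfrak{E}=\{E_{x,a}\}$ with $\sat(\tilde\Phi)<\epsilon'$.

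The preprocessing step turns $\tilde\Phi$ into an unweighted $d$-to-$1$ instance $\tilde\Phi'$ in which every $x_A\in X_A$ has the same degree: one replicates each $X_A$-vertex proportionally to its weighted degree and absorbs edge weights as multiplicities, in the style of~\cite[Appendix~A]{Dinur09:sicomp}. The quantum assignment transfers by giving all copies of a vertex the same projectors, whence perfectness and $2k$-compatibility are preserved verbatim (distances in the Gaifman graph can only decrease under vertex replication), while standard bookkeeping shows that $\sat(\tilde\Phi')$ remains of the same order as $\sat(\tilde\Phi)$.

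Next I would apply the Dinur transformation, producing a $d$-to-$d$ instance $\Phi$ on variables $X_A$ and alphabet $A_1$, with a constraint on $(x_A,x_A')$ for every common $X_B$-neighbour $x_B$ and admitted set $\{(a_1,a_1'): \pi_{x_A x_B}(a_1)=\pi_{x_A' x_B}(a_1')\}$, where $\pi_{x_A x_B}\colon A_1\to A_2$ is the projection encoded by the $d$-to-$1$ edge. Setting $F_{x_A,a_1}:=E_{x_A,a_1}$ produces a candidate quantum assignment $\mathfrak{F}$ for $\Phi$. The crux is verifying perfectness: for an unsatisfying pair $(a_1,a_1')$ with $\pi_{x_A x_B}(a_1)\neq\pi_{x_A' x_B}(a_1')$, expanding via $\sum_{a_2\in A_2}E_{x_B,a_2}=\id$ gives
\begin{align*}
F_{x_A,a_1}\cdot F_{x_A',a_1'}=\sum_{a_2\in A_2}E_{x_A,a_1}\cdot E_{x_B,a_2}\cdot E_{x_A',a_1'};
\end{align*}
for $a_2\neq\pi_{x_A x_B}(a_1)$ the summand vanishes by perfectness of $\mathfrak{E}$ on the edge $(x_A,x_B)$, and for the remaining value $a_2=\pi_{x_A x_B}(a_1)\neq\pi_{x_A' x_B}(a_1')$ one uses the $1$-compatibility of $\mathfrak{E}$ to commute $E_{x_B,a_2}$ past $E_{x_A',a_1'}$, at which point perfectness on the edge $(x_A',x_B)$ kills the term. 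The $k$-compatibility of $\mathfrak{F}$ then follows from the identity $\dist_\Phi(x_A,x_A')=\lceil\dist_{\tilde\Phi'}(x_A,x_A')/2\rceil$ combined with the $2k$-compatibility of $\mathfrak{E}$. Finally, classical soundness $\isat_t(\Phi)<\epsilon$ follows from~\cite[Lemma~A.9]{Dinur09:sicomp}, applied with $\epsilon'$ chosen small enough in terms of $\epsilon$ and $t$.

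The main obstacle is the quantum completeness analysis of the Dinur step, specifically the commutation argument underpinning the vanishing of the displayed sum; the factor-of-two loss in compatibility ($2k\rightsquigarrow k$) is an intrinsic feature of the bipartite-to-unipartite projection onto $X_A$, and is precisely what forces Conjecture~\ref{conj_d_to_1_khot_quantum} to quantify over arbitrary local compatibility rather than just over $1$-compatible quantum satisfiability.
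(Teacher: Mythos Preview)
Your overall strategy matches the paper's: preprocess via the~\cite{Dinur09:sicomp} transformations to obtain an unweighted left-regular $d$-to-$1$ instance, then apply the $d$-to-$1\to d$-to-$d$ construction, with the $2k\to k$ compatibility loss arising exactly as you describe.

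There is one genuine gap, however. You assert $\sum_{a_2\in A_2}E_{x_B,a_2}=\id_\HH$ and implicitly assume that $\{F_{x_A,a_1}\}_{a_1\in A_1}$ is a PVM, but a priori the quantum assignment $\mathfrak{E}$ only satisfies $\sum_{a\in A}E_{x,a}=\id_\HH$ with $A=A_1\sqcup A_2$; nothing in the definition forces the ``wrong-side'' projectors $E_{x_A,a_2}$ (for $a_2\in A_2$) or $E_{x_B,a_1}$ (for $a_1\in A_1$) to vanish. Without this, $\mathfrak{F}$ is not a quantum assignment and your displayed expansion is not valid. The paper addresses this via a separate result (Proposition~\ref{prop_only_relevant_projectors}) showing that any $k$-compatible quantum assignment for a bipartite instance with $\phi_\bx\subseteq A_1\times A_2$ can be modified, without lowering its value or its compatibility level, so that $E_{x,a}=O$ whenever $x\in X_i$ and $a\in A_j$ with $i\neq j$. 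Since you only need the perfect case, you could alternatively argue directly: for non-isolated $x_B\in X_2$ with incident edge $(x_A,x_B)$ and any $a_1\in A_1$, perfectness gives $E_{x_A,a}\cdot E_{x_B,a_1}=O$ for every $a\in A$ (because $(a,a_1)\notin\phi_{(x_A,x_B)}\subseteq A_1\times A_2$), and summing over $a$ yields $E_{x_B,a_1}=O$; symmetrically for $x_A\in X_1$. Either route works, but the step is not automatic and must be made explicit.
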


We first need to introduce some new notation. 
Given a (weighted) label-cover instance $\Phi=(X,E,A,\pi,\phi)$ and a vertex $x\in X$, we define the $x$'th marginal of the probability distribution $\pi$ as 
\begin{align*}
\pi_x=\sum_{\substack{\bx\in E\\x_1=x}}\pi(\bx). 
\end{align*}
Moreover, given a classical assignment $f$ for $\Phi$, we let $\sat_{f,x}(\Phi)$ denote the value of $f$ restricted to constraints incident to $x$; formally,
\begin{align*}
    \sat_{f,x}(\Phi)=
    \sum_{\substack{\bx\in E\\x_1=x}}\pi(\bx)\,\phi_{\bx}(f(\bx)).
\end{align*}

We now show that a chain of simple transformations from $d$-to-$1$ games to $d$-to-$d$ games
given
in~\cite{Dinur09:sicomp} preserves the presence of perfect locally compatible quantum assignments. (See also~\cite{khot2008vertex}, where a similar chain of transformations is used to prove that a strong form of the Unique Games Conjecture follows from the standard form of~\cite{Khot02stoc}.)

\begin{lem}
\label{lem_dinur_uno}
    Fix a constant $h\in\N$. There is an (efficient) procedure that, given as input a bipartite label-cover instance $\Phi$ on variable set $X_1\sqcup X_2$, outputs a bipartite label-cover instance $\Phi'$ on variable set $X_1'\sqcup X_2$ satisfying the following:
    \begin{enumerate}
        \item If $\Phi$ admits a perfect $k$-compatible quantum assignment then $\Phi'$ admits a perfect $k$-compatible quantum assignment;
        \item for any $\delta,\epsilon>0$, if there exists a classical assignment $f'$ for $\Phi'$ for which a $\delta$-fraction of the variables $x'\in X_1'$ satisfy $\sat_{f',x'}(\Phi')\geq \frac{\epsilon}{|X_1'|}$, then $\sat(\Phi)\geq (1-\frac{1}{h})\delta\epsilon$.
        \item $\pi'_{x'}$ is constant over $x'\in X_1'$.
    \end{enumerate}
\end{lem}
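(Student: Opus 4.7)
The plan is to regularise the left-side marginals by replicating each $x \in X_1$ proportionally to its mass $\pi_x$. I would choose an integer $N \geq 2h|X_1|$ and discard those $x \in X_1$ with $\pi_x < 1/N$ (whose total mass is at most $|X_1|/N \leq 1/(2h)$ and hence negligible). For every remaining $x$, introduce $n_x := \lfloor N \pi_x \rfloor \geq 1$ fresh copies $x^{(1)}, \dots, x^{(n_x)}$, let $X_1'$ be their disjoint union, and keep the alphabet $A$ unchanged. For each original edge $(x, x_2) \in E$ and each $i \in [n_x]$, include the edge $(x^{(i)}, x_2)$ in $\Phi'$ with the same predicate $\phi_{(x, x_2)}$ and weight $\pi'(x^{(i)}, x_2) := \pi(x, x_2)/(|X_1'|\, \pi_x)$. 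A direct calculation then verifies $\pi'_{x^{(i)}} = 1/|X_1'|$ uniformly (securing property~(3)) and $\sum \pi' = 1$.

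\textbf{Quantum completeness (property (1)).} Given a perfect $k$-compatible quantum assignment $\mathfrak{Q} = \{Q_{v, a}\}$ for $\Phi$, I would set $Q'_{x^{(i)}, a} := Q_{x, a}$ for each copy and $Q'_{x_2, a} := Q_{x_2, a}$ on $X_2$. The PVM property and perfect satisfaction of every $\Phi'$-constraint are immediate, since both the predicate and the acting projectors are inherited from those of the corresponding $\Phi$-constraint. For $k$-compatibility, the map $\sigma$ sending $x^{(i)} \mapsto x$ and $x_2 \mapsto x_2$ is a graph homomorphism between the Gaifman graphs of $\Phi'$ and $\Phi$, so $\dist_\Phi(\sigma(v'), \sigma(w')) \leq \dist_{\Phi'}(v', w')$; consequently, vertices at $\Phi'$-distance $\leq k$ project to vertices at $\Phi$-distance $\leq k$, and the associated projectors commute by the $k$-compatibility of $\mathfrak{Q}$.

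\textbf{Classical soundness (property (2)).} Fix an assignment $f'$ as in the hypothesis and let $G \subseteq X_1'$ be the set of \emph{good} copies $x'$ with $\sat_{f', x'}(\Phi') \geq \epsilon/|X_1'|$, so $|G| \geq \delta|X_1'|$. For each $x \in X_1$ with at least one copy, I would choose $i^*(x) \in [n_x]$ maximising $\sat_{f', x^{(i)}}(\Phi')$, and define $f : X \to A$ by $f(x) := f'(x^{(i^*(x))})$ on $X_1$ and $f(x_2) := f'(x_2)$ on $X_2$ (with $f$ extended arbitrarily on discarded variables). The weight formula gives $\sat_{f, x}(\Phi) = |X_1'|\pi_x \cdot \sat_{f', x^{(i^*(x))}}(\Phi')$; hence $\sat_{f, x}(\Phi) \geq \epsilon \pi_x$ whenever $x$ lies in $T := \{x \in X_1 : \exists\, i \text{ with } x^{(i)} \in G\}$. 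The most delicate step will be controlling the rounding losses: from $|G| \leq \sum_{x \in T} n_x \leq N \sum_{x \in T} \pi_x$, combined with the lower bound $|X_1'| \geq N - 2|X_1| \geq N(1 - 1/h)$ (guaranteed by the choice $N \geq 2h|X_1|$), I will derive $\sum_{x \in T} \pi_x \geq \delta(1 - 1/h)$; summing $\sat_{f, x}(\Phi) \geq \epsilon \pi_x$ over $T$ then yields $\sat(\Phi) \geq \sat_f(\Phi) \geq \epsilon\delta(1 - 1/h)$, as required.
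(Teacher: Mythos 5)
Your construction is essentially the paper's: replicate each $x\in X_1$ proportionally to $\pi_x$ (the paper uses $\alpha_x=\lfloor h|X_1|\pi_x\rfloor$, which implicitly discards the low-mass variables you discard explicitly), reweight edges by $\pi(x,y)/(|X_1'|\pi_x)$, and pull back the quantum assignment along the copy map, using that this map is a homomorphism of Gaifman graphs to preserve $k$-compatibility. The one point where you go beyond the paper is property (2): the paper simply cites Lemma~A.4 of Dinur--Mossel--Regev, whereas you give a self-contained rounding argument (pick the best copy of each variable, bound $\sum_{x\in T}\pi_x$ via $|G|\le N\sum_{x\in T}\pi_x$ and $|X_1'|\ge N(1-1/h)$); that argument is correct and yields exactly the stated bound $(1-\tfrac1h)\delta\epsilon$.
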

\begin{proof}
    Let $\Phi=(X_1\sqcup X_2,E,A,\pi,\phi)$ be the given bipartite label-cover instance. Following~\cite{Dinur09:sicomp}, we consider the bipartite label-cover instance $\Phi'=(X_1'\sqcup X_2,E',A,\pi',\phi')$ defined as follows. 
    For each $x\in X_1$, set $\alpha_x=\floor{h\cdot|X_1|\cdot\pi_x}$.
    The set $X_1'$ contains $\alpha_x$ copies of each $x\in X_1$, denoted by $x^{(1)},\dots,x^{(\alpha_x)}$. For each $x\in X_1$, $y\in X_2$, and $i\in [\alpha_x]$, we add the edge $(x^{(i)},y)$ in $E'$ if and only if $(x,y)\in E$. In such case, we let $\pi'(x^{(i)},y)=\frac{\pi(x,y)}{|X_1'|\pi_x}$ and $\phi'_{(x^{(i)},y)}=\phi_{(x,y)}$. Observe that

\begin{align*}
    \pi'_{x^{(i)}}=\sum_{\substack{\bx\in E\\x_1=x}}\frac{\pi(\bx)}{|X_1'|\pi_x}
    =
    \frac{1}{|X_1'|},
\end{align*}
so condition $\textit{3.}$ is satisfied. Moreover, condition $\textit{2.}$ was proved in~\cite[Lemma~A.4]{Dinur09:sicomp}.

To prove that condition $\textit{1.}$ holds as well,
suppose that $\Phi$ admits a perfect $k$-compatible quantum assignment $\mathfrak{Q}=\{Q_{x,a}:x\in X_1\sqcup X_2,a\in A\}$. Consider the set $\mathfrak{W}=\{W_{ x',a}:x'\in X_1'\sqcup X_2,a\in A\}$ defined as follows. If $x'\in X_1'$ and, thus, $x'=x^{(i)}$ for some $x\in X_1$ and $i\in [\alpha_x]$, we let $W_{x',a}=Q_{x,a}$; if $x'\in X_2$, we let $W_{x',a}=Q_{x,a}$. It is straightforward to check that $\mathfrak{W}$ is a perfect quantum assignment for $\Phi'$.
To show that $\mathfrak{W}$ is $k$-compatible, assign to each vertex $y\in X_1'$ the vertex $\bar y\in X_1$ such that $y=\bar y^{(i)}$ for some $i\in[\alpha_{\bar y}]$, and let $\bar{y}=y$ for each $y\in X_2$. It is clear that $\dist_{\Phi'}(y_1,y_2)=\dist_\Phi(\bar y_1,\bar y_2)$ for each pair of vertices $y_1,y_2\in X_1'\sqcup X_2$ such that $\bar y_1\neq \bar y_2$. Hence, using the $k$-compatibility of $\mathfrak{Q}$, we find that for each  $a_1,a_2\in A$ and each $y_1,y_2\in X_1'\sqcup X_2$ with
$\dist_{\Phi'}(y_1,y_2)\leq k$ it holds that
\begin{align*}
[W_{y_1,a_1},W_{y_2,a_2}]=[Q_{\bar y_1,a_1},Q_{\bar y_2,a_2}]=O
\end{align*}
whenever
$\bar y_1\neq \bar y_2$. Furthermore, if $\bar y_1= \bar y_2$, we obtain
\begin{align*}
[W_{y_1,a_1},W_{y_2,a_2}]=[Q_{\bar y_1,a_1},Q_{\bar y_2,a_2}]= O
\end{align*}
as the projectors $Q_{\bar y_1,a_1}$ and $Q_{\bar y_2,a_2}$ are orthogonal in this case.
It follows that $\mathfrak{W}$ is $k$-compatible, as required. 
\end{proof}

We say that a bipartite label-cover instance on bipartition $X_1\sqcup X_2$ is \textit{left regular} if each vertex in $X_1$ has the same number of incident edges.

\begin{lem}
\label{lem_dinur_due}
    Fix two constants $\ell,m\in\N$. There is an (efficient) procedure that, given as input a $d$-to-$1$ instance $\Phi$ on variable set $X_1\sqcup 
    X_2$ such that $\pi_x$ is constant over $x\in X_1$, outputs a left-regular unweighted $d$-to-$1$ instance $\Phi'$ satisfying the following:
    \begin{enumerate}
    \item If $\Phi$ admits a perfect $k$-compatible quantum assignment then $\Phi'$ admits a perfect $k$-compatible quantum assignment;
    \item for any $\delta>0$ and any $0<\epsilon<\frac{1}{d\ell^2}$, if $\sat(\Phi')>\delta+\frac{1}{\ell}+(1-\delta)d\ell^2\epsilon$, then there exists a classical assignment $f$ for $\Phi$ such that $\delta$-fraction of the variables $x\in X_1$ satisfy $\sat_{f,x}(\Phi)>\frac{1}{|X_1|}(\epsilon-\frac{1}{m})$.
    \end{enumerate}
\end{lem}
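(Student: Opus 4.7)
The plan is to carry out the standard weighted-to-unweighted conversion of~\cite[Lemma~A.5]{Dinur09:sicomp}, combined with a direct quantum lift of the construction.

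Since $\pi_x = 1/|X_1|$ is constant over $x \in X_1$, the conditional probability $c(x,y) := |X_1| \cdot \pi(x,y)$ of an edge at $x$ sums to $1$ over $y$ for each $x$. For each $(x,y) \in E$ I take $\beta(x,y) = \lfloor m \cdot c(x,y) \rfloor$ copies in $E'$, and pad every $x \in X_1$ with $r_x = m - \sum_y \beta(x,y) \geq 0$ copies of a new edge $(x, y_x)$ going to a \emph{private} fresh vertex $y_x$ added to the right-hand side. The dummy constraints are declared via a fixed $d$-to-$1$ map $\pi_0 : A_1 \to A_2$, namely $\phi'_{(x, y_x)} = \{(a_1, a_2) : \pi_0(a_1) = a_2\}$. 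The resulting $\Phi'$ is then a bipartite, $d$-to-$1$, unweighted (under the uniform distribution on $E'$) instance, left-regular of degree $m$.

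For item $1.$, I lift a perfect $k$-compatible assignment $\mathfrak{Q} = \{Q_{v,a}\}$ for $\Phi$ to an assignment $\mathfrak{W}$ for $\Phi'$ by keeping $Q_{v,a}$ on every original vertex $v$ and setting
\begin{equation*}
W_{y_x, a_2} \;:=\; \sum_{a_1 \in \pi_0^{-1}(a_2)} Q_{x, a_1}
\end{equation*}
at each dummy vertex. The orthogonality of $\{Q_{x, a_1}\}_{a_1}$ makes this a PVM, and a direct computation shows that $Q_{x, a_1} \cdot W_{y_x, a_2} = O$ whenever $\pi_0(a_1) \neq a_2$, so the padding constraint is perfectly satisfied; original constraints remain satisfied since their projectors are unchanged. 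The key point for $k$-compatibility is that each $y_x$ is a leaf in the Gaifman graph of $\Phi'$, and duplicating an existing edge does not shorten distances; hence $\dist_{\Phi'}(u,v) = \dist_{\Phi}(u,v)$ for all original $u,v$, and $\dist_{\Phi'}(y_x, v) = 1 + \dist_\Phi(x, v)$. Expanding each $W_{y_x, a_2}$ linearly in the $Q_{x, a_1}$'s and invoking the linearity of the commutator together with the $k$-compatibility of $\mathfrak{Q}$ at the resulting pairs in $\Phi$ (which lie within distance $k$ by the triangle inequality) yields $k$-compatibility of $\mathfrak{W}$.

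Item $2.$ is exactly the classical soundness content of~\cite[Lemma~A.5]{Dinur09:sicomp}: given $f'$ achieving $\sat(\Phi') > \delta + 1/\ell + (1-\delta)d\ell^2\epsilon$, the restriction $f = f'|_{X_1 \sqcup X_2}$ serves as the candidate assignment for $\Phi$, and a standard bucketing-and-averaging argument---in which the $1/\ell$ term absorbs edges of conditional weight below $1/(\ell m)$, the $d\ell^2\epsilon$ term absorbs the rounding error between $\beta(x,y)/m$ and $c(x,y)$ (the factor $d$ reflecting the $d$-to-$1$ structure), and the $1/m$ term in the conclusion encodes the additive rounding error---identifies a $\delta$-fraction of $x \in X_1$ with $\sat_{f,x}(\Phi) > (\epsilon - 1/m)/|X_1|$. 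The main obstacle is the $k$-compatibility verification in the quantum completeness step: using a \emph{private} dummy vertex $y_x$ for each $x$ (rather than a single shared padding vertex) is essential, since it prevents the Gaifman graph of $\Phi'$ from gaining shortcuts between original vertices that would push pairs outside the $k$-compatibility regime of $\mathfrak{Q}$.
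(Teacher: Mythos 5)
There is a genuine gap: your construction omits the $\ell$-fold product on the left side, which is the heart of the Dinur--Mossel--Regev transformation and the only source of the $\frac{1}{\ell}+(1-\delta)d\ell^2\epsilon$ terms in the soundness threshold. In the paper's construction one first converts weights to multiplicities to make every $x\in X_1$ have degree exactly $\alpha=m\cdot|X_2|$ in a multigraph $\tilde G$ (the rounding remainder is dumped onto an \emph{existing} edge $\bar\bx$, not onto a fresh private vertex), and then replaces each $x\in X_1$ by $\alpha^\ell$ copies $(x,\bi)$ indexed by $\ell$-tuples $\bi\in[\alpha]^\ell$ of incident edges, each copy joined to the $\ell$ selected neighbours $y^{(x)}_{i_1},\dots,y^{(x)}_{i_\ell}$. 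It is this tuple blow-up that lets one convert ``the global value of $\Phi'$ exceeds $\delta+\frac{1}{\ell}+(1-\delta)d\ell^2\epsilon$'' into ``a $\delta$-fraction of the original left vertices have local value $>\frac{1}{|X_1|}(\epsilon-\frac1m)$'': if few neighbours of $x$ carry labels consistent with any fixed label at $x$, then among $\ell$ independently sampled incident edges the expected number of satisfiable ones is controlled by a collision/birthday computation, which is exactly where $\frac1\ell$ and $d\ell^2\epsilon$ arise. Your instance $\Phi'$ never uses $\ell$ at all, so item~2 with the stated quantitative bound cannot hold for it; your attribution of the $\frac1\ell$ term to ``edges of conditional weight below $1/(\ell m)$'' and of $d\ell^2\epsilon$ to rounding error does not match the argument of~\cite[Lemmas~A.5--A.7]{Dinur09:sicomp}, which applies to the tupled instance, not to a mere unweighting.

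A secondary issue is the private dummy vertices $y_x$ with a fixed $d$-to-$1$ constraint $\pi_0$: every classical assignment satisfies all of these edges for free, and their total fraction $r_x/m\le\deg(x)/m$ is not controlled, since $m$ is a fixed constant while $\deg(x)$ depends on the instance. The paper sidesteps this by routing the remainder onto a genuine edge $\bar\bx$ of $\Phi$, so no artificial slack is introduced. Your quantum lift (including the PVM $W_{y_x,a_2}=\sum_{a_1\in\pi_0^{-1}(a_2)}Q_{x,a_1}$ and the distance argument for $k$-compatibility) is internally consistent, but the paper's quantum completeness step is even simpler once the correct construction is in place---one just sets $W_{(x,\bi),a}=Q_{x,a}$---and in any case the completeness analysis is moot until the construction is fixed so that item~2 actually holds.
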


\begin{proof}
    Let $\Phi=(X_1\sqcup X_2,E,A,\pi,\phi)$ be the given (weighted) $d$-to-$1$ instance, where $X_1\sqcup X_2$ is the bipartition of the vertex set.
    Let $\alpha=m\cdot|X_2|$.
    We construct a bipartite multigraph $\tilde G$ on vertex set $X_1\sqcup X_2$ and edge set $\tilde E$ defined as follows. 
    For each $x\in X_1$, choose an edge $\bar\bx\in E$ such that $\bar x_1=x$.
    For each edge $\bx\in E$ such that $x_1=x$ and $\bx\neq\bar\bx$, $\tilde E$ contains $\floor{\alpha\cdot\pi(\bx)}$ copies of $\bx$. Moreover, $\tilde E$ contains 
    \begin{align*}
        \alpha-\sum_{\substack{\bx\in E\setminus\{\bar\bx\}\\x_1=x}}\floor{\alpha\cdot\pi(\bx)}
    \end{align*}
    copies of the edge $\bar\bx$.
Observe that all vertices of $X_1$ have degree $\alpha$ in $\tilde G$. Thus, for $x\in X_1$, we let $\by^{(x)}=(y^{(x)}_1,\dots,y^{(x)}_\alpha)$ be the list of neighbours of $x$ in $\tilde G$, counted with the correct multiplicity.

Following~\cite{Dinur09:sicomp}, we define the unweighted $d$-to-$1$ instance $\Phi'=(X_1'\sqcup X_2,E',A,\phi')$ as follows.
    The set $X_1'$ contains a variable $(x,\bi)$ for each $x\in X_1$ and each $\ell$-tuple $\bi=(i_1,\dots,i_\ell)\in [\alpha]^\ell$.
For $x\in X_1$, $\bi\in [\alpha]^\ell$, and $j\in[\ell]$, we add to $E'$ the edge $\bx'=((x,\bi),y^{(x)}_{i_j})$. The set of admitted labellings for $\bx'$ is $\phi'_{\bx'}=\phi_{(x,y^{(x)}_{i_j})}$---in particular, $\Phi'$ is a $d$-to-$1$ instance.

The fact that $\Phi'$ satisfies condition $\textit{2.}$ was proved in~\cite[Lemmas~A.5--A.7]{Dinur09:sicomp}.
To prove condition $\textit{1.}$,
suppose that $\Phi$ admits a perfect $k$-compatible quantum assignment $\mathfrak{Q}=\{Q_{x,a}:\, x\in X_1\sqcup X_2, a\in A\}$.
%
It is straightforward to check that the set $\mathfrak{W}=\{W_{x',a}:\, x'\in X_1'\sqcup X_2, a\in A\}$ defined by $W_{(x,\bi),a}=Q_{x,a}$ for each $(x,\bi)\in X_1'$, $a\in A$ and $W_{x,a}=Q_{x,a}$ for each $x\in X_2$, $a\in A$ yields a perfect $k$-compatible quantum assignment for $\Phi'$, as needed.
\end{proof}

%
Before proceeding with the last step of the transformation from $d$-to-$1$ to $d$-to-$d$ games, 
we need to prove the following technical fact, stating that the PVMs in an optimal quantum strategy for a bipartite label-cover instance can be assumed, without loss of generality, to project onto the trivial space $\{\bzero\}$ unless the corresponding variable and label are in matching parts of the bipartition of the instance.
\begin{prop}
\label{prop_only_relevant_projectors}
    Let $\Phi=(X_1\sqcup X_2,E,A,\pi,\phi)$ be a bipartite label-cover instance and let $A=A_1\sqcup A_2$ be a decomposition of $A$ such that $\phi_\bx\subseteq A_1\times A_2$ for each $\bx\in E$.
    Given any $k$-compatible quantum assignment $\mathfrak{Q}$ for $\Phi$, there exists a $k$-compatible quantum assignment $\mathfrak{W}=\{W_{x,a}:\, x\in X_1\sqcup X_2, a\in A\}$ for $\Phi$ such that
    \begin{itemize}
        \item $\qsat_{\mathfrak{W}}(\Phi)\geq \qsat_{\mathfrak{Q}}(\Phi)$;
        \item $W_{x,a}=O$ whenever $x\in X_i$ and $a\in A_j$ for $i\neq j\in [2]$.
    \end{itemize}
\end{prop}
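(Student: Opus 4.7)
The plan is to define $\mathfrak{W}$ by picking one fixed canonical label $\bar a_i \in A_i$ for each $i \in [2]$ and, at every variable, collapsing all the mass sitting on the ``wrong-type'' projectors onto the canonical one. Explicitly, for $x \in X_i$ and $j\neq i$, I would set $W_{x,a} := Q_{x,a}$ for $a \in A_i \setminus \{\bar a_i\}$, set $W_{x,a} := O$ for $a \in A_j$, and set $W_{x,\bar a_i} := Q_{x,\bar a_i} + \sum_{a \in A_j} Q_{x,a}$. Because the projectors in a PVM are mutually orthogonal (a standard consequence of $Q_{x,a}^2 = Q_{x,a}$ and $\sum_a Q_{x,a} = \id_\HH$), this $W_{x,\bar a_i}$ is again a projector, and clearly $\sum_{a \in A} W_{x,a} = \sum_{a \in A} Q_{x,a} = \id_\HH$, so each family $\{W_{x,a}\}_{a \in A}$ is itself a PVM over the same Hilbert space $\HH$. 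The second bullet of the proposition is built into the construction.

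Next I would verify $k$-compatibility. Every projector $W_{x,a}$ is a linear combination of projectors $Q_{x,a'}$ indexed by the \emph{same} variable $x$. Hence, for any two variables $x, x' \in X_1 \sqcup X_2$ with $\dist_\Phi(x, x') \leq k$, the commutator $[W_{x,a}, W_{x',a''}]$ expands by bilinearity into a sum of commutators $[Q_{x,a'}, Q_{x',a'''}]$, each of which vanishes by the $k$-compatibility of $\mathfrak{Q}$.

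The heart of the argument is to show $\qsat_{\mathfrak{W}}(\Phi) \geq \qsat_{\mathfrak{Q}}(\Phi)$. Since constraints are binary and $\phi_\bx \subseteq A_1 \times A_2$, the value in~\eqref{eqn_1527_232} only picks up trace terms $\tr(W_{x_1,a_1} W_{x_2,a_2})$ with $(x_1,x_2)\in X_1\times X_2$ and $(a_1,a_2)\in A_1\times A_2$. Expanding each such trace using the definition of $W$, the new contributions (relative to the analogous $Q$-trace) are terms of the form $\tr(Q_{x_1,a'} Q_{x_2,a_2})$ with $a' \in A_2$, and symmetrically $\tr(Q_{x_1,a_1} Q_{x_2,a''})$ with $a'' \in A_1$, each appearing with coefficient $+1$. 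I would close by invoking the elementary fact that $\tr(PP') = \tr(PP'P) \geq 0$ for any two projectors $P, P'$, because $PP'P = (P'P)^*(P'P)$ is positive semidefinite. It follows that $\tr(W_{x_1,a_1} W_{x_2,a_2}) \geq \tr(Q_{x_1,a_1} Q_{x_2,a_2})$ for every admissible pair, and the monotonicity of $\qsat$ then drops out from the nonnegativity of $\phi_\bx$ and $\pi(\bx)$.

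The only genuinely nontrivial step is the last one: without the positivity of $\tr(PP')$ the redistribution of mass might have decreased the value, so this classical linear-algebra identity is the main anchor of the argument. Everything else—that each $W_{x,a}$ is a projector, that the $\{W_{x,a}\}_a$ form a PVM, that the ``wrong-type'' projectors vanish, and that $k$-compatibility is preserved—follows formally from the PVM property of $\mathfrak{Q}$ together with the hypothesis $\phi_\bx\subseteq A_1\times A_2$ that makes all ``wrong'' cross-terms irrelevant to the value.
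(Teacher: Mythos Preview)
Your proposal is correct and follows essentially the same approach as the paper: redistribute the ``wrong-type'' projector mass onto a fixed canonical label and use the positivity of $\tr(PP')$ for projectors to ensure the value does not decrease. The only difference is that the paper performs this redistribution one offending pair $(\tilde x,\tilde a)$ at a time and iterates until none remain, whereas you do it all at once; your one-shot version is slightly cleaner (and your description of the extra cross-terms omits the double-cross terms $\tr(Q_{x_1,a'}Q_{x_2,a''})$ with $a'\in A_2$, $a''\in A_1$ arising when both $a_1=\bar a_1$ and $a_2=\bar a_2$, but these too have nonnegative trace, so the conclusion stands).
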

\begin{proof}
Take a $k$-compatible quantum assignment $\mathfrak{Q}=\{Q_{x,a}:\, x\in X_1\sqcup X_2, a\in A\}$, and 
consider the set 
    \begin{align*}
        S_{\mathfrak{Q}}
        =
        \{(x,a)\suchThat x\in X_i,\,a\in A_j,\, i\neq j,\, \mbox{and } Q_{x,a}\neq O\}.
    \end{align*}
If $S_{\mathfrak{Q}}=\emptyset$, there is nothing to prove. Otherwise, choose $(\tilde x,\tilde a)\in S_{\mathfrak{Q}}$ and fix a label $\hat a\in A_i$. Define the following linear applications: $W_{\tilde x,\tilde a}=O$, $W_{\tilde x,\hat a}=Q_{\tilde x,\tilde a}+Q_{\tilde x,\hat a}$, and $W_{x,a}=Q_{x,a}$ for all other choices of $x\in X_1\sqcup X_2$ and $a\in A$. Consider now the set $\mathfrak{W}=\{W_{x,a}:\, x\in X_1\sqcup X_2, a\in A\}$.

It is straightforward to check that $\mathfrak{W}$ is a $k$-compatible quantum assignment for $\Phi$.
Furthermore, for each $\bx\in E$ and each $\ba\in A^2$, consider the linear application $M_{\bx,\ba}=W_{x_1,a_1}W_{x_2,a_2}-Q_{x_1,a_1}Q_{x_2,a_2}$.
We have
\begin{align}
\label{eqn_2612_1746}
    \qsat_{\mathfrak{W}}(\Phi)-\qsat_{\mathfrak{Q}}(\Phi)
    =
    \frac{1}{\dim(\HH)}\mathbb{E}_{\bx\sim\pi}
    \sum_{\ba\in \phi_\bx}\tr(M_{\bx,\ba}).
\end{align}
We claim that $\tr(M_{\bx,\ba})\geq 0$ for each $\bx\in E$ and each $\ba\in\phi_\bx$. Suppose for concreteness that $i=1$ and $j=2$, so $\tilde x\in X_1$, $\tilde a\in A_2$, and $\hat a\in A_1$ (the other case is analogous).
If $x_1\neq \tilde x$ or $a_1\not\in\{\tilde a,\hat a\}$, then $M_{\bx,\ba}=O$ and the claim is true, so we can assume that $x_1=\tilde x$ and $a_1\in\{\tilde a,\hat a\}$. 
Since $\ba\in\phi_\bx\subseteq A_1\times A_2$, it must hold that $a_1\in A_1$, so $a_1\neq\tilde a$. The only remaining case is when $x_1=\tilde x$ and $a_1=\hat a$. We obtain
\begin{align*}
    M_{\bx,\ba}
    =
    W_{\tilde x,\hat a}W_{x_2,a_2}-Q_{\tilde x,\hat a}Q_{x_2,a_2}
    =
    (Q_{\tilde x,\tilde a}+Q_{\tilde x,\hat a})Q_{x_2,a_2}-Q_{\tilde x,\hat a}Q_{x_2,a_2}
    =
    Q_{\tilde x,\tilde a}Q_{x_2,a_2}.
\end{align*}
Since the trace of the product of two positive semidefinite matrices is nonnegative, it follows that $\tr(M_{\bx,\ba})\geq 0$, as required. As a consequence,~\eqref{eqn_2612_1746} implies that $\qsat_{\mathfrak{W}}(\Phi)\geq\qsat_{\mathfrak{Q}}(\Phi)$.

Observe now that $|S_{\mathfrak{W}}|=|S_{\mathfrak{Q}}|-1$. Hence, after repeating the process for a finite number of times, we end up with a $k$-compatible quantum assignment satisfying the conditions of the proposition.
\end{proof}

\begin{lem}
\label{lem_dinur_tre}
There is an (efficient) procedure that, given as input an unweighted left-regular $d$-to-$1$ instance $\Phi$, outputs an
unweighted $d$-to-$d$ instance $\Phi'$ satisfying the following:
    \begin{enumerate}
        \item If $\Phi$ admits a perfect $2k$-compatible quantum assignment then $\Phi'$ admits a perfect $k$-compatible quantum assignment;
        \item for any $\epsilon>0$ and any $t\in\N$, if $\isat_t(\Phi')\geq\epsilon$, then $\sat(\Phi)\geq\frac{\epsilon}{t^2}$.
    \end{enumerate}
\end{lem}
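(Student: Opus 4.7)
I will use the reduction from \cite{Dinur09:sicomp}: starting from a left-regular unweighted $d$-to-$1$ instance $\Phi=(X_1\sqcup X_2,E,A,\phi)$ with $A=A_1\sqcup A_2$ and projections $\pi_\bx:A_1\to A_2$ (for each $\bx\in E$), I form $\Phi'=(X_1,E',A_1,\phi')$ whose variable set is $X_1$ and whose alphabet is $A_1$. For every pair of edges $(x,w),(x',w)\in E$ sharing the same $X_2$-endpoint $w$, I add a constraint $(x,x')\in E'$ with
\[
\phi'_{(x,x')}=\{(a,a')\in A_1\times A_1\suchThat \pi_{(x,w)}(a)=\pi_{(x',w)}(a')\}.
\]
Because $\Phi$ is $d$-to-$1$, each preimage $\pi_{(x,w)}^{-1}(b)$ (resp. $\pi_{(x',w)}^{-1}(b)$) has exactly $d$ elements, so after reordering the rows and columns of the Boolean matrix of $\phi'_{(x,x')}$ by grouping labels according to their $\pi$-image, it becomes $I_{|A_2|}\otimes J_d$, showing that $\Phi'$ is a $d$-to-$d$ instance. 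The classical soundness claim (part \textit{2.}) is exactly~\cite[Lemma~A.9]{Dinur09:sicomp}, which I will cite.

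For the quantum completeness claim (part \textit{1.}), I will start by invoking Proposition~\ref{prop_only_relevant_projectors} to assume, without loss of generality, that the given perfect $2k$-compatible quantum assignment $\mathfrak{Q}=\{Q_{v,a}\}$ for $\Phi$ satisfies $Q_{v,a}=O$ whenever $v\in X_i$ and $a\in A_j$ with $i\neq j$. I then define $W_{x,a}=Q_{x,a}$ for every $x\in X_1$ and every $a\in A_1$. The fact that $\{W_{x,a}\}_{a\in A_1}$ is a PVM follows from the corresponding property of $\{Q_{x,a}\}_{a\in A}$ together with the vanishing of $Q_{x,a}$ for $a\in A_2$. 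To verify perfection, I take $(x,x')\in E'$ with witness $w\in X_2$ and $(a,a')\in A_1^2\setminus \phi'_{(x,x')}$, so $\pi_{(x,w)}(a)=b\neq b'=\pi_{(x',w)}(a')$. Since $x,w$ and $x',w$ are adjacent in $\Phi$, the $2k$-compatibility of $\mathfrak{Q}$ (with $k\geq 1$, which is the relevant range) ensures that the projectors $Q_{w,c}$ commute with both $Q_{x,a}$ and $Q_{x',a'}$. Inserting the resolution of the identity $\sum_{c\in A_2}Q_{w,c}=\id_\HH$ and using that $Q_{x,a}Q_{w,c}=O$ for $c\neq b$ and $Q_{w,c}Q_{x',a'}=O$ for $c\neq b'$, I obtain $W_{x,a}W_{x',a'}=Q_{x,a}Q_{x',a'}=\sum_cQ_{x,a}Q_{w,c}Q_{x',a'}=O$, which proves perfection.

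Finally, to establish $k$-compatibility of $\mathfrak{W}$, I will argue that the Gaifman graph of $\Phi'$ is exactly the ``distance-two projection'' of the bipartite Gaifman graph of $\Phi$ onto $X_1$: two vertices $x,x'\in X_1$ are adjacent in $\Phi'$ iff they share a neighbour in $X_2$, i.e., iff $\dist_\Phi(x,x')=2$. Consequently, a path of length at most $k$ in the Gaifman graph of $\Phi'$ lifts to a path of length at most $2k$ in the Gaifman graph of $\Phi$, so $\dist_{\Phi'}(x,x')\leq k$ implies $\dist_\Phi(x,x')\leq 2k$, and the $2k$-compatibility of $\mathfrak{Q}$ transfers directly to $k$-compatibility of $\mathfrak{W}$.

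The main technical point is the perfection argument, where the crucial ingredient is the simultaneous use of two facts: the reduction Proposition~\ref{prop_only_relevant_projectors} guaranteeing that the PVM at $w\in X_2$ is supported on $A_2$ (so the resolution of identity uses only $A_2$-labels), and the $2k$-compatibility ensuring that the common $X_2$-neighbour $w$ can be ``inserted between'' the projectors at $x$ and $x'$. Both are needed to reduce the two-step verification at the $X_1$-level in $\Phi'$ to the one-step verifications at each of the two adjacent pairs $(x,w),(x',w)$ in $\Phi$. The remaining details (valid PVM structure, alphabet bookkeeping, and citation of the classical soundness lemma) are straightforward.
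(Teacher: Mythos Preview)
Your proposal is correct and follows essentially the same route as the paper: the same Dinur--Mossel--Regev construction, the same citation of \cite[Lemma~A.9]{Dinur09:sicomp} for soundness, and the same quantum-completeness argument via Proposition~\ref{prop_only_relevant_projectors}, restriction of the PVMs to $X_1\times A_1$, and the distance-halving observation for compatibility. The only cosmetic difference is in the perfection step, where you insert the resolution of the identity $\sum_{c\in A_2}Q_{w,c}=\id_\HH$ between $Q_{x,a}$ and $Q_{x',a'}$, whereas the paper first shows $Q_{x,a}Q_{x',a'}Q_{y,b}=O$ for each $b$ and then sums; both variants rely on the same two ingredients (adjacency-level commutativity and the $A_2$-supported PVM at the common neighbour).
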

\begin{proof}
Let $\Phi=(X_1\sqcup X_2,E,A,\phi)$ be the given $d$-to-$1$ instance. 
Recall from the definition of $d$-to-$1$ games in Section~\ref{subsec_CSP_predicates} that the alphabet $A$ can be decomposed as $A=A_1\sqcup A_2$ in a way that $\phi_\bx\subseteq A_1\times A_2$ for each $\bx\in E$.
Following~\cite{Dinur09:sicomp}, we consider the unweighted label-cover instance $\Phi'=(X_1,E',A_1,\phi')$ defined as follows. For each pair $(x,y)$, $(x',y)$ of edges in $E$ sharing a variable in $X_2$, we add the edge $(x,x')$ in $E'$. The corresponding set of admitted labellings is
    \begin{align*}
        \phi'_{(x,x')}=\{(a,a')\in A_1^2\suchThat\exists b\in A_2:\,(a,b)\in\phi_{(x,y)},\,(a',b)\in\phi_{(x',y)}\}.
    \end{align*}
It is easy to check that all constraints of $\Phi'$ are $d$-to-$d$, and it was shown in~\cite[Lemma~A.9]{Dinur09:sicomp} that condition $2.$ holds.

We now show that condition $1.$ holds, too. 
Take a perfect $2k$-compatible quantum assignment $\mathfrak{Q}=\{Q_{x,a}:\, x\in X_1\sqcup X_2, a\in A\}$ for $\Phi$ over some Hilbert space $\HH$.
Using Proposition~\ref{prop_only_relevant_projectors}, we can assume without loss of generality that $Q_{x,a}=O$ whenever $x\in X_i$ and $a\in A_j$ for $i\neq j\in [2]$.
We define $W_{x,a}=Q_{x,a}$ for each $x\in X_1$ and each $a\in A_1$. Note that 
    \begin{align*}
        \sum_{a\in A_1}W_{x,a}
        =
        \sum_{a\in A_1}Q_{x,a}
        =
        \sum_{a\in A}Q_{x,a}
        =
        \id_\HH,
    \end{align*}
    where the second equality comes from the second condition in Proposition~\ref{prop_only_relevant_projectors}.
    Hence, the set $\mathfrak{W}=\{W_{x,a}\suchThat x\in X_1,a\in A_1\}$ is a quantum assignment for $\Phi'$. 
    
    Consider now an edge $(x,x')\in E'$, and let $y$ be the corresponding vertex of $X_2$ such that $(x,y)$ and $(x',y)$ are both in $E$. Take
    a pair $(a,a')\in A_1^2\setminus \phi'_{(x,x')}$, which means that, for each $b\in A_2$, either $(a,b)\not\in\phi_{(x,y)}$ or $(a',b)\not\in\phi_{(x',y)}$. 
    Since the quantum assignment $\mathfrak{Q}$ is perfect, for each $b\in A_2$ it holds that 
    \begin{align*}
        Q_{x,a}\cdot Q_{y,b}\neq O
        \quad\Rightarrow \quad
        (a,b)\in\phi_{(x,y)}
        \quad\Rightarrow \quad
        (a',b)\not\in\phi_{(x',y)}
        \quad\Rightarrow \quad
        Q_{x',a'}\cdot Q_{y,b}= O,
    \end{align*}
    which means that 
    \begin{align*}
    Q_{x,a}\cdot Q_{y,b}\cdot Q_{x',a'}\cdot Q_{y,b}=O.
    \end{align*}
    Since that the vertices $x'$ and $y$ are adjacent in $\Phi$, the corresponding projectors of $\mathfrak{Q}$ commute, so we obtain
    \begin{align*}
        Q_{x,a}\cdot Q_{x',a'}\cdot Q_{y,b}=O
    \end{align*}
    for each $b\in A_2$. Taking the sum over $A_2$ (and using again the second condition of Proposition~\ref{prop_only_relevant_projectors}), we find
    \begin{align*}
        O&=\sum_{b\in A_2}Q_{x,a}\cdot Q_{x',a'}\cdot Q_{y,b}
        =
        Q_{x,a}\cdot Q_{x',a'}\cdot \sum_{b\in A_2}Q_{y,b}
        =
        Q_{x,a}\cdot Q_{x',a'}\cdot \id_\HH\\
        &=
        Q_{x,a}\cdot Q_{x',a'}
        =
        W_{x,a}\cdot W_{x',a'},
    \end{align*}
    whence we deduce that the quantum assignment $\mathfrak{W}$ is perfect.

Finally, it is clear from the definition of $\Phi'$ that $\dist_{\Phi'}(x,x')=\frac{1}{2}\cdot\dist_{\Phi}(x,x')$ for each pair of vertices $x,x'\in X_1$. Hence, the $k$-compatibility of $\mathfrak{W}$ follows directly from the $2k$-compatibility of $\mathfrak{Q}$, thus concluding the proof.
\end{proof}

\begin{proof}[Proof of Proposition~\ref{prop_d_to_d_after_dinur}]

Fix $d,\epsilon, k,t$ as in the statement of the proposition, and consider the parameters $\delta=\frac{\epsilon}{3t^2}$, $\epsilon'=\frac{\epsilon}{3d\ceil{1/\delta}^2t^2}$, $\epsilon''=\frac{\epsilon'}{2}$, and $\epsilon'''=\frac{\delta\epsilon''}{2}$.

Using Conjecture~\ref{conj_d_to_1_khot_quantum}, let $\Phi$ be a $d$-to-$1$ instance such that 
$\Phi$ admits a perfect $2k$-compatible quantum assignment
but $\sat(\Phi)<\epsilon'''$. 
We now consecutively apply to $\Phi$ the three transformations of Lemmas~\ref{lem_dinur_uno},~\ref{lem_dinur_due}, and~\ref{lem_dinur_tre}.
Note that the transformation in Lemma~\ref{lem_dinur_uno} does not change the constraints in the input label-cover instance. Hence, it maps $d$-to-$1$ instances to $d$-to-$1$ instances. Let $\Phi'$ be the result of applying Lemma~\ref{lem_dinur_uno} to $\Phi$ with parameter $h=2$; let $\Phi''$ be the result of applying Lemma~\ref{lem_dinur_due} to $\Phi'$ with parameters $\ell=\ceil{1/\delta}$ and $m=\ceil{2/\epsilon'}$; and let $\Phi'''$ be the result of applying Lemma~\ref{lem_dinur_tre} to $\Phi''$. We claim that the resulting $d$-to-$d$ instance $\Phi'''$ meets the requirements of the proposition. First, the quantum completeness of the transformations implies that $\Phi$, $\Phi'$, and $\Phi''$ admit perfect $2k$-compatible quantum assignments and, thus, $\Phi'''$ admits a perfect $k$-compatible quantum assignment.


Observe now that $(1-\frac{1}{h})\delta\epsilon''=\epsilon'''>\sat(\Phi)$.
Hence, part $\textit{2.}$ of Lemma~\ref{lem_dinur_uno} (with parameters $\delta,\epsilon''$) implies that there is no assignment $f'$ for $\Phi'$ for which a $\delta$-fraction of the variables $x'\in X_1'$ satisfy $\sat_{f',x'}(\Phi')\geq\frac{\epsilon''}{|X_1'|}$, where $X_1'$ is the left-part of the variable set of $\Phi'$. Next, note that the parameter choice guarantees that $0<\epsilon'<\frac{1}{d\ell^2}$ and $\epsilon'-\frac{1}{m}\geq \epsilon''$. Hence, part $\textit{2.}$ of Lemma~\ref{lem_dinur_due} (with parameters $\delta,\epsilon'$) guarantees that $\sat(\Phi'')\leq \delta+\frac{1}{\ell}+(1-\delta)d\ell^2\epsilon'$. Finally, observe that
\begin{align*}
    \delta+\frac{1}{\ell}+(1-\delta)d\ell^2\epsilon'
    <
    \delta+\frac{1}{\ell}+d\ell^2\epsilon'
    \leq
    \frac{\epsilon}{3t^2}+\frac{\epsilon}{3t^2}+\frac{\epsilon}{3t^2}=\frac{\epsilon}{t^2}.
\end{align*}
Therefore, part $\textit{2.}$ of Lemma~\ref{lem_dinur_tre} implies that $\isat_t(\Phi''')<\epsilon$, thus concluding the proof of the proposition.
\end{proof}

\begin{rem}
\label{rem_appendix_reductions_preserve_H}
We point out that the Hilbert space associated with quantum assignments is preserved throughout the transformations occurring in the proof of Proposition~\ref{prop_d_to_d_after_dinur}. Indeed, the proofs of Lemmas~\ref{lem_dinur_uno},~\ref{lem_dinur_due}, and~\ref{lem_dinur_tre} and Proposition~\ref{prop_only_relevant_projectors} show how to transfer perfect locally compatible quantum assignments for a given $d$-to-$1$ instance over a Hilbert space $\HH$ to perfect locally compatible quantum assignments for the resulting $d$-to-$d$ instance over the \textit{same} space $\HH$.
\end{rem}

\section*{Acknowledgements}
I thank Samson Abramsky, Andrea Coladangelo, and Amin Karamlou for insightful discussions and feedback on various parts of this work, which significantly contributed to improving the final version of the manuscript. The research leading to these results was supported by UKRI EP/X024431/1.

{\small
\bibliographystyle{alphaurl}
\bibliography{bibliography}
}

\end{document}